\newtheorem{observation}{Observation}[chapter]
\crefname{section}{Section}{Sections}
\crefname{definition}{Definition}{Definitions}
\crefname{figure}{Figure}{Figures}
\crefname{theorem}{Theorem}{Theorems}
\crefname{lemma}{Lemma}{Lemmata}
\crefname{corollary}{Corollary}{Corollary}
\crefname{observation}{Observation}{Observations}
\crefname{claim}{Claim}{Claims}
\algnewcommand{\LineComment}[1]{\State \(\triangleright\) #1}
\tikzset{
	treenode/.style = {align=center, inner sep=0pt, text centered,
		font=\sffamily\Large\bfseries, text width=5cm},
	redded/.style = {treenode, circle, white, draw=red, fill=red, text width=1.5em, very thick,sibling distance=30mm},
	greened/.style = {treenode, circle, white, draw=green, fill=green, text width=1.5em, very thick,sibling distance=15mm},
	deleted/.style = {treenode, circle, white, draw=gray, fill=gray, text width=1.5em, very thick,sibling distance=15mm},
	unvisited/.style = {treenode, circle, black, draw=black, fill=white, text width=1.5em, sibling distance=7.5mm}
}
\newcommand{\N}{\ensuremath{\mathbb{N}}}
\newcommand{\naturalnumberpositive}{\ensuremath{{ \mathbb{N}^+ }}}
\newcommand{\maxk}{\ensuremath{\max^k}}
\newcommand{\alg}{\ensuremath{\textup{\textsc{Alg}}}\xspace}
\newcommand{\modi}{\,\,\text{mod}_1\,}
\newcommand{\textscalt}[1]{\textup{\textsc{#1}}}
\begin{document}
\title{Zero-Memory Graph Exploration with Unknown Inports}
\titlerunning{Zero-Memory Graph Exploration}
\author{Hans-Joachim Böckenhauer\inst{1}\orcidID{0000-0001-9164-3674} \and
	Fabian Frei\inst{1}\thanks{Part of the work by Fabian 
	Frei done during a stay at Hosei 
University, supported by grant GR20109 by the Swiss National 
Science Foundation (SNSF) and the Japan Society for the Promotion 
of Science (JSPS).}\orcidID{0000-0002-1368-3205} 
	\and
	Walter Unger\inst{2} \and
	David Wehner\inst{1}\orcidID{0000-0003-0201-4898}}
\authorrunning{Böckenhauer et al.}
\institute{ETH Zurich, Switzerland\\
	\email{\{hjb, fabian.frei, david.wehner\}@inf.ethz.ch} \and
	RWTH Aachen, Germany\\
	\email{quax@algo.rwth-aachen.de}}

\maketitle

	\begin{abstract}
		We study a very restrictive graph exploration problem.
		In our model, an agent without persistent memory is placed on a vertex of a graph and only sees the adjacent vertices. The goal is to visit every vertex of the graph, return to the start vertex, and terminate. The agent does not know through which edge it entered a vertex.
		The agent may color the current vertex and can see the colors of the neighboring vertices in an arbitrary order. The agent may not recolor a vertex.
		We investigate the number of colors necessary and sufficient to explore all graphs.
		We prove that $n-1$ colors are necessary and sufficient for exploration in general, $3$ colors are necessary and sufficient if only trees are to be explored, and $\min\{2k-3,n-1\}$ colors are necessary and $\min\{2k-1,n-1\}$ colors are sufficient on graphs of size $n$ and circumference $k$, where the circumference is the length of a longest cycle.
		This only holds if an algorithm has to explore all graphs and not merely certain graph classes.
		We give an example for a graph class where each graph can be explored with $4$ colors, although the graphs have maximal circumference.
		Moreover, we prove that recoloring vertices is very powerful by designing an algorithm with recoloring that uses only $7$ colors and explores all graphs.
	\end{abstract}
	
	\section{Introduction}\label{zm:sec:introduction}
	
	Say you wake up one morning in an unknown hotel with the desire 
	to stroll around and visit every place in the city. Considering 
	your terrible headache, you don't bother to remember anything 
	about which places you have visited already, but still, at the 
	end of the day, you want to return to your hotel. You know this 
	is not possible without further aid, so you decide to take some 
	crayons with you and color every place you visit. You are 
	endowed with keen eyes and you're able to see the colors of the 
	places around you. All you now need to know is how many colors 
	you have to take along and how you color the places. This paper 
	deals exactly with that situation.
	
	The exploration of an unknown environment by a mobile entity is one of the basic tasks in many areas.
	Its applications range from robot navigation over image recognition to sending messages over a network.
	Due to the manifold purposes, there is a great deal of different settings in which exploration has been analyzed.
	In this paper, we consider the fundamental problem of a single 
	agent, e.g., a robot or a software agent, that has to first 
	explore all vertices of an initially unknown undirected graph 
	$G$, then return to the start vertex and terminate.
	By exploring we mean that the agent is located at a vertex and can, in each step, either go to an adjacent vertex or terminate.
	
	If the vertices of $G$ have unique labels and without further 
	restrictions on the agent, this becomes a trivial task.
	However, in many applications, the environment is unknown and the agent is a simple and inexpensive device.
	Hence, we consider \emph{anonymous} graphs, that is, there are no unique labels on the vertices or edges.
	Moreover, the edges have no port labels;
	the labeling is given implicitly by the order in which the agent sees the edges and can be different at each visit of a vertex.
	The agent itself is \emph{oblivious}, that is, it has no persistent memory.
	Such agents are sometimes also called \emph{zero-memory algorithms} or \emph{$1$-state robots} \cite{Cohen2008, Das2019}.
	
	Clearly, with these restrictions, there does not exist a feasible exploration algorithm. In fact, not even a graph consisting of one single edge could be explored since the algorithm would not know when to terminate.
	Therefore, in most models with anonymous graphs and oblivious agents, the agent remembers through which port it entered a vertex.
	We, in contrast, assume that the agent does not know through which edge it entered a vertex.
	This is sometimes called \emph{unknown inports} or \emph{no inports} \cite{Menc2017}.
	Instead, we allow the agent to color the current vertex, a feature which is also referred to as placing distinguishable pebbles \cite{Disser2019} or labeling vertices \cite{Cohen2008}.
	In this paper, we prefer the notion of coloring\footnote{Note 
	that this coloring is just a normal labeling and has nothing to 
	do with graph coloring such as in 3-\textscalt{Coloring}; it is 
	perfectly fine to color adjacent vertices with the same color.}.
	This notion emphasizes that a colored vertex may never be recolored unless we explicitly allow it and then use the term ``recoloring.'' 
	However, having the ability to color vertices alone is still 
	utterly useless for an oblivious agent that has to return to 
	the start vertex to terminate.
	Therefore, we relax our restrictions by allowing the agent to see the labels (i.e., colors) of the neighboring vertices.
	
	We consider storage efficiency and analyze the minimum amount of colors necessary and sufficient to explore any graph with $n$ vertices.
	We prove that 3 colors are both necessary and sufficient to explore trees, whereas $n-1$ colors are necessary and sufficient to explore every graph with $n$ vertices. This striking difference is not limited to planarity, graphs of large treewidth or graphs with a large feedback vertex set; in fact, even planar graphs with treewidth 2 and feedback vertex set number 1 need $\Omega(n)$ colors. We discover that the driving parameter of a graph is its circumference, the length of a longest cycle. We show that $2k-1$ colors are sufficient and $\min\{2k-3, n-1\}$ colors are necessary to explore all graphs of circumference at most $k$.
	We further show that it is possible to explore all so-called squares of paths with 4 colors, which shows that, if an algorithm does not have to explore all possible graphs but only graphs of a particular graph class, much fewer colors suffice.
	Finally, we make an ostensibly inconspicuous change to our model and analyze the case where we allow recoloring vertices. We show that, in this model, 7 colors are enough to explore all graphs.
	
	This paper is organized as follows. In the rest of this introduction, we consider related work and lay out the basic definitions. In \cref{sec:trees}, we present the analysis when the graphs to be explored are trees. In \cref{sec:generalgraphs}, we analyze the general case before we then turn to graphs of a certain circumference in \cref{sec:circumference}. Afterwards, we consider squares of paths in \cref{sec:P2} and finally recoloring in \cref{sec:recoloring} before we conclude in \cref{sec:conclusion_zeromemory}.
	
	\subsection{Related work}
	There is a vast body of literature on exploration and 
	navigation problems.
	A great deal of aspects have been analyzed;
	in general, they can be categorized into one of the following four dimensions: environment, agent, goal, complexity measure.
	We briefly discuss these dimensions and highlight the setting considered in this paper.
	
	The environment dimension is concerned with whether there is a specific geometric setting or a more \emph{abstract setting} such as a graph or a \emph{graph class}. Sometimes there are special environmental features such as faulty links or, as in our case, \emph{local memory}, sometimes also called storage.
	In the agent dimension, we find aspects such as whether there 
	is a \emph{single agent} or whether there are multiple agents; 
	whether the agents are \emph{deterministic} or probabilistic; 
	whether the agents have some restrictions such as \emph{limited 
	memory}, range or \emph{view}; and whether the agents possess 
	special abilities such as \emph{marking of vertices} or 
	teleportation.
	Some goals include mapping the graph, finding a treasure, meeting, exploring all edges, and \emph{exploring all vertices}.
	For the latter, most researchers consider one of the following three modes of termination: perpetual exploration, where the agent has to visit every vertex infinitely often;
	exploration with stop, where the agent has to stop at some point after it has explored everything; 
	and \emph{exploration with return}, where the agent has to 
	return to the starting point after the exploration and then 
	terminate.
	The main complexity measures are time complexity, space/memory complexity, \emph{storage complexity} and competitive ratio\footnote{Here, we use ``time complexity'' as the complexity of the algorithm that calculates the decisions of the agent and ``competitive ratio'' as the number of time steps of the agent compared to an optimal number of time steps.}.
	
	As highlighted above, in this paper, we focus on vertex 
	exploration by a single agent with local memory or storage. For 
	an overview on this segment of exploration problems, we refer 
	to the excellent overview of Das~\cite{Das2019} and the first 
	two chapters of~\cite{Gasieniec2008} by G\k{a}sieniec and 
	Radzik. 
	We are not aware of any research on the exploration of anonymous graphs by oblivious agents where the labels of the neighboring vertices are visible. However, the models of Cohen et al.~\cite{Cohen2008} and Disser et al.~\cite{Disser2019} are similar to ours; they analyze graph exploration with anonymous graphs, local port labels, a single oblivious agent, and the ability to label the current vertex.
	
	In the model of Cohen et al.~\cite{Cohen2008}, there is a robot $\mathcal{R}$ with a finite number of states that has to explore all vertices of an unknown undirected graph and then terminate. The robot sees at each vertex the incident edges as \emph{port numbers}. The order of these edges is fixed per vertex, but unknown to $\mathcal{R}$. The robot knows through which port it entered a vertex. In a preprocessing state, the vertices are labeled with pairwise different labels. Cohen et al.~analyzed how many labels are necessary to explore all graphs. They proved that a robot with constant memory can explore all graphs with just three labels. Moreover, they showed that for any $d>4$, an oblivious robot that uses at most $\lfloor \log d \rfloor-2$ pairwise different labels cannot explore all graphs of maximum degree $d$.
	
	Their model does not directly compare to ours: Our labels are not assigned in a preprocessing stage, but during the exploration by the algorithm/robot. Moreover, the incoming port number is not known, the order of the port numbering is not fixed, the labels may not be changed, the algorithm has to return to the start vertex, and, most importantly, the algorithm sees the labels of the adjacent vertices. However, even though the models are quite different, we see that known inports is a much stronger feature than seeing the labels of neighboring vertices: Our general lower bound does not depend on the maximum degree, but on the number of vertices in the graph. As we will see later, we provide a lower bound that is linear in the number of vertices even when the maximum degree is restricted to $3$.
	
	The model of Disser~et~al.~\cite{Disser2019} is even closer to our model. Here, the labels---they call them distinguishable pebbles---are assigned during the exploration by the agent/algorithm as well. Moreover, the goal is exploration with return. They showed that, for any agent with sub-logarithmic memory, $\Theta(\log\log n)$ labels are necessary and sufficient to explore any graph with $n$ vertices. Moreover, they characterized the trade-off between memory and the number of labels as follows. When the agent has $\Omega(\log(n))$ bits of memory, all graphs on $n$ vertices can be explored without any labels. As soon as the agent only has $\mathcal{O}(\log(n)^{1-\varepsilon})$ bits of memory, $\Omega(\log\log(n))$ labels are needed to explore all graphs on $n$ vertices. However, with that many labels, even a constant amount of memory suffices for the exploration.
	
	As before, this model does not directly compare to ours since neither is contained in the other. Their results seem to support the idea that knowing inports is stronger than seeing the labels of neighboring vertices; however, in contrast to our model, the focus of their work was on constant or sub-logarithmic memory.

	\subsection{Basic Definitions}
	We use the usual notions from graph theory as found for example in the textbook by Diestel~\cite{Diestel2017}. 
	The graph exploration setting considered in this paper is defined as follows.
	We first describe the setting informally. An agent is placed on a vertex, called the start vertex, of an undirected connected graph and moves along edges, one edge per step. 
	In a step, the agent may use an arbitrary natural number to 
	color the vertex on which it is currently located if this 
	vertex was uncolored up to now, and then move to a neighbor. 
	As basis for its decisions, the agent may only use the color of the current vertex and the colors of the neighboring vertices. 
	The agent can neither use the identity of the vertices nor any numbering of the edges. 
	The agent has no persistent memory; in particular, the agent does not know through which edge it entered the current vertex (if any; the start vertex is not entered from anywhere).
	On the basis of the coloring of the current vertex and the neighbors, the agent chooses a neighbor it wants to move to, or it decides to stop. For no decision or choice can the agent distinguish between neighbors that have the same color.
	
	The task is to provide a strategy for the agent---which is called an algorithm \alg---that determines for each situation the action of the agent in such a way that no matter on which graph and on which start vertex the agent is placed, and no matter which neighbors are used to go to if there is a choice, the agent will visit all vertices, return to the start vertex, and stop there.
	
	As with classical online problems, the concept of an \emph{adversary} is useful to formulate bounds on the number of colors necessary and sufficient to explore all graphs. 
	This adversary makes the decisions that are left open in the informal description above, namely choosing the start vertex and choosing the vertex the agent visits next if there is a choice among neighbors of the same color to which the agent wants to go.
	For all graphs and for all possible choices of the adversary, the agent must visit all vertices and then stop on the start vertex.
	
	An algorithm in this model is a function that determines what the agent should do when located on a vertex with a certain color structure in the neighborhood. We denote the color structure by a pair $(c_0, E)$, where $c_0 \in \N$ stands for the color of the current vertex and $E\colon \N \to \N$, where $E$ is $0$ almost everywhere, stands for the colors of the neighbors. For a number $c \in \N$, $E(c)$ stands for the number of neighbors of color $c$. In cases where we consider only a restricted amount of colors, we consider functions $c_0$ and $E$ of smaller domain, for example colors $c_0 \in \{1,\ldots,n_c\}$, and $E\colon \{1,\ldots,n_c\} \to \N$. We denote the set of all pairs $(c_0, E)$ by $\mathcal{E}$ and call it the \emph{environment}.
	
	An \emph{algorithm} in this model is a function
	\[\text{move}\colon \mathcal{E} \to \N \times \N \cup 
	\{\textscalt{Stop}\},\]
	with the restriction that whenever $\text{move}(c_0,E) = 
	(c_1,d)$, we must have $E(d)>0$, that is, the algorithm is not 
	allowed to send the agent to a neighbor that does not exist. 
	When the number of colors allowed is at most $n_c$, the range 
	of $\text{move}$ is $\{1,\ldots,n_c\} \times \{1,\ldots,n_c\} 
	\times \{\textscalt{Stop}\}$.
	Moreover, we must have $c_1 = c_0$ unless $c_0 = 0$, that is, the agent may only color the current vertex if it was not colored before. We analyze the model where this last restriction is canceled in \Cref{sec:recoloring}.
	
	To carry out an algorithm on a given graph $G$, we use an adversary, and carry out steps. Initially, all vertices are uncolored, that is, have color $0$.
	The adversary selects a start vertex $v_0$ and places the agent there. 
	Then, each step works as follows: The agent is located in a vertex $v$. 
	The coloring in the neighborhood is translated into an element $(c_0, f)$ of $\mathcal{E}$ in the obvious way, by counting the number of neighbors for each occurring color. 
	The value $\text{move}(c_0,f)$ determined by the algorithm is then used as follows. 
	If it is \textscalt{Stop}, the run of the algorithm stops.
	If it is a pair $(c_1,d)$, vertex $v$ is colored with color $c_1$, and the adversary chooses a neighbor of $v$ of color $d$, to which the agent moves for the next step.
	
	An algorithm \emph{successfully explores all graphs} if for all 
	graphs $G$ and for all adversaries, after finitely many steps, 
	all vertices of $G$ have been visited (they do not need to have 
	a color $c>0$, but they must have been the ``current vertex'' 
	in some step), the agent is located on the start vertex $v_0$, 
	and the decision of the algorithm is \textscalt{Stop}.
	An algorithm is correct on a graph class $\mathcal{C}$ if it 
	successfully explores all graphs $G \in \mathcal{C}$.
	
	Throughout the paper, we denote by $n$ the number of vertices of $G$ and use $[n]$ to denote $\{1,\ldots,n\}$. For a vertex $v \in V$, we write $c(v) \in \N$ for its color; $N(v)$ for the open neighborhood of $v$, that is, the set of vertices adjacent to $v$; and $N[v]$ for the closed neighborhood of $v$, where $N[v]\coloneqq N(v)\cup\{v\}$. We use $\text{mod}_1$ to denote a modulo operator shifted by $1$, i.e.,
	\[n \,\,\text{mod}_1 \,\,m \coloneqq ((n-1) \mod m) + 1.\]
	Instead of having the numbers $0,\ldots,m-1$ as the outcome of the modulo operation, one thus obtains the numbers $1,\ldots,m$.
	For convenience, we sometimes speak of an algorithm behaving in 
	a particular way and mean by this formulation that the agent of 
	the algorithm behaves in a particular way.

	\section{Exploration of Trees}\label{sec:trees}
	We begin by analyzing the problem on trees.
	We show that only three colors are enough to explore all trees.
	In line with the research that analyzes graph exploration with pebbles, we do not count $0$ as a color.
	The idea of the algorithm is to alternately label the vertices with colors $1$, $2$, and $3$ and then follow a simple depth-first strategy.
	Since there are no cycles, there is always a unique path back to the start vertex and backtracking is possible.
	Moreover, the start vertex is recognized by the fact that it is the only vertex with color $1$ where all neighbors have color $2$.
	This strategy is formalized in the algorithm 
	\textscalt{TreeExploration} below.
	
	\begin{algorithm}[htbp]
		\caption{{}\textscalt{TreeExploration}} \label{alg:tree}
		\textbf{Input:} An undirected tree in an online fashion. In each step, the input is $c(v) \in \N$, which is the color of the current vertex $v$, and $c(v_1), \ldots, c(v_d)$, the colors of the neighbors of $v$, where $d$ is the degree of $v$.\\
		\textbf{Output:} In each step, the algorithm outputs $c(v)$, the color the agent assigns to $v$, and $i \in [d]$, the vertex the agent goes to next.\\
		\textbf{Description:} Assign alternately color $1$, $2$, and $3$ in order to achieve a sense of direction.
		\begin{algorithmic}[1] \label{alg:treeexploration}
			\If{$c(v)=0$}
			\State $c(v)\coloneqq \left(\left(\max\limits_{v' \in N(v)} c(v')\right)+1\right) \modi 3$ \label{line:treeplacepebbles}
			\EndIf
			\If{there is an uncolored neighbor} \label{line:tree_condition_unvisited_neighbor}
			\State go to some uncolored neighbor. \label{line:treeforward}
			\ElsIf{$c(v)=1$ and there is no neighbor $v'$ having $c(v')=3$}
			\State \textbf{terminate}. \label{line:treeterminate}
			\Else
			\State go to a neighbor $v'$ with 
			$c(v')=\left(c(v)-1\right) \,\,\text{mod}_1 \,\,3$. 
			\Comment{this neighbor is going to be unique} 
			\label{line:treegoto}
			\EndIf
		\end{algorithmic}
	\end{algorithm}
	
	Obviously, \textscalt{TreeExploration} uses at most $3$ colors.
	We prove first that the algorithm terminates on the start vertex and only on the start vertex.
	
	\begin{lemma}\label{lem:treeterminatesonlyuponreachingstartvertex}
		The agent terminates only upon reaching the start vertex.
	\end{lemma}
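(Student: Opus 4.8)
The plan is to determine which color each vertex receives at the moment it is first visited, and then to read off from the pseudocode that the termination test in line~\ref{line:treeterminate} can succeed only at the start vertex~$v_0$.

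I would first record the colouring pattern. Because the agent always walks along edges starting from $v_0$, the set of vertices visited so far (which, by line~\ref{line:treeplacepebbles}, is exactly the set of colored vertices) is at every moment the vertex set of a connected subgraph of the input tree, hence a subtree $T'$ with $v_0 \in T'$. When the agent enters a vertex $v \neq v_0$ for the first time, $v$ has exactly one colored neighbor: at least one, since $v$ is reached from an already visited vertex; and at most one, since two neighbors of $v$ in $T'$ together with the $T'$-path joining them would form a cycle. Calling this unique colored neighbor the \emph{parent} $p$ of $v$, the maximum of the colors in $N(v)$ at that moment equals $c(p) \in \{1,2,3\}$, so line~\ref{line:treeplacepebbles} sets $c(v) = (c(p)+1)\modi 3$. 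For $v_0$ itself, all neighbors are still uncolored at its first visit, so the maximum is $0$ and $c(v_0) = 1$.

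Next I would use that a color is never overwritten. From $c(v) = (c(p)+1)\modi 3$ it follows that a non-start vertex $v$ has $c(v) = 1$ exactly when $c(p) = 3$; in particular, every non-start vertex of color $1$ has, for the rest of the run, a neighbor (namely $p$) of color $3$. Consequently the guard of line~\ref{line:treeterminate}, namely ``current color is $1$ and no neighbor has color $3$'', is false at every vertex other than $v_0$. Hence the agent can execute \textbf{terminate} only while located at $v_0$, which proves the lemma.

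The only step that requires genuine care — and the only place where acyclicity of the input is actually used — is the claim that a newly visited non-start vertex has exactly one colored neighbor; this is what pins down its color via a single parent rather than an uncontrolled maximum over several colored neighbors. The remaining parts are a direct case distinction against the algorithm together with the elementary arithmetic of $\modi 3$, so I expect no real difficulty beyond that point.
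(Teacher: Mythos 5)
Your proof is correct and follows essentially the same route as the paper: a non-start vertex can receive color $1$ only if some neighbor already has color $3$, and since colors are never overwritten, the guard of line~\ref{line:treeterminate} can hold only at the start vertex. The only difference is that your unique-parent claim (and hence acyclicity) is not actually needed for this lemma: the paper simply observes that a non-start vertex has at least one colored neighbor when it is colored, so the maximum in line~\ref{line:treeplacepebbles} lies in $\{1,2,3\}$ and produces color $1$ only when a neighbor of color $3$ is present.
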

	\begin{proof}	
		The termination rule can only be applied on a vertex with color $1$ without any neighbor $v'$ with $c(v')=3$. Clearly, this is true for the start vertex. Apart from the start vertex, all vertices assigned color $1$ have a neighbor with color $3$ since there is no vertex with only uncolored neighbors except the start vertex.
	\end{proof}
	
	Then, we prove that \textscalt{TreeExploration} does indeed 
	visit every vertex and return to the start vertex.
	\begin{lemma}\label{lem:treeverticesareexplored}
		\textscalt{TreeExploration} explores all vertices and then 
		returns to the start vertex.
	\end{lemma}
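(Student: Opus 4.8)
\emph{Proof idea.}
The plan is to show that \textscalt{TreeExploration} carries out an ordinary depth-first search of the input tree $T$ from the start vertex $v_0$, so that every vertex becomes the current vertex at some point and the agent's (necessarily finite) run ends on $v_0$. The central step is to prove, by induction on the steps, the following invariant about the situation after each step: the colored vertices induce a subtree of $T$ that contains $v_0$; we have $c(v_0)=1$; and every colored vertex $v\ne v_0$ satisfies $c(v)=\bigl(c(p(v))+1\bigr)\modi 3$, where $p(v)$ denotes the neighbor of $v$ from which the agent first entered $v$. The base case is the first step, in which the agent colors $v_0$; as all of its neighbors are still uncolored, $v_0$ receives color $\bigl(0+1\bigr)\modi 3=1$. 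For the inductive step, a backtrack move changes no color, so suppose the agent performs a forward move, from the current vertex $u$ (which is colored, since the algorithm colors the current vertex before choosing its move) to an uncolored neighbor $w$; in the next step $w$ gets colored. Because the colored vertices form a subtree of the \emph{tree} $T$, vertex $w$ has exactly one colored neighbor — namely $u$, since a second one would close a cycle — so the maximum in the coloring rule equals $c(u)$ and $w$ is colored $\bigl(c(u)+1\bigr)\modi 3$, which reestablishes the invariant with $p(w)=u$. Note also that the first visit of any $w\ne v_0$ is via exactly such a forward move from $p(w)$, and since the colored subtree always contains $v_0$, this $p(w)$ is precisely the neighbor of $w$ on the unique $v_0$--$w$ path, i.e., the parent of $w$ in $T$ rooted at $v_0$.

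The second step is to read the agent's behavior off this invariant. Along every path from $v_0$ the colors increase by $1$ modulo $3$, and since $+1$ and $-1$ differ modulo $3$, for each colored $v\ne v_0$ the parent $p(v)$ is the \emph{unique} neighbor of $v$ of color $\bigl(c(v)-1\bigr)\modi 3$; moreover every neighbor of $v_0$ is a child of $v_0$ and hence has color $2$ once colored. Consequently: on a vertex with an uncolored neighbor the agent moves forward; on a non-root vertex all of whose neighbors are colored the termination test fails (either $c(v)\ne1$, or $c(v)=1$ and then $p(v)$ has color $3$), so the agent backtracks, and the backtrack target is well defined and equals $p(v)$; and on $v_0$ with all neighbors colored the test succeeds ($c(v_0)=1$ and no neighbor of color $3$) and the agent stops. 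This is exactly a depth-first traversal of $T$ rooted at $v_0$.

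It remains to conclude. Each forward move permanently adds one vertex to the colored subtree, so there are at most $n-1$ of them; between two consecutive forward moves the agent only backtracks, and each backtrack replaces the current vertex $v$ by $p(v)$ and thus strictly decreases its distance to $v_0$, so only finitely many backtracks occur in a row. Hence the run is finite, the agent eventually executes \textscalt{Stop}, and by \cref{lem:treeterminatesonlyuponreachingstartvertex} it does so on $v_0$. Finally, suppose some vertex were never visited; pick such a $u$ and, on the unique $v_0$--$u$ path, let $w$ be the first unvisited vertex and $w'$ its predecessor. Then $w'$ is visited, and the agent leaves $w'$ for the last time by either a backtrack move or \textscalt{Stop}; both of these require that $w'$ has no uncolored neighbor, contradicting that the never-visited, hence uncolored, vertex $w$ is adjacent to $w'$. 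Thus every vertex is visited, and at the end of the run the agent stands on $v_0$, which is exactly the claim.

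The step I expect to be the main obstacle is the invariant in the first paragraph — in particular the point that a vertex, at the moment it is first colored, has exactly one already-colored neighbor, so that its color, and hence its ``direction back to $v_0$,'' is unambiguous. Once this is in place, the remaining assertions, including the uniqueness of the backtrack target and the fact that the termination condition pinpoints exactly $v_0$, follow from short modular-arithmetic arguments.
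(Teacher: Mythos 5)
Your proof is correct and takes essentially the same route as the paper's: the colors realize depth modulo $3$, the unique neighbor of color $(c(v)-1)\modi 3$ is the parent, termination happens only on $v_0$ by \cref{lem:treeterminatesonlyuponreachingstartvertex}, and the exploration of all vertices is obtained by a contradiction at the boundary between visited and unvisited vertices; your version is merely more explicit about the coloring invariant and the finiteness of the run. The one assertion you leave implicit---that the last departure from $w'$ cannot be a forward move---does follow from what you already established (a forward move goes to an unvisited neighbor, which lies in a component of $T-w'$ not containing $v_0$ because the $v_0$-side neighbor of $w'$ is visited, so the agent would have to pass through $w'$ again before it can stop on $v_0$), so this is a one-sentence omission rather than a gap.
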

	\begin{proof}
		The color assignment in line~\ref{line:treeplacepebbles} 
		has the effect that vertices are colored with $(d(v)+1) 
		\modi 3$, where $d(v)$ is the distance from the start 
		vertex. Clearly, since there are no cycles in the graph, 
		\textscalt{TreeExploration} always returns to the start 
		vertex via line~\ref{line:treegoto}.
		Once all neighbors of the start vertex are explored, 
		\textscalt{TreeExploration} terminates by 
		\cref{lem:treeterminatesonlyuponreachingstartvertex}.
		
		We now show that all vertices are explored before 
		\textscalt{TreeExploration} terminates:
		Assume towards contradiction that there is a tree $T = (V,E)$ where not all vertices are explored.
		Let $C \subset V$ be the vertices that are explored and thus colored and let $U \subset V$ be the vertices that are not explored.
		
		Since $T$ is connected, there is an edge $\{v,w\}$ between two vertices $v \in C$ and $w \in U$.
		The vertex $v$ is visited at some point.
		There is an unvisited neighbor, namely $w$, hence 
		line~\ref{line:tree_condition_unvisited_neighbor} is 
		applied and \textscalt{TreeExploration} visits an 
		unexplored neighbor, say $u \in C$.
		Let us call the set of vertices in $C$ whose shortest path to the start vertex contains $v$ the \emph{branch} of $v$. Clearly, $u$ is in the branch of $v$.
		
		After $u$, \textscalt{TreeExploration} might visit other 
		vertices in the branch of $v$.
		At some point, all vertices in the branch of $v$ are visited and only lines~\ref{line:treeterminate} or~\ref{line:treegoto} can be applied. By \cref{lem:treeterminatesonlyuponreachingstartvertex}, line~\ref{line:treeterminate} can only be applied on the start vertex and the start vertex is not in the branch of $v$.
		Therefore, only line~\ref{line:treegoto} can be applied.
		Since there are no cycles, \textscalt{TreeExploration} thus 
		returns to $v$ at some point.
		Now, however, all neighbors of $v$ except $w$ and maybe other neighbors in $U$ are colored. Therefore, $v$ has to visit $w$ or another neighbor in $U$ according to line~\ref{line:forward_condition}. This contradicts our assumption that $w \in U$.
		Hence, \textscalt{TreeExploration} explores all vertices.
	\end{proof}

	Together, these two lemmas yield the following theorem.
	\begin{theorem}
		\textscalt{TreeExploration} is correct on trees and uses at 
		most $3$ colors.
	\end{theorem}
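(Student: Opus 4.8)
This theorem follows by combining \cref{lem:treeterminatesonlyuponreachingstartvertex} and \cref{lem:treeverticesareexplored}, so the plan is essentially to assemble these pieces and check the one point they leave implicit. First I would settle the color bound: a vertex receives a color only in line~\ref{line:treeplacepebbles}, and the assigned value has the form $x \modi 3$, hence lies in $\{1,2,3\}$; since the guard $c(v)=0$ forbids recoloring, every vertex ultimately carries a color from $\{1,2,3\}$, so \textscalt{TreeExploration} uses at most three colors.

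For correctness on an arbitrary tree $T$ with start vertex $v_0$, three things must hold: every vertex is visited; the agent stands on $v_0$ when it stops; and the agent reaches a \textscalt{Stop} decision after finitely many steps. The first two are exactly \cref{lem:treeverticesareexplored}, which tells us that all vertices get explored and that the agent then returns to $v_0$; and \cref{lem:treeterminatesonlyuponreachingstartvertex} guarantees it can stop nowhere else. It therefore remains to verify that, once every vertex has been explored and the agent is back on $v_0$, the termination rule in line~\ref{line:treeterminate} actually triggers. Here I would reuse the distance-labeling fact from the proof of \cref{lem:treeverticesareexplored}: $v_0$ is colored $(0+1)\modi 3 = 1$ (when it was first visited all its neighbors were uncolored, so the maximum in line~\ref{line:treeplacepebbles} was $0$), while every neighbor of $v_0$ is at distance $1$ and hence has color $2$; since nothing is uncolored, the guard of line~\ref{line:tree_condition_unvisited_neighbor} fails, and the condition ``$c(v)=1$ and no neighbor of color $3$'' holds, so the agent stops on $v_0$.

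The only part that still needs a sentence is \emph{finiteness}, the aspect the two lemmas leave implicit. I would argue it with a standard depth-first potential: each forward move (line~\ref{line:treeforward}) strictly increases the number of colored vertices, a quantity that is monotone and bounded by $n$; between consecutive forward moves the agent performs only backtracking moves (line~\ref{line:treegoto}), each going to the \emph{unique} neighbor of color $(c(v)-1)\modi 3$, i.e.\ the neighbor one step closer to $v_0$ along the $v_0$--path, which is unique because $T$ is a tree. Hence such a stretch of moves traverses a simple path toward $v_0$ and cannot loop, so the run halts after finitely many steps. I do not anticipate a genuine obstacle: all the real content lives in the two lemmas, and what remains is the bookkeeping sketched above.
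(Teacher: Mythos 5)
Your proposal matches the paper's argument: the theorem is obtained exactly by combining \cref{lem:treeterminatesonlyuponreachingstartvertex} and \cref{lem:treeverticesareexplored} with the immediate observation that line~\ref{line:treeplacepebbles} only ever assigns colors from $\{1,2,3\}$. Your additional remarks on the termination rule triggering at $v_0$ and on finiteness are correct bookkeeping that the paper folds into the proof of \cref{lem:treeverticesareexplored}, so the approach is essentially identical.
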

	
	We prove that \textscalt{TreeExploration} is optimal in the 
	sense that it is impossible to use fewer than $3$ colors to 
	solve graph exploration in our model on trees. Before doing so, 
	we make the following crucial observation.
	
	\begin{observation}[Functional Nature]\label{obs:functionalnature}
		Due to its functional nature, once an algorithm is in a vertex $v$ where all vertices in $N[v]$ have been colored and takes a decision upon which it goes to a neighbor $w$ or upon which the adversary chooses neighbor $w$ as next vertex, this choice can be made by the adversary each time the algorithm returns to $v$.
		Moreover, once the algorithm is in a vertex $v$ and takes a decision to go to a neighbor of color $d$, the same decision will be made in a vertex $w$ if $c(w)=c(v)$ and the colors in $N(w)$ can be mapped bijectively to the colors in $N(v)$.
	\end{observation}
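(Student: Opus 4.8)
The plan is to derive both parts directly from two features already baked into the model: an algorithm is \emph{literally a function} $\text{move}\colon\mathcal{E}\to\N\times\N\cup\{\textscalt{Stop}\}$ of the perceived environment $(c_0,E)$, and, in the no-recoloring setting, the rule $c_1=c_0$ whenever $c_0\neq 0$ guarantees that a vertex never changes its color once it carries a positive one. So I would not look for any clever construction; instead I would just check that, under the stated hypotheses, the environment seen at the relevant vertex is the \emph{same} element of $\mathcal{E}$ on every occasion, and then appeal to functionality of $\text{move}$.

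For the first part, I would fix the moment at which the agent sits at $v$ with all of $N[v]$ already colored. Since no vertex of $N[v]$ is uncolored and overwriting a positive color is forbidden, every vertex of $N[v]$ keeps its color for the rest of the run; hence the pair $(c_0,E)$ read off at $v$---with $c_0=c(v)$ and $E(c)$ the number of neighbors of $v$ of color $c$---is the very same element of $\mathcal{E}$ at every later visit to $v$. Functionality then gives the same output each time: if it is $\textscalt{Stop}$ there is nothing to iterate, and if it is $(c_1,d)$ then on every visit the agent heads for a neighbor of color $d$, while \emph{which} neighbor of color $d$ is actually taken is, by definition of a step, the adversary's choice. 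As $w$ was a legal target on the first such visit (so $c(w)=d$) and $c(w)$ never changes afterwards, I would have the adversary pick $w$ on every return to $v$, which is exactly the claim.

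For the second part, I would compare the environments at $v$ and at a vertex $w$ with $c(w)=c(v)$ admitting a color-preserving bijection $\varphi\colon N(w)\to N(v)$, i.e.\ $c(\varphi(u))=c(u)$ for all $u\in N(w)$. Such a bijection forces, for every color $c$, the number of $c$-colored neighbors of $w$ to equal that of $v$, so the neighbor-count functions coincide; combined with $c(w)=c(v)$ this means the two environments are the same element of $\mathcal{E}$, and functionality yields the same output---in particular the same target color $d$ (and the same color written onto the current vertex when $c(v)=c(w)=0$). I expect no real obstacle here; the only points needing care are that ``the same decision'' refers to the target \emph{color} $d$ and not to a specific physical neighbor---tie-breaking among equally colored neighbors stays with the adversary---and that the argument genuinely relies on the no-recoloring restriction, which is the sole reason the perceived environment at $v$ is stable across visits and hence the only place the hypothesis ``$N[v]$ fully colored'' does any work.
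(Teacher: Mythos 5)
Your proposal is correct and matches the paper's reasoning: the paper states this observation without a separate proof, justifying it precisely by the functional nature of $\text{move}$ together with the no-recoloring rule, which is exactly what you spell out. Your careful handling of the two subtleties (stability of the perceived environment because $N[v]$ is fully colored and colors are never overwritten, and the fact that ``the same decision'' refers to the target color while tie-breaking stays with the adversary) is the intended content of the observation.
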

	
	\begin{theorem}\label{thm:trees_lower_bound}
		There is no algorithm that solves graph exploration as in our model on every tree and that uses less than 3 colors.
	\end{theorem}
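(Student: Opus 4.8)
The plan is to assume for contradiction that some algorithm $\alg$ using at most two colors---say, colors from $\{1,2\}$---is correct on all trees, and to derive a contradiction by running $\alg$ on the path $P_N=u_0u_1\cdots u_{N-1}$ with start vertex $u_0$, for a suitably large $N$ (in fact $N\ge 5$ will do). The intuition is that along a path a two-color agent that wants to backtrack cannot distinguish the edge leading toward the start vertex from the edge leading away from it, so an adversary can forever redirect it; the real work lies in making this argument go through against an \emph{arbitrary} two-color strategy, not merely a naive alternating one.

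First I would pin down the coloring that $\alg$ is forced to produce on $P_N$. Since the set of visited vertices is always a connected subgraph of the path containing $u_0$, it is always an initial segment $\{u_0,\ldots,u_t\}$; hence the vertices are first visited in the order $u_0,u_1,\ldots,u_{N-1}$. In a correct run every vertex gets colored on its first visit and is never recolored, so when $u_i$ is first visited, its left neighbor $u_{i-1}$ already carries its final color while $u_{i+1}$ (if it exists) is still uncolored. For $1\le i\le N-2$ the first-visit view of $u_i$ is thus ``uncolored self, one neighbor of the fixed color $c(u_{i-1})$, one uncolored neighbor,'' so the color $\alg$ writes is $\phi(c(u_{i-1}))$ for one fixed function $\phi\colon\{1,2\}\to\{1,2\}$; together with $a\coloneqq c(u_0)$ (fixed, since $u_0$'s initial view is fixed) this gives $c(u_i)=\phi^{i}(a)$ for all $0\le i\le N-2$.

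Next I would use that $\phi^2$ is either the identity (if $\phi$ is a bijection) or constant (if $\phi$ is constant), which forces $c(u_{i-1})=c(u_{i+1})$ for every $2\le i\le N-3$: on a fully colored $P_N$, every interior vertex sees two equal-colored neighbors. Since $\alg$ must visit $u_{N-1}$, consider the first time it does; by the ordering above the path is fully colored by then. Now I run a trapping adversary: while the agent sits on an interior vertex $u_m$ with $2\le m\le N-3$, its two neighbors are indistinguishable and, since $\alg$ may not point $\text{move}$ at an absent color, its only options are to stop or to move to one of them, so the adversary sends it to $u_{m+1}$; while it sits on $u_{N-2}$ or on the degree-one vertex $u_{N-1}$, every legal move stays inside $\{u_j: j\ge 2\}$ anyway. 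Using the functional nature of $\alg$ (\cref{obs:functionalnature}) to replay these choices, an induction on steps shows that from the first visit of $u_{N-1}$ onward the agent never again enters $u_0$ or $u_1$; hence it either stops at a vertex $\neq u_0$ or never terminates, contradicting correctness on $P_N$. The one-color case is just the special case in which $\phi$ must be constant.

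The main obstacle---essentially the only delicate point---is the treatment of the two endpoints: both the ordering consequence and the formula $c(u_i)=\phi^i(a)$ genuinely fail at $u_0$ and $u_{N-1}$ (a degree-one view is not an interior view), so one has to check separately that the trap cannot leak out through $u_{N-1}$ and that $N$ is chosen large enough for the ``interior of equal-colored neighbors'' to actually separate the agent from the start vertex. The rest is a short case distinction over the three possible shapes of $\phi$.
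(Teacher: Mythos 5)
There is a genuine gap at the very first structural step: you assert that ``in a correct run every vertex gets colored on its first visit,'' and everything afterwards (the single update function $\phi$, the formula $c(u_i)=\phi^i(a)$, and hence the key property $c(u_{i-1})=c(u_{i+1})$ for interior vertices) rests on it. Correctness does not give you this. The agent is free to leave $u_i$ uncolored, move on to $u_{i+1}$, color $u_{i+1}$ there, be sent back by the adversary, and only then color $u_i$ --- now with the view ``uncolored self, two colored neighbors,'' which is a different environment and may invoke a different coloring rule than $\phi$. Such deferral is perfectly compatible with eventually exploring the path (the paper's own proof explicitly allows for it: it only argues that $v_i$ must be colored \emph{before $v_{i+2}$ is visited}, possibly on a revisit, see the footnote in the proof of \cref{thm:trees_lower_bound}). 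Note also that the tool one would reach for to force immediate coloring, \cref{lem:cantleaveuncolored}, is proved via a triangle gadget and is only available for algorithms that must explore \emph{all} graphs, not for the tree-restricted setting of this theorem; so you cannot import it here. Without first-visit coloring the recursion collapses, different vertices may be colored under different view types, and the ``equal-colored neighbors at distance two'' pattern that your trapping adversary exploits is no longer forced.

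By contrast, the paper's argument is insensitive to \emph{when} vertices get colored: on a path of seven vertices it shows each $v_i$ ($i\le 5$) must be colored by the time $v_{i+2}$ is reached, then extracts the constraints $c(v_3)\neq c(v_5)$, $c(v_2)\neq c(v_4)$, $c(v_1)\neq c(v_3)$ directly from the requirement that backtracking from $v_7$ must not be divertible, and finishes by a two-case analysis using the functional nature of the algorithm. To repair your proof you would either have to prove that two-color correctness forces first-visit coloring on paths (which needs its own argument and is exactly the delicate point the paper sidesteps), or redo the analysis allowing vertices to be colored on later visits under different views --- at which point you are essentially back to the paper's case-based argument.
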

	\begin{proof}
		Assume by contradiction that there exists such an algorithm.
		Consider a path with seven vertices.
		Denote them by $v_1$ to $v_7$ in sequential order. Let $v_1$ be the start vertex.
		
		For $i \in [5]$, if an agent does not color $v_i$, it cannot visit $v_{i+2}$:
		The agent could continue to $v_{i+1}$, but then it cannot distinguish between $v_i$ and $v_{i+2}$.
		Therefore, the adversary can make the agent go back to $v_i$. If the agent now does not color $v_i$\footnote{Note that the agent may have colored $v_{i+1}$ and thus the environment of the second visit of $v_i$ may be different from the environment of the first visit, leading to a potentially different decision.}, it is caught in an endless loop by \cref{obs:functionalnature}, which contradicts the assumption that the algorithm works correctly on all trees.
		
		When the agent has reached $v_7$, it has to return to $v_1$. It must make a step from $v_4$ to $v_3$ at some point. If $c(v_3) = c(v_5)$, the adversary can make the agent go back to $v_5$ instead, which again opens the way to an endless loop. Hence, $c(v_3)\neq c(v_5)$. The same is true at $v_3$ and at $v_2$. Hence, $c(v_4)\neq c(v_2)$ and $c(v_1)\neq c(v_3)$. Without loss of generality, assume $c(v_1)=1$. Then this implies that either $(c(v_1),\ldots,c(v_5))=(1,1,2,2,1)$ or $(c(v_1),\ldots,c(v_5))=(1,2,2,1,1)$.
		
		If $(c(v_1),\ldots,c(v_5))=(1,1,2,2,1)$, the agent has to go from $v_4$ to $v_3$; however, then the agent goes from $v_3$ back to $v_4$ because of its functional nature. Similarly, if $(c(v_1),\ldots,c(v_5))=(1,2,2,1,1)$, the agent has to go from $v_3$ to $v_2$, but then it goes from $v_2$ back to $v_3$.
	\end{proof}
	
	As an aside, we would like to note that the example with the path in the proof above does not work if we change our model and allow recoloring vertices.
	Then, it would be possible to successfully explore all paths with the start vertex being a leaf by just using one color in addition to the ``non-color'' $0$. A possible algorithm might proceed as follows.
	
	If $c(v)=0$ and there is an uncolored neighbor, set $c(v)=1$ and go to that neighbor; if $c(v)=1$ or if there is no uncolored neighbor, set $c(v)=0$ and go to the colored neighbor. If $c(v)=1$ and there is no colored neighbor, terminate.
	This way, the algorithm visits all vertices from the start leaf up to the other leaf and assigns to each vertex except the last one color $1$. From the last vertex, the algorithm goes back to the start vertex by always deleting the assigned color. This is possible since the color of the current vertex is given to the algorithm and the algorithm thus knows whether the vertex has been visited at some point before.
	
	We return to recoloring in more depth in \cref{sec:recoloring}, where we corroborate our finding that recoloring radically changes our model, resulting in much more powerful algorithms.
	
	\section{Exploration of General Graphs}\label{sec:generalgraphs}
	In this section, we provide bounds for the exploration of general graphs, that is, without restricting the graph class. We begin with an upper bound, then turn to a corresponding lower bound, and then diverge slightly and discuss non-uniform algorithms, where the algorithm knows the size of the input graph.
	\subsection{Upper Bound}\label{sec:generalgraphs_upperbound}
	How can an agent proceed on graphs that are not necessarily trees?
	Again, a depth-first search strategy suffices; however, this time, the agent uses almost as many colors as vertices.
	
	Let us first describe the idea.
	The agent starts at the start vertex and colors it with color $1$. This vertex is the root. All other vertices will receive larger colors; hence, the root is easy to recognize.
	We wish to carry out DFS. It is easy to go whenever possible to an unexplored---or, equivalently, uncolored---neighbor of the current vertex.
	But how shall the agent find the way back?
	The idea is to color a new vertex with a color $1$ larger than the largest color in the neighborhood.
	This ensures that colors increase along paths taken forward in the tree.
	A notable exception is when the agent arrives at a leaf in the DFS tree, that is, a vertex where all neighbors are colored.
	Then, the agent can save a color by assigning the largest color in the neighborhood---which is the color of the vertex the agent came from--instead of the largest color in the neighborhood plus one.
	For backtracking from a vertex $v$, the agent goes to a vertex whose color is one less than $c(v)$. Such a vertex always exists, except if $v$ is the root.
	The only thing one has to prove is that this neighbor is unique.
	The formal description of \textscalt{DepthFirstSearch} is 
	provided by Algorithm~\ref{alg:dfs}.
	
	Clearly, \textscalt{DepthFirstSearch} is well defined.
	We prove in the following that \textscalt{DepthFirstSearch} 
	explores all vertices and uses at most $n-1$ colors.
	We start by proving that whenever \textscalt{DepthFirstSearch} 
	goes to a vertex that has been colored before, this vertex is 
	the predecessor of the current vertex.
	By \emph{predecessor} of a vertex $v$ that is not the root, we mean the neighbor $w$ from which the agent moved to $v$ when $v$ was first visited.
	
	\begin{algorithm}
		\caption{{}\textscalt{DepthFirstSearch}}
		\label{alg:dfs}
		\begin{algorithmic}[1]
			\LineComment{Save one color by not assigning a new color for leaves or any vertices where all neighbors are colored}
			\If{$c(v)=0$ and all neighbors are colored}
			\State $c(v)\coloneqq \left(\max\limits_{v' \in N(v)}c(v')\right)$
			\State go to a neighbor $v'$ with maximal color \label{line:savecolors}
			\LineComment{Normal case: on a new vertex, there is another uncolored neighbor}
			\ElsIf{$c(v)=0$}
			\State $c(v)\coloneqq \left(\max\limits_{v' \in N(v)}c(v')\right)+1$ \label{line:placepebbles}
			\EndIf
			\If{there is a neighbor $v'$ with $c(v')=0$} \label{line:forward_condition}
			\State go to $v'$. If there are multiple such $v'$, go to the one that is presented first \label{line:forward}
			\ElsIf{there is a neighbor $v'$ with $c(v')=c(v)-1$}
			\State go to $v'$ \label{line:backwards}
			\Else
			\State \textbf{terminate.}
			\EndIf
		\end{algorithmic}
	\end{algorithm}
	
	\begin{lemma} \label{lem:directpredecessors}
		On any vertex $v$, either \textscalt{DepthFirstSearch} goes 
		to an unvisited vertex or it goes back to the predecessor 
		of $v$. In other words, whenever line~\ref{line:backwards} 
		is applied, $v'$ is the predecessor of $v$.
	\end{lemma}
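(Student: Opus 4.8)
The plan is to track how colors behave along the current DFS path and argue that the predecessor is the unique neighbor with color $c(v)-1$. First I would set up the right invariant: at any point during the execution, consider the sequence of vertices $u_0 = r, u_1, \dots, u_t = v$ on the path from the root $r$ to the current vertex $v$ obtained by following predecessors. I claim that along this path the colors are \emph{strictly increasing}, i.e.\ $c(u_0) < c(u_1) < \dots < c(u_t)$, with the sole possible exception of the very last step, where $c(u_{t-1}) = c(u_t)$ is allowed (this happens exactly when $v$ was a vertex all of whose neighbors were already colored at first visit, handled by line~\ref{line:savecolors}). I would prove this invariant by induction on the number of steps: when the agent moves forward via line~\ref{line:forward} to an uncolored neighbor $w$, either line~\ref{line:placepebbles} assigns $c(w) = \max_{v'\in N(w)} c(v') + 1 > c(v)$, extending the strictly increasing chain, or line~\ref{line:savecolors} assigns $c(w)$ equal to the maximum neighbor color, which is $\ge c(v)$; in the latter case $w$ is a leaf of the DFS tree, the agent immediately backtracks, and the ``exception'' slot is used. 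When the agent backtracks via line~\ref{line:backwards} from $v = u_t$, I must show it returns to $u_{t-1}$, which is precisely the statement of the lemma, so the induction step for backtracking is where the real work sits.

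So the core of the argument: assume the invariant holds and the agent is at $v = u_t$ with line~\ref{line:backwards} about to fire, meaning every neighbor of $v$ is colored and $v \neq r$. I need to show $u_{t-1}$ is the \emph{only} neighbor of $v$ with color $c(v)-1$. By the invariant, $c(u_{t-1}) \in \{c(v)-1, c(v)\}$; but the case $c(u_{t-1}) = c(v)$ only occurs for a vertex $v$ whose neighbors were all colored at first visit, and such a $v$ backtracks immediately without ever being a proper internal node of the path at a later backtracking step — I would need to check this boundary case separately and show that even then, at the moment of backtracking, the color-equal predecessor is still uniquely identifiable, or argue the step reduces to the generic case. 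In the generic case $c(u_{t-1}) = c(v) - 1$, so $u_{t-1}$ is \emph{a} neighbor of the right color; uniqueness is the crux.

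For uniqueness I would argue by contradiction: suppose some other neighbor $x \neq u_{t-1}$ of $v$ also has $c(x) = c(v)-1$. Since $x$ is colored, it was first visited at some earlier step; consider its own predecessor-path from the root at that time. The key structural fact I would extract is that when a vertex $w$ gets color $\gamma$ via line~\ref{line:placepebbles}, at that moment $w$ has a neighbor of color $\gamma - 1$ on its root-path and \emph{no} neighbor of color $\ge \gamma$; combined with the DFS discipline (the agent never leaves a subtree until fully explored, which itself needs a brief justification via \cref{obs:functionalnature} and the fact that forward moves only go to uncolored vertices), this forces the ancestor relationships to be consistent — essentially, $v$ can have at most one neighbor that was ``assigned its color by looking at $v$'', and any neighbor of color $c(v)-1$ must either be $v$'s predecessor or would have caused $v$ to receive a larger color, contradiction. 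I would make this precise by considering, among all neighbors of $v$ with color $c(v)-1$, the one first visited, and tracing the order of first visits.

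The main obstacle I anticipate is handling the interaction between the color-saving rule (line~\ref{line:savecolors}) and uniqueness of the backtrack target: the rule deliberately creates pairs of adjacent vertices with equal color, which is exactly the situation that could break the ``color $c(v)-1$ is unique'' property if such a vertex ever sat in the interior of the active path. I expect the resolution is that a line~\ref{line:savecolors} vertex is always a DFS leaf and is abandoned immediately, so it is never the $u_{t-1}$ of a later backtracking step except possibly transiently — nailing down this ``transiently'' and confirming the adversary cannot exploit it (again via the functional-nature observation) is the delicate part. Everything else should be a routine induction once the strictly-increasing-along-the-path invariant is correctly stated with its one-slot exception.
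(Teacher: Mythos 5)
Your overall plan (induction along the execution, controlling the colors on the active predecessor path, and reducing the lemma to uniqueness of the neighbor of color $c(v)-1$) points in the same direction as the paper's proof, but the two places where you yourself locate ``the real work'' are exactly the content of the lemma, and they are not carried out; moreover, the one concrete deduction you do make there is invalid. First, from the invariant as you state it (colors strictly increasing along the root path, with a one-slot exception) you conclude $c(u_{t-1})\in\{c(v)-1,c(v)\}$; strict monotonicity only yields $c(u_{t-1})\le c(v)-1$. To get equality you need that, when $v$ was colored by line~\ref{line:placepebbles}, no neighbor of $v$ carried a color larger than the predecessor's, i.e., that colors increase by \emph{exactly} one along the path. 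This is a substantive claim: the paper proves it by supposing a neighbor $z$ with $c(z)>c(v')$ exists, splitting on which of $z$ and $v'$ was visited first, and deriving a contradiction from the fact that line~\ref{line:backwards} can never fire at a vertex that still has the uncolored vertex $v$ as a neighbor (the condition in line~\ref{line:forward_condition} is checked first). Nothing in your sketch substitutes for this step.

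Second, for uniqueness of the color-$(c(v)-1)$ neighbor you offer that any such neighbor ``must either be $v$'s predecessor or would have caused $v$ to receive a larger color''; that is false: a second neighbor with color exactly $c(v)-1$ does not raise the maximum in line~\ref{line:placepebbles} at all, so $v$ still receives color $c(v)$ and no contradiction arises from the coloring rule alone. The genuine argument (as in the paper) must trace the agent's walk between the two equally colored neighbors and use, once more, that backtracking is impossible at any vertex adjacent to the still-uncolored $v$, so that the colors along that walk cannot return to the value $c(v)-1$; you defer this to ``tracing the order of first visits'' without doing it, and likewise you leave open the interaction with the equal-color pairs produced by line~\ref{line:savecolors}, which you correctly identify as delicate but do not resolve. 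As it stands, the proposal is a road map to the paper's proof rather than a proof.
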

	\begin{proof}
		We prove this by induction on the number of times line~\ref{line:backwards} is applied. 
		If line~\ref{line:backwards} is applied for the first time, there can only be one vertex $v'$ with color $c(v')=c(v)-1$ since assigning a color multiple times requires backtracking, i.e., line~\ref{line:backwards}.
		Moreover, by the same reasoning, $v$ indeed receives color $c(v')+1$; there cannot be another neighbor $z$ with color $c(z)>c(v')$.
		
		Assume that line~\ref{line:backwards} has been applied $k-1$ times and always the neighbor visited right before has been visited. 
		We show that for the $k$-th time when line~\ref{line:backwards} is applied, again the neighbor visited right before will be visited.
		Call the current vertex $v$ and let $v'$ be the vertex visited right before $v$. We prove two things. First, we show that there is only one neighbor $v''$ with color $c(v'')=c(v)-1$; second, we show that $c(v)=c(v')+1$.
		
		Assume there is another neighbor $z$ of $v$ with $c(z)>c(v')$. We distinguish two cases: Either $v'$ was visited before $z$ or vice versa. In the first case, by the induction hypothesis, the only way for the agent to go from $z$ back to $v'$ is via backtracking at some point from $z$ to its predecessor. However, this is not possible since line~\ref{line:backwards} can only be applied when there is no unvisited neighbor, but $v$ is such an unvisited neighbor. In the second case, consider the search sequence $z = v_1, v_2, \ldots, v_k = v'$ of the algorithm between $z$ and $v'$. The only way that $z$ is assigned a larger color than $v'$ is that line~\ref{line:backwards} is applied between $z$ and $v'$. By the induction hypothesis, whenever this happens, the direct predecessor is visited. So either line~\ref{line:backwards} is applied right from $z=v_1$ or the algorithm goes to a previously unvisited vertex $v_2$ and $z$ occurs somewhere else in the sequence. At some point, the agent has to backtrack from $z$, which is not possible since $z$ has an unvisited neighbor. Therefore, such a neighbor $z$ does not exist. 
		
		Assume there are two vertices $v'$ and $v''$ with $c(v')=c(v'')=c(v)-1$ and the agent has visited $v''$ some time before $v'$. Let $v''=v_1, \ldots, v_s=v'$ be the sequence the algorithm took to get from $v''$ to $v'$.
		According to the induction hypothesis, whenever line~\ref{line:backwards} was applied in this sequence, a direct predecessor was visited.
		The vertex $v''$ might appear several times in this list.
		Let $v_t$ be the last vertex in the sequence that is equal to $v''$.
		By construction, colors change only by $1$ between two neighbors in this sequence. 
		If $c(v_{t+1})=c(v'')+1$, there would have to be another vertex $\bar{v}$ in the list with color $c(\bar{v})=c(v'')$.
		However, since there are only direct predecessors when applying line~\ref{line:backwards} and since the color cannot decrease otherwise, this is not possible.
		Hence, $c(v_{t+1})=c(v'')-1$. However, this is not possible either since line~\ref{line:backwards} cannot be applied when there is an unvisited neighbor, which is the case here with $v$.
		Therefore, the assumption that there are two vertices $v'$ and $v''$ with $c(v')=c(v'')=c(v)-1$ is wrong and there is only one such vertex. This is the direct predecessor of $v$, which proves the lemma.
	\end{proof}
	\begin{lemma}\label{lem:dfsexploresallvertices}
		\textscalt{DepthFirstSearch} explores all vertices.
	\end{lemma}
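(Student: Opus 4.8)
The plan is to follow the template of the proof of \cref{lem:treeverticesareexplored}. Let $T$ be the tree on the set of visited vertices whose edges join every non-root vertex $v$ to its predecessor $p(v)$, rooted at the start vertex. By \cref{lem:directpredecessors}, every backtracking step (line~\ref{line:backwards}) moves the agent from the current vertex to its parent in $T$, so the run is a depth-first traversal of $T$, and $T$ keeps growing as new vertices are discovered through line~\ref{line:forward}. I would first record two facts: (a) every step that lands on an already-coloured vertex $u$ is taken from a neighbour $x$ with $p(x)=u$ --- in particular the agent never re-enters a leaf of $T$; and (b) the algorithm terminates only at the start vertex.

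For (a), a step onto an already-coloured vertex can only be a backtracking step (line~\ref{line:backwards}) or the line~\ref{line:savecolors} step. The former lands on $p(x)$ by \cref{lem:directpredecessors}. For the latter, note that when a vertex $x$ is first visited, each already-coloured neighbour of $x$ must be an ancestor of $x$ in $T$: a coloured neighbour lying in an already-completed subtree would, when the agent last backtracked out of it, have had all its neighbours coloured, hence $x$ coloured --- impossible. Since the colours strictly increase along every root path of $T$ (each internal vertex getting colour one more than its parent, again by the reasoning in the proof of \cref{lem:directpredecessors}), the unique neighbour of $x$ of maximum colour is $p(x)$, so line~\ref{line:savecolors} also sends the agent to $p(x)$. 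For (b), termination requires a vertex with no uncoloured neighbour and no neighbour of colour $c(v)-1$; an internal vertex $v$ receives $c(v)=c(p(v))+1$ in line~\ref{line:placepebbles}, so $p(v)$ is permanently such a neighbour, and by (a) the agent never evaluates the termination rule on a leaf (it leaves a leaf via line~\ref{line:savecolors} on the first visit and never returns). Hence only the start vertex, whose colour is $1$ and all of whose neighbours have larger colour, admits termination.

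Now assume for contradiction that some vertex is never visited. Since $G$ is connected, there is an edge $\{v,w\}$ with $v$ visited and $w$ not. Whenever the agent is at $v$, the uncoloured neighbour $w$ makes line~\ref{line:forward_condition} true, so the agent moves to some uncoloured neighbour $u$, colouring it. As $u$ is first reached from $v$, we have $p(u)=v$, and by (a) the agent stays within the (growing) subtree of $T$ rooted at $u$ until it backtracks from $u$; it cannot terminate inside this subtree, since by (b) termination happens only at the start vertex, which is not a descendant of $u$, and the subtree is finite, so it does backtrack from $u$, returning to $v$ by \cref{lem:directpredecessors}. Thus each visit of $v$ is followed by another one, and each such round colours at least one further neighbour of $v$. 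As $v$ has finitely many neighbours, after finitely many rounds $w$ is the only uncoloured neighbour of $v$, and then the agent moves from $v$ to $w$ in line~\ref{line:forward}, contradicting that $w$ is never visited.

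I expect the main obstacle to be fact (a), in particular pinning down the behaviour of line~\ref{line:savecolors}: one must rule out that the agent is sent to a same-coloured neighbour other than its parent, which hinges on the observation that, at the first visit of a vertex, all of its already-coloured neighbours lie on the current root path and carry pairwise distinct colours. Everything else then closely parallels the tree case, with \cref{lem:directpredecessors} and \cref{obs:functionalnature} supplying the book-keeping that was trivial for trees.
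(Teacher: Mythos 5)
Your proof is correct and follows essentially the same route as the paper: a contradiction argument on a boundary edge between a visited vertex $v$ and an unvisited vertex $w$, with \cref{lem:directpredecessors} guaranteeing that every move onto a visited vertex returns to the direct predecessor, so the agent keeps coming back to $v$ until it is forced onto $w$. The only difference is that you spell out two points the paper leaves implicit, namely that the move in line~\ref{line:savecolors} also goes back to the predecessor and that no premature termination can occur, which the paper subsumes in the first sentence of \cref{lem:directpredecessors} and in the separate termination lemma.
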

	\begin{proof}
		We can argue similar as for \cref{lem:treeverticesareexplored}.
		Assume towards contradiction that there is a graph $G = (V,E)$ where not all vertices are explored.
		Let $C \subset V$ be the vertices that are explored and thus colored and let $U \subset V$ be the vertices that are not explored.
		
		Since $G$ is connected, there is an edge $\{v,w\}$ between a vertex $v \in C$ and $w \in U$.
		The vertex $v$ is visited at some point. According to our 
		observation above, \textscalt{DepthFirstSearch} then visits 
		an unexplored neighbor.
		Going back to the predecessor of $v$ is not possible since $v$ has an unexplored neighbor, namely $w$, and line~\ref{line:backwards} is only evaluated if the condition of line~\ref{line:forward_condition} is wrong, i.e., if there is no unexplored neighbor.
		\textscalt{DepthFirstSearch} then continues to visit other 
		vertices in $C$. However, at some point, there are no 
		unvisited vertices in $C$ left and 
		\textscalt{DepthFirstSearch} has to return according to 
		line~\ref{line:backwards}.
		
		By \cref{lem:directpredecessors}, 
		\textscalt{DepthFirstSearch} returns to the direct 
		predecessor every time.
		Therefore, \textscalt{DepthFirstSearch} visits $v$ again.
		Now, \textscalt{DepthFirstSearch} may visit another 
		unexplored neighbor $s \in C$. 
		By the same arguments as before, however, 
		\textscalt{DepthFirstSearch} returns to $v$. 
		At some point, all neighbors of $v$ in $C$ are explored and the algorithm returns to $v$. 
		Then, the vertex $w$ has to be explored next according to line~\ref{line:forward_condition}.
		This contradicts our assumption that $w \in U$.
		Therefore, on every graph $G$, \textscalt{DepthFirstSearch} 
		explores all vertices.
	\end{proof}
	\begin{lemma}
		The agent in \textscalt{DepthFirstSearch} returns to the 
		start vertex and terminates.
	\end{lemma}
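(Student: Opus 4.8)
The plan is to combine \cref{lem:dfsexploresallvertices} and \cref{lem:directpredecessors} with a short analysis of which of the two colouring rules a vertex was coloured by. By \cref{lem:dfsexploresallvertices}, after finitely many steps every vertex has been visited, and hence coloured; in particular the agent has not terminated yet. From that moment on there is no uncoloured vertex, so neither the condition of the branch of line~\ref{line:savecolors} (which needs $c(v)=0$) nor the condition of line~\ref{line:forward_condition} can hold again, and at every subsequent vertex the agent either executes line~\ref{line:backwards}, moving to a neighbour of colour exactly one less, or it terminates. Since the colour of the current vertex is then a positive integer that strictly decreases with each use of line~\ref{line:backwards}, only finitely many further steps can occur and the agent does terminate. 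It remains to show that it terminates on the start vertex and on no other vertex.

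Termination on the start vertex is immediate: this vertex has colour $1$ and, after full exploration, no neighbour of colour $0$, so line~\ref{line:backwards} is inapplicable and the agent stops. For the converse I first observe that if $v$ was coloured in the branch of line~\ref{line:savecolors} (that is, $c(v)\coloneqq\max_{v'\in N(v)}c(v')$, immediately followed by a move to a maximum-colour neighbour), then \emph{all} neighbours of $v$ were already coloured at the first visit of $v$, hence every neighbour of $v$ was visited strictly before $v$. I then claim the agent never returns to such a $v$: it cannot enter $v$ via line~\ref{line:forward} (that needs $v$ uncoloured), nor via line~\ref{line:backwards} (by \cref{lem:directpredecessors} that would make $v$ the predecessor of a neighbour, but after its first visit $v$ has no uncoloured neighbour and so is the predecessor of no vertex), nor via the move of line~\ref{line:savecolors} executed at a neighbour $y$ (this move happens at the first visit of $y$, which precedes the first visit of $v$, at which time $v$ is uncoloured and therefore not a maximum-colour neighbour of $y$). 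Consequently line~\ref{line:backwards} is executed only at the start vertex and at vertices coloured by line~\ref{line:placepebbles}.

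Finally, a vertex $v$ coloured by line~\ref{line:placepebbles} always has a neighbour of colour $c(v)-1$, namely any neighbour attaining the maximum in line~\ref{line:placepebbles} at the first visit of $v$, whose colour is exactly that maximum. Hence line~\ref{line:backwards} is applicable there, and by \cref{lem:directpredecessors} the agent moves to the predecessor of $v$, whose colour is $c(v)-1\ge 1$ and which is again either the start vertex or a line~\ref{line:placepebbles} vertex, since a vertex coloured in the branch of line~\ref{line:savecolors} leaves the agent towards an already-coloured neighbour and is therefore the predecessor of no vertex. Following predecessors thus strictly decreases the current colour until it equals $1$, forcing the agent to be at the start vertex, where it terminates. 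The main obstacle is the structural claim of the second paragraph — that a vertex coloured in the branch of line~\ref{line:savecolors} is never revisited — because without it the agent might reach such a vertex with all neighbours coloured but none of colour $c(v)-1$ and terminate there by mistake; ruling this out is precisely what makes the colour-saving rule of line~\ref{line:savecolors} compatible with correctness.
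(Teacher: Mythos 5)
Your proof is correct and rests on the same underlying idea as the paper's (much terser) argument: after full exploration only backtracking or termination is possible, the colour strictly decreases along backtracking moves, and the start vertex is the unique place where the agent can stop; your structural claim that vertices coloured in the branch of line~\ref{line:savecolors} are never revisited is exactly what the paper leaves implicit behind the phrase ``except leaves.'' The only step you leave unjustified is the final ``forcing the agent to be at the start vertex'': you need that no vertex other than the start vertex that is coloured by line~\ref{line:placepebbles} can carry colour $1$. This is immediate---any such vertex is first entered from its already coloured predecessor, so the maximum in line~\ref{line:placepebbles} is at least $1$ and the assigned colour at least $2$---and the same observation is what underwrites your third paragraph, namely that a maximum-attaining neighbour is already coloured (hence keeps colour $c(v)-1$ forever) and that $c(v)-1\ge 1$; note that for the start vertex itself the opening claim of that paragraph fails (its maximum-attaining neighbours were uncoloured at the time), which is harmless only because you handle the start vertex separately, so it is worth stating explicitly that the paragraph concerns non-start vertices.
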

	\begin{proof}	
		The agent cannot get stuck since, for every vertex $v$ except the start vertex and except leaves, there is always a neighbor $v'$ with color $c(v')=c(v)-1$.
		Once every vertex is explored, the start vertex has no such neighbor, hence the algorithm terminates in such a case in the start vertex.
	\end{proof}
	Together with the obvious fact that 
	\textscalt{DepthFirstSearch} uses at most $n-1$ pairwise 
	different colors since no vertex is colored twice and since the 
	last vertex that is colored has no uncolored neighbors, these 
	lemmas prove the correctness of \textscalt{DepthFirstSearch}:
	\begin{theorem}
		\textscalt{DepthFirstSearch} is correct and uses at most 
		$n-1$ colors.
	\end{theorem}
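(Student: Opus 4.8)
The plan is to read off the theorem by bundling the three lemmas already established for \textscalt{DepthFirstSearch} and then adding a short counting argument for the color bound. The work will be almost entirely organizational.

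First I would unpack ``correct'' into its two halves: for every graph and every adversary, the agent visits every vertex in finitely many steps, and it finishes by executing \textscalt{Stop} on the start vertex. The ``visits every vertex'' part is exactly \cref{lem:dfsexploresallvertices}, and the ``finishes on the start vertex'' part is the last of the three preceding lemmas. To these I would only add the observation that the run is finite: by \cref{lem:directpredecessors} every application of line~\ref{line:backwards} goes to the \emph{unique} predecessor, so the adversary is never given a real choice, and the agent performs a genuine depth-first traversal of a spanning tree rooted at the start vertex. Each tree edge is then used at most once in each direction, so the whole run lasts $O(n)$ steps. So far this is pure bookkeeping on top of the lemmas, and I anticipate no difficulty here.

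For the color bound I would flesh out the remark already made in the text. No vertex is ever recolored, because a vertex may be colored only while it still carries color~$0$; and every vertex, being visited, is colored in one of the two \textbf{if} branches at the top of the algorithm. Hence the colors assigned during the run form a multiset of size exactly $n$, and it suffices to exhibit two coinciding entries. Consider the last vertex $v$ to receive a color; such a $v$ exists because the start vertex is colored already in the first step (we take $n \ge 2$ throughout, the single-vertex case being degenerate). When $v$ was colored it had no uncolored neighbor, since any such neighbor would be visited --- hence colored --- strictly later, contradicting the maximality of $v$. Therefore $v$ fell into the first case of \textscalt{DepthFirstSearch} and received $\max_{v' \in N(v)} c(v')$, the color of one of its already-colored neighbors. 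Thus at most $n-1$ distinct colors occur, as claimed.

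I do not expect a genuine obstacle at this point: the demanding result of the section is \cref{lem:directpredecessors}, whose uniqueness statement is what makes the whole depth-first picture coherent, and once it is available the present theorem is just a matter of invoking the definitions and performing the count. The one place that warrants a sentence of care is the claim, used in the color bound, that some vertex really is colored through the first case of the algorithm --- but this is immediate, since the agent cannot terminate before it has reached a vertex all of whose neighbors are already colored.
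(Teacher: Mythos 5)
Your proposal matches the paper's own argument: the paper likewise derives correctness by combining \cref{lem:directpredecessors}, \cref{lem:dfsexploresallvertices}, and the return-and-terminate lemma, and obtains the color bound from exactly your observation that no vertex is recolored and the last vertex to be colored has only colored neighbors, so it reuses the maximal neighboring color. Your extra finiteness bookkeeping is a harmless (and slightly more explicit) addition to what the paper leaves implicit.
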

	
	\subsection{Lower Bound}\label{sec:generalgraphs_lowerbound}
	At first glance and having seen that 
	\textscalt{TreeExploration} only uses three colors, we are 
	tempted to think that \textscalt{DepthFirstSearch} uses 
	unnecessarily many colors.
	However, we are going to show in the following that 
	\textscalt{DepthFirstSearch} is optimal in our model.
	Hence, our perspective changes.
	We now deal with an unknown algorithm of which we only know that it successfully explores all graphs.
	In order to show a lower bound, we have to define a graph and the decisions of an adversary such that every algorithm with a limited amount of colors fails to explore our graph given the decisions of our adversary.
	
	We start with an easy observation and two technical lemmas.
	These lemmas will be crucial in creating lower bounds for our model.
	They will allow us to define an instance and then argue that the agent has to take a certain path in this instance and cannot choose a different route.
	\begin{observation}\label{obs:cantignoreuncolored}
		If the agent of an algorithm \alg that successfully explores all graphs arrives at a colored vertex $v$ and there is an uncolored neighbor, then the agent has to visit an uncolored neighbor next.	
	\end{observation}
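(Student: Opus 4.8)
The plan is a proof by contradiction. Assume that \alg{} successfully explores all graphs but that, on some graph $G$ run against some adversary, there is a step at which the agent sits on a colored vertex $v$, some neighbour $w$ of $v$ is still uncolored (and hence has not yet been the current vertex), and the move dictated by \alg{} is not ``move to an uncolored neighbour''. Since $v$ already carries a color $c\ge 1$, the no-recoloring rule means \alg{}'s reaction at $v$ is the fixed value $\text{move}(c,E)$, where $E$ is the multiset of colors on $v$'s neighbours; because this value is, by assumption, not a move to a color-$0$ neighbour, it is either \textscalt{Stop} or a move to a neighbour of some color $d\ge 1$. I will derive a contradiction by exhibiting a graph on which \alg{} provably fails.

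Let $t$ be the step in question, let $V_t$ be the set of vertices that have been the current vertex up to and including step $t$ (a connected set containing $v$ but not $w$), and let $\hat G_0$ be the graph with vertex set $N[V_t]$ (the set $V_t$ together with all its neighbours in $G$) and with edge set consisting of all edges of $G$ having at least one endpoint in $V_t$. The key point is that $N_{\hat G_0}(u)=N_G(u)$ for every $u\in V_t$: every $G$-neighbour of $u$ is reached by an edge with endpoint $u\in V_t$, hence retained. Running \alg{} on $\hat G_0$ against the adversary that copies the choices made on $G$, an induction on steps therefore shows that up to step $t$ the agent visits exactly the vertices of $V_t$, in the same order, with the same colorings: at each such vertex the environment depends only on $V_t$ and its neighbours, and any neighbour lying outside $V_t$ is uncolored in both graphs because a colored vertex must previously have been the current vertex. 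In particular the agent reaches $v$ at step $t$ in $\hat G_0$ with environment $(c,E)$ and performs the same forbidden action. If that action is \textscalt{Stop}, the run on $\hat G_0$ halts without ever visiting $w\in V(\hat G_0)\setminus V_t$, contradicting correctness on $\hat G_0$.

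It remains to handle the case ``move to a neighbour of color $d$''. Here I would further modify $\hat G_0$ into a graph $\hat G$ in which every uncolored neighbour of $v$ is a pendant leaf attached only to $v$: detach each such neighbour $w_i$ from every vertex of $V_t$ other than $v$, and, to keep environments unchanged, attach to each $u\in V_t\setminus\{v\}$ one brand-new pendant leaf for every removed edge $\{u,w_i\}$. Since the $w_i$ are unvisited in the original run, the mimicking adversary never routed the agent onto them, and one checks that \alg{} still traces the same walk up to step $t$ (the new pendants carry color $0$ throughout that prefix, exactly as the detached $w_i$ did). Now, at step $t$ the agent is on $v$ with environment $(c,E)$ and steps onto a color-$d$ neighbour, which exists and lies in $V_t$. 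From this point on the environment of $v$ can never change: its colored neighbours keep their colors, and its uncolored neighbours $w_i$ are pendant leaves reachable only through $v$, where, by \cref{obs:functionalnature}, the agent keeps seeing $(c,E)$ and the adversary can forever route its move to a color-$d$ neighbour, never onto a $w_i$; by induction the $w_i$ are never visited. Thus \alg{} again fails on $\hat G$, the final contradiction, so the agent must move to an uncolored neighbour.

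The step I expect to cause the most trouble is the bookkeeping behind ``the walk up to step $t$ is unchanged'': this is what forces the roundabout definition of $\hat G_0$ (keeping all edges incident to $V_t$, not merely the traversed ones) and the compensating pendant leaves when passing to $\hat G$ in the ``colored neighbour'' case, and the induction coupling the two runs must be carried out with care, in particular checking that the fresh pendants stay uncolored along that prefix. A secondary point worth one sentence is that the statement is really about uncolored neighbours that have not yet been visited: the model permits visiting a vertex without coloring it, so ``uncolored'' is read as ``not yet the current vertex'' here, which is how the argument uses it; otherwise one must additionally exclude the harmless degenerate case in which $v$ is the start vertex, every vertex has already been visited, and the agent legitimately stops.
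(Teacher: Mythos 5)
Your proposal is correct and follows essentially the same route as the paper: the paper's two-sentence argument is exactly the counterfactual that the uncolored neighbours ``might all be leaves'' attached only to $v$, combined with \cref{obs:functionalnature} to conclude the agent would then never visit them, which you formalize via the explicit surgery ($\hat G_0$, pendant leaves with compensating pendants) and the coupling induction, additionally covering the \textscalt{Stop} case. Your closing caveat—that ``uncolored'' is effectively treated as ``not yet visited''—matches the implicit reading in the paper's own proof, so it does not constitute a deviation.
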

	\begin{proof}
		The uncolored neighbors might all be leaves in the input graph and if the agent decides to go to a colored neighbor, then it will always take this decision when located on $v$ and never explore the uncolored leaves, which could only be explored via a move from $v$.
		
		In order to successfully explore input graphs where the uncolored neighbors are all leaves, \alg thus has to prefer uncolored neighbors over colored neighbors.
	\end{proof}
	\begin{lemma}[General Algorithm Cannot Go Back Unless Necessary]\hfill\label{lem:cantgoback}
		Let \alg be an algorithm that successfully explores all graphs. If the agent of \alg is on a vertex $v$ with only one colored neighbor $p$, which is its predecessor, and at least one uncolored neighbor $u$, and if $p$ has an uncolored neighbor $v' \neq v$, then the algorithm has to choose to go to an uncolored neighbor of $v$.
	\end{lemma}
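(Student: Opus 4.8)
The plan is a proof by contradiction. Suppose \alg, placed in the situation described at $v$, does not move to an uncolored neighbor of $v$; since $p$ is the only colored neighbor of $v$, the agent then moves to $p$. If $v$ had already carried a color when the agent entered it, Observation~\ref{obs:cantignoreuncolored} would immediately force a move to an uncolored neighbor, contradicting this assumption; so I may assume the agent has just entered $v$ for the first time, necessarily from its predecessor $p$, and that $v$ is still uncolored. The agent's action now splits into two cases according to whether or not it colors $v$ before moving, and in both I would invoke the functional nature of \alg (Observation~\ref{obs:functionalnature}): the move depends only on the color of $v$ and the multiset of neighbor colors, so the adversary may repeat its choices on identical configurations and the agent repeats its moves.

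If the agent colors $v$ with some color $a \ge 1$ and then moves to $p$, then at the instant of the move $v$ is a colored vertex with the uncolored neighbor $u$, and the agent moves to a colored neighbor. The argument underlying Observation~\ref{obs:cantignoreuncolored} then applies verbatim — a freshly colored vertex keeps its color forever — so on an input graph in which every uncolored neighbor of $v$ is a leaf, hence reachable only through $v$, none of these leaves would ever be visited, contradicting that \alg explores all graphs. Hence the agent does not color $v$, and $v$ stays uncolored after this step.

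It remains to treat the case that the agent steps from the uncolored $v$ to $p$. I would fix an input graph in which every uncolored neighbor of $v$ — in particular $u$ — and also $v'$ are leaves; since all of these vertices are currently uncolored and unvisited, replacing whatever was attached to them by single leaves does not affect the run up to the moment the agent arrives at $v$, so the agent still reaches $v$ in the stated configuration and, by Observation~\ref{obs:functionalnature}, still moves to $p$. Now the agent is on the colored vertex $p$, which has an uncolored neighbor — namely $v'$, and also $v$, which the agent did not color; by Observation~\ref{obs:cantignoreuncolored} it therefore moves to an uncolored neighbor of $p$, and since $v$ is uncolored the adversary may send the agent back to $v$. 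The agent is then once more at $v$ in exactly the original configuration, hence moves to $p$ again, and so on without end. This oscillation $v \to p \to v \to p \to \cdots$ leaves the leaf $u$ forever unexplored and never terminates, which is the desired contradiction. Consequently \alg must move to an uncolored neighbor of $v$.

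I expect the delicate point to be the first case — showing that merely \emph{coloring} $v$ and stepping back to $p$ is already fatal — which is why the plan handles it through Observation~\ref{obs:cantignoreuncolored} applied to the freshly colored $v$ rather than through the loop argument of the second case. The other thing to get right is the legitimacy of the witness graph: one must only change parts of the instance that the agent has not yet inspected, so that the run leading into the configuration is untouched; and it is worth noting that the hypothesis that $p$ has an uncolored neighbor $v' \ne v$ is precisely what keeps Observation~\ref{obs:cantignoreuncolored} applicable at $p$, and hence the adversary in control there.
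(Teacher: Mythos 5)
Your treatment of the case where the agent leaves $v$ uncolored and steps to $p$ is correct and matches the paper: \cref{obs:cantignoreuncolored} applied at $p$, the adversary returning the agent to $v$, and the unchanged environment at $v$ give the endless oscillation.

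The gap is in your first case, which you yourself identify as the delicate one. When the agent colors $v$ with $a\ge 1$ and steps back to $p$, that whole action is the single value $\text{move}(0,E)$: the decision was taken while $v$ still had color $0$. On any later visit to $v$ the input is $(a,E)$, a \emph{different} element of $\mathcal{E}$, so the repetition argument behind \cref{obs:cantignoreuncolored} does not apply ``verbatim.'' Nothing forces the algorithm to repeat the move to $p$; indeed a correct algorithm, upon returning to the now-colored $v$, would walk into the uncolored leaves, so in your witness graph the leaves behind $v$ are not doomed at all and no contradiction follows. In short, ``color $v$ and step back to $p$'' is not by itself fatal, and this is precisely why the lemma's hypothesis that $p$ has a second uncolored neighbor $v'$ is needed in this case (you invoke $v'$ only in the uncolored case, where it is not essential).

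The paper closes this case with an indistinguishability construction that your proposal is missing: make $v'$ a copy of $v$ (same number of uncolored leaf neighbors, attached to nothing else). Whenever the agent at $p$ heads for an uncolored neighbor, the adversary serves $v'$; since $v'$ then presents exactly the environment $v$ presented earlier, \cref{obs:functionalnature} forces the agent to give $v'$ the same color $a$ and to return to $p$ without visiting $u'$. Now $v$ and $v'$ carry the same color, so from $p$ the agent can never tell them apart, and once the relevant decisions at $p$ recur the adversary always routes the agent to the same one of the two; consequently either $u$ or $u'$ is never explored, or the agent never terminates. You would need to add this (or an equivalent) cloning argument to make the first case go through.
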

	
	\begin{proof}
		First, note that this situation is more delicate than the situation of \cref{obs:cantignoreuncolored} since we do not assume that $v$ is colored. Assume towards contradiction that there is an algorithm where the lemma is not true. In other words, assume that there is an algorithm \alg that explores all graphs and where there is a graph with a vertex $v$, a colored predecessor $p$ as its only colored neighbor and at least one uncolored neighbor $u$ and assume that $p$ has at least one uncolored neighbor $v' \neq v$ and \alg does not choose to go to one of the uncolored neighbors while on $v$.
		
		We change the part of the graph on which \alg runs such that $v'$ has as many uncolored neighbors as $v$. Moreover, we assume that all the uncolored neighbors of $v$ and $v'$ are leaves and there are no paths from the rest of the graph to either $v$ or $v'$. This situation is depicted in \cref{fig:cantgoback}.
		
		\begin{figure}[htbp]
			\begin{center}
				\begin{tikzpicture}[x=0.8cm,y=0.8cm,vertex/.style={draw,circle, inner sep=.0pt, minimum size=0.8cm}]
					\node[vertex] (a) at (0,0) {$p$};
					\node[vertex] (b1) at (1.5,0) {$v$};
					\node[vertex] (b2) at (0,-1.5) {$v'$};
					\node[vertex] (c11) at (3,0) {$u$};
					\node[vertex] (c21) at (0,-3) {$u'$};
					
					\draw (a) -- (b1) -- (c11);
					\draw (a) -- (b2) -- (c21);
				\end{tikzpicture}
			\end{center}
			\caption{If the graph that \alg processes contains this structure, \alg cannot explore it successfully.}\label{fig:cantgoback}
		\end{figure}
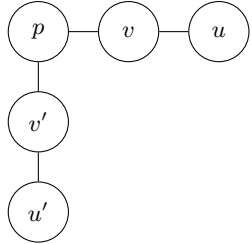
		
		By assumption, \alg goes back to $p$ from $v$.
		If \alg has not colored $v$, then it will again go to $v$ from $p$ and there is an adversary such that \alg never terminates. Hence, \alg colors $v$. At some point, \alg has to visit $v'$, which is only possible from $p$. 
		Whenever the algorithm now wants to visit some uncolored neighbor in $p$, the adversary sends the agent to $v'$.
		When the agent arrives in $v'$, it is in the same situation as before when it was on $v$. Therefore, \alg assigns to $v'$ the same color as to $v$ and then it goes back to $p$. Now, however, due to \cref{obs:functionalnature}, if \alg decides at some point to go to $v$ or to $v'$ from $p$, \alg will, when in $p$, always go to $v$ or to $v'$ and will, therefore, either not be able to explore $u$ and $u'$ or will not return to the start vertex and terminate.
	\end{proof}	
	
	\begin{lemma}\label{lem:cantleaveuncolored}
		Let \alg be an algorithm that successfully explores all graphs. Assume the agent has just moved from a (now) colored vertex $p$ to an uncolored vertex $v$, and that $p$ has at least one other uncolored neighbor $v'$. Then the agent must color $v$ now.
	\end{lemma}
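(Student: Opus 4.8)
The plan is to argue by contradiction: assume \alg successfully explores all graphs, yet there is a reachable configuration of the hypothesised form — agent just moved from a colored vertex $p$ to an uncolored vertex $v$, with $p$ owning a further uncolored neighbor $v'$ — in which \alg does \emph{not} color $v$. Since the action \alg takes at $v$ is a function of the color structure it sees there, I would then build a single bad graph $G^\ast$ that both realises such a configuration and traps \alg. The gadget mirrors the one in \cref{fig:cantgoback}: a vertex $p$, reached from the start vertex along a path and colored by the time the agent leaves it, together with two \emph{twin} neighbors $v$ and $v'$ whose pendant parts are isomorphic. It suffices to hang the same number of fresh pendant leaves on $v$ and on $v'$ — matching the number of uncolored neighbors of $v$ in the configuration we are refuting, and zero leaves if that configuration has $v$ as a leaf. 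Then $v$ and $v'$ present identical color structures whenever they are entered from $p$, so by \cref{obs:functionalnature} \alg reacts to both in exactly the same way; in particular, if \alg declines to color $v$ on entry from $p$, it also declines to color $v'$ on entry from $p$.

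The first case is that \alg, at the uncolored $v$ reached from $p$, either terminates or moves straight back to $p$. If $v$ has an (uncolored, pendant) neighbor, this is already excluded by \cref{lem:cantgoback}. If $v$ is a leaf, I argue directly: terminating is wrong since the agent is not on the start vertex and $v'$ is still unvisited; and moving back to $p$ returns the agent to $p$ with both $v$ and $v'$ still uncolored, whereupon \cref{obs:cantignoreuncolored} forces it to an uncolored neighbor of $p$. The adversary sends it to $v$ again, the color structure at $v$ is unchanged, and \cref{obs:functionalnature} makes \alg repeat the move forever, so $v'$ is never visited.

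The remaining case is that \alg moves from the uncolored $v$ into one of its pendant leaves without first coloring $v$. Here the adversary always feeds the agent a previously unseen leaf of $v$ while one exists. Whenever the agent is back at the still-uncolored $v$, either it again declines to color $v$ — and I close the loop exactly as above, so the agent oscillates between $v$ and its leaves forever while $v'$ stays unvisited — or at some later step it does color $v$. In the latter sub-case the symmetry does the work: since $p$ retains an uncolored neighbor, \cref{obs:cantignoreuncolored} keeps pulling the agent back to $p$ and, by indistinguishability, the adversary eventually routes it into $v'$; by \cref{obs:functionalnature} the agent then replays the same sequence of moves from $v'$, leaving $v'$ colored with the same color as $v$ and with an isomorphically colored neighborhood. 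Once $v$ and $v'$ are indistinguishable copies of each other, the adversary can swap them whenever the agent wants to move to a neighbor of that color, so the agent can never progress back toward the start vertex and never terminates correctly.

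The main obstacle is precisely this last sub-case — ruling out that \alg ``recovers'' by coloring $v$ only after exploring its pendant part. It fails because a late coloring of $v$ still leaves $v$ and $v'$ identical for the adversary, and, more basically, as long as the region of $v$ and $v'$ contains an uncolored vertex, \cref{obs:cantignoreuncolored} drags the agent toward that region rather than toward the start vertex. Making this trapping argument airtight — choosing $G^\ast$ so that the offending configuration provably occurs (so that $p$ is colored and $v'$ unvisited at the right moment, and $v$ has only $p$ among its colored neighbors), and bookkeeping the colors assigned inside the two pendant parts so that they really do end up isomorphic — is the step demanding the most care; everything else reduces to the two observations and \cref{lem:cantgoback}.
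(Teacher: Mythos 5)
There is a genuine gap, and it stems from your choice of gadget. You attach $v$ and $v'$ to $p$ as \emph{non-adjacent} twins with pendant leaves, whereas the paper's proof makes $p$, $v$, $v'$ a \emph{triangle} (with $v'$ having no further neighbors) and keeps any colored neighbors $c_1,\dots,c_k$ of $v$ in the picture (\cref{fig:cantleaveuncolored}). This matters twice. First, the decision not to color $v$ is a function of the exact color structure at $v$; the offending configuration may have additional colored neighbors besides $p$ and a particular number of uncolored neighbors. Your $G^\ast$, in which $v$ sees only $p$ and fresh leaves, need not reproduce that environment, so \alg is not forced to repeat the bad decision on $G^\ast$ at all — and you explicitly restrict to ``$v$ has only $p$ among its colored neighbors,'' which is a special case of the lemma, not the lemma (note that the lemma is later invoked in situations, e.g.\ the clique construction for non-uniform algorithms, where $v$ has many colored neighbors). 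Correspondingly, the case in which \alg leaves the uncolored $v$ towards a colored neighbor different from $p$ — the third case in the paper's proof, handled by arguing that $v'$ can only be reached via $v$ or $p$ and that the adversary re-enters $v$ with an unchanged environment — never appears in your case analysis.

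Second, your ``late coloring'' sub-case does not yield a contradiction in your gadget. If $v$ and $v'$ carry only pendant leaves, an algorithm that declines to color $v$ on arrival, inspects and colors some leaves, and only then colors $v$ (and later replays the same pattern at $v'$) can perfectly well visit every vertex, return to the start vertex via $p$'s differently colored predecessor, and terminate; the fact that the adversary can swap the identically colored twins $v$ and $v'$ does not by itself block progress towards the start once their subtrees are explored, so your final trapping claim is unjustified. The paper sidesteps this entirely with the triangle: there, ``go to an uncolored neighbor of $v$'' can be redirected by the adversary to $v'$ itself, whose environment (after making $v'$'s neighborhood match $v$'s) equals the one just seen at $v$, and by \cref{obs:functionalnature} the agent ping-pongs between the still-uncolored $v$ and $v'$ with unchanged environments — an immediate endless loop, with no need to reason about what happens after a delayed coloring. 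Your first case (back to $p$, or terminate) is fine and matches the paper's use of \cref{lem:cantgoback} and \cref{obs:cantignoreuncolored}, but as it stands the proposal proves only a weakened statement and leaves its hardest branch open.
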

	\begin{proof}
		Assume towards contradiction that there is an algorithm \alg and a graph $G$ with vertices $p$, $v$ and $v'$ such that the described situation arises but $v$ is not colored.
		By changing the graph $G$, we may assume that $p$, $v$, and $v'$ form a triangle and $v'$ is not connected to any other vertices. Moreover, assume that all uncolored neighbors of $v$ apart from $v'$ are leaves and call them $u_1$ to $u_r$, for some $r \in \N$. Denote by $c_1$ to $c_k$, for some $k\in \N$, the colored neighbors of $v$ apart from its predecessor. This situation is depicted in \cref{fig:cantleaveuncolored}.
		
		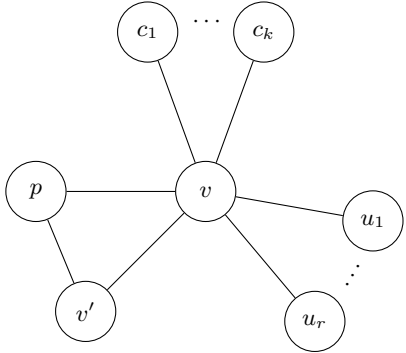
\begin{figure}[htbp]
			\begin{center}
				\begin{tikzpicture}[x=0.8cm,y=0.8cm,vertex/.style={draw,circle, inner sep=.0pt, minimum size=0.8cm}]
					\node[circle, minimum size = 4.5cm] (c) at (0,0){};
					\node[vertex] (p) at (c.west) {$p$};
					\node[vertex] (v) at (c.center) {$v$};
					\node[vertex] (v2) at (c.south west) {$v'$};
					\node[vertex] (u1) at (c.-10) {$u_1$};
					\node[vertex] (ur) at (c.-50) {$u_r$};
					\node[vertex] (c1) at (c.110) {$c_1$};
					\node[vertex] (ck) at (c.70) {$c_k$};
					
					\draw (p) -- (v) -- (v2) -- (p);
					\draw (v) -- (u1);
					\draw (v) -- (ur);
					\draw (v) -- (c1);
					\draw (v) -- (ck);
					
					\node at (c.89) {$\cdots$};
					\node[rotate=60] at (c.-29) {$\cdots$};	
				\end{tikzpicture}
			\end{center}
			\caption{If the vertices $p$, $v$, $v'$ are arranged in a triangle, \alg cannot explore the graph successfully.}\label{fig:cantleaveuncolored}	
		\end{figure}
		
		When \alg leaves $v$, it has three options. Either \alg goes back to $p$, or \alg goes to an uncolored neighbor, or \alg goes to some other colored neighbor $c$. 
		The first option is prohibited by \cref{lem:cantgoback}.
		In case of the second option, the adversary can send \alg to $v'$. We change the graph such that $v'$ has the same neighbors as $v$. Then, the input for \alg will be the same as in the step before, resulting in an endless loop.
		In case of the third option, \alg has to return at some point to $v$ or to $p$ since $v'$ has not been visited yet. 
		Consider the first time when \alg returns to $v$ or to $p$. 
		When \alg wants to visit $v'$ from $p$, the adversary sends \alg to $v$. 
		When \alg is in $v$, the situation looks exactly like before; hence, \alg will again choose the third option and go to the same colored neighbor $c$. 
		Therefore, \alg never visits $v'$ and thus does not successfully explore all graphs, contradicting the assumption. 
	\end{proof}
	
	Before we now state and prove that \textscalt{DepthFirstSearch} 
	uses the minimal number of colors necessary 
	in general, we prove the statement separately for graphs of size $2$ and for graphs of size $3$.
	In fact, we prove a stronger statement, namely without the condition that the algorithm needs 
	to explore all graphs. Hence, we can even assume that the algorithm knows the size of the graph.
	\begin{lemma} \label{lem:nonuniform_lowerbound_size-2-and-3}
		For $k \in \{2,3\}$, the following holds: For every algorithm \alg that successfully explores all graphs of size $k$, there is a graph $G$ of size $k$ such that \alg uses at least $k-1$ colors to explore $G$.
	\end{lemma}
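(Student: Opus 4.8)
The plan is to treat the two cases $k=2$ and $k=3$ by direct, self-contained arguments; crucially, here we may \emph{not} invoke \cref{obs:cantignoreuncolored,lem:cantgoback,lem:cantleaveuncolored}, since all three presuppose correctness on every graph, whereas now the algorithm is only required to be correct on graphs of size $k$. The single tool we retain is \cref{obs:functionalnature}: since the agent's action depends on the environment alone, a non-\textscalt{Stop} move on a vertex whose closed neighbourhood never changes its colours lets the adversary send the agent around a loop forever, and if $\text{move}$ outputs \textscalt{Stop} on some environment $e$, then the very first vertex ever exhibiting $e$ makes the agent halt there.

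For $k=2$ the only connected graph of size $2$ is a single edge $\{v_0,v_1\}$ with, say, start vertex $v_0$. If the agent never assigns a colour, then throughout the run both endpoints always present the environment ``own colour $0$, one neighbour of colour $0$''; hence the agent takes the same action on both, so it either halts immediately on $v_0$ (leaving $v_1$ unvisited) or it moves on both and shuttles between $v_0$ and $v_1$ forever. Either way it fails, so it must colour a vertex, i.e., it uses at least $1=k-1$ colour on this graph.

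For $k=3$ I would use the triangle $K_3$ (being vertex-transitive, no particular start vertex need be singled out) and show that no algorithm correct on $K_3$ explores it with at most one colour, so that $k-1=2$ colours must be used on $K_3$. Assume such an \alg existed. Every vertex of $K_3$ has degree $2$ and, by assumption, only colours $0$ and $1$ ever occur, so \alg's behaviour is fixed by the value of $\text{move}$ on the six environments $(c_0,\{d_1,d_2\})$ with $c_0\in\{0,1\}$ and $\{d_1,d_2\}$ a size-two multiset over $\{0,1\}$; moreover the no-recolouring rule forces the move on an own-colour-$1$ environment to keep the colour, and on ``own colour $0$, both neighbours colour $0$'' (respectively ``own colour $0$, both neighbours colour $1$'') the move can only head to a colour-$0$ (respectively colour-$1$) neighbour, so little is free. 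First, on the start vertex the environment is ``own colour $0$, both neighbours colour $0$''; the move there is not \textscalt{Stop} (two vertices are unvisited), and it must colour the start vertex, for otherwise no vertex is ever coloured, this environment persists everywhere, and the adversary keeps the agent oscillating between two of the three vertices forever. Hence after the first step exactly one vertex carries colour $1$ and the agent stands on an uncoloured vertex with one colour-$0$ and one colour-$1$ neighbour. From here I would split into the at most four possibilities for the move on that environment --- whether it colours the current vertex, and whether it heads for the colour-$0$ or the colour-$1$ neighbour --- and, tracking the colours step by step, show in each branch that within a few steps the agent either returns to a vertex whose neighbourhood has not changed colours (so the adversary loops it), or halts on a non-start vertex, or reaches the all-colour-$1$ triangle, in which the adversary can route it around a $3$-cycle or a $2$-cycle forever. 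In every branch \alg fails, contradicting correctness on $K_3$, so \alg uses at least $2=k-1$ colours on $K_3$.

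The step I expect to be the main obstacle is precisely this branching on $K_3$: because the colour a vertex presents depends on which of its neighbours have already been coloured, one and the same vertex can show different environments at different times, so one must keep careful track of the few colourings it can currently be in. The remedy is to push the forced moves as hard as possible --- the first move is completely determined, own-colour-$1$ moves cannot recolour, and the ``both neighbours colour $1$'' move has a forced target --- which collapses the case tree to a handful of short traces, each terminating in a loop or a halt at the wrong vertex.
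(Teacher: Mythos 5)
Your choice of instances and tools coincides with the paper's: the single edge for $k=2$, the triangle for $k=3$, and nothing beyond the functional nature of the agent (\cref{obs:functionalnature}) together with an adversary --- correctly so, since the general-purpose lemmas presuppose correctness on all graphs. Your $k=2$ argument is complete. The gap is in the $k=3$ case: the entire burden of the proof is placed on the case split over the move on the environment $(0,\{0,1\})$ (plus the environments it spawns), and you explicitly defer that analysis, merely asserting that ``in every branch \alg fails.'' As submitted, the decisive half of the lemma is a plan, not an argument. For the record, the branching does close: \textscalt{Stop} on that environment halts at a non-start vertex with the third vertex unvisited; the two non-colouring branches loop at once, because the environment $(0,\{0,1\})$ reproduces itself at the next vertex (either at the third vertex, or at the second one after bouncing off the start); and the two colouring branches funnel into the environments $(0,\{1,1\})$ and $(1,\{0,1\})$, where either the last vertex stays uncoloured and the adversary alternates it forever with the symmetric coloured vertex, or the triangle becomes monochromatic, after which the agent can never identify the start and either halts wrongly or walks forever. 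So your route works, but only once this tracing is actually written out.

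The paper avoids the enumeration with a shorter, targeted argument of the same flavour as \cref{thm:trees_lower_bound}: after forcing the start vertex to be coloured, it shows (i) the second visited vertex must also be coloured, since otherwise the adversary swaps it with the still-unvisited third vertex and loops the agent, and (ii) its colour must differ from the start's, since otherwise, on the way to the third vertex, start and second vertex are indistinguishable and the adversary can prevent termination on the start. With only one colour available, (i) and (ii) are already contradictory, so no step-by-step trace of the transition function is needed. If you finish your case analysis your proof is valid, but extracting these two necessary facts is both shorter and less error-prone than enumerating the move table.
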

	\begin{proof}
		Clearly, an algorithm needs at least $1$ color to explore a path of two vertices.
		
		For graphs on three vertices and $2$ colors, 
		consider the exploration of a triangle with vertices $x, y, z$. 
		Without loss of generality, assume the agent starts on $x$.
		If the agent does not color $x$ and visits the next vertex, it is caught in an endless loop
		and will always leave the current vertex uncolored and go to the next vertex.
		Therefore, $x$ receives some color $c$ and the agent continues to the next vertex,
		which we call $y$.
		We show now first that the agent assigns some color to $y$ and
		second that the assigned color cannot be $c$.
		
		First, if the agent does not color $y$ and goes back to $x$, the agent then has to visit $z$.
		However, since the agent cannot distinguish $y$ from $z$, the adversary can arrange the ordering
		of the vertices in such a way that the agent visits $y$ instead of $z$, whereupon the agent is
		stuck in an endless loop. Hence, the agent has to color $y$.
		
		Second, if the agent colors $y$ with the same color as $x$ 
		and then 
		continues to $z$---either directly or via $x$---, the agent then cannot distinguish $x$ from $y$. 
		Hence, the adversary can arrange the ordering of the vertices such that the agent will not
		terminate on the start vertex. 
	\end{proof}

	\begin{theorem}\label{thm:generallowerbound}
		For every algorithm \alg that successfully explores all graphs and for every natural number $n$ there is a graph $G$ of size $n$ such that \alg uses at least $n-1$ colors to explore $G$.
	\end{theorem}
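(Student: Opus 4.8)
The plan is to fix an arbitrary algorithm \alg that successfully explores all graphs and, for $n\ge 4$ (the cases $n\in\{2,3\}$ being covered by \cref{lem:nonuniform_lowerbound_size-2-and-3} and $n=1$ being trivial), to build a graph $G$ on $n$ vertices \emph{adaptively}: we reveal $G$ vertex by vertex while simulating \alg against a suitable adversary, and we argue that \alg is forced to assign $n-1$ pairwise distinct colors. The skeleton of $G$ is a ``fan'': one hub vertex $r$ together with $n-1$ further vertices $v_1,\dots,v_{n-1}$ that form a path, with $r$ adjacent to all of them; $v_1$ is the start vertex. The role of the hub is precisely to keep an uncolored vertex available in the neighborhood of every path vertex, so that the hypotheses of \cref{lem:cantgoback} and \cref{lem:cantleaveuncolored} stay satisfied throughout. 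Such a graph has circumference $n$; as the later results on small-circumference classes show, a bound of order $n$ is impossible unless the longest cycle is long, so some hub-like device of this kind is essentially forced on us.

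The first step is the \emph{forced traversal}. The adversary keeps the agent on the path and moving forward: whenever \alg wants to move to an uncolored neighbor, there are exactly two candidates, the next path vertex and the hub $r$, and the adversary always picks the next path vertex. Using \cref{obs:cantignoreuncolored}, \cref{lem:cantgoback}, and \cref{lem:cantleaveuncolored} --- the predecessor $v_i$ of $v_{i+1}$ always still has the uncolored neighbor $r$, and $v_{i+1}$ is always handed a fresh uncolored neighbor --- one shows, exactly as in the proofs of those lemmas, that \alg has no choice but to walk $v_1\to v_2\to\cdots\to v_{n-1}$ and to color each $v_i$ on first arrival. That $v_1$ itself must be colored is the same argument as in \cref{lem:nonuniform_lowerbound_size-2-and-3}: otherwise \alg loops without ever coloring anything. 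In particular all $n-1$ path vertices receive a positive color, so it only remains to show that these colors are pairwise distinct.

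The heart of the proof is the \emph{distinctness} step, which I would argue by showing that the first color reuse hands the adversary a win. Suppose that when \alg colors $v_{i+1}$ it uses a color $c$ that already appears on some earlier $v_\ell$. Since $G$ is not yet fully revealed, we complete it so that $v_{i+1}$ becomes a faithful \emph{twin} of $v_\ell$: the remaining adjacencies and colors are chosen so that, at the moment \alg finally wants to terminate on what it believes to be the start vertex, $v_\ell$ and $v_{i+1}$ present \alg with the very same environment $(c_0,E)$. By \cref{obs:functionalnature}, \alg then acts identically on $v_\ell$ and on $v_{i+1}$; in particular, on the last step the adversary can route the agent to the twin instead of to $v_1$, so \alg terminates off the start vertex (or, if it never commits to stopping, never terminates at all) --- contradicting the correctness of \alg. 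Hence $c(v_{i+1})$ is always a new color, and inductively $c(v_1),\dots,c(v_{n-1})$ are $n-1$ distinct colors; therefore \alg uses at least $n-1$ colors on $G$.

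\textbf{The main obstacle} I anticipate is this twin construction: one must complete the still-undetermined part of $G$ so that it remains a legal $n$-vertex graph of the intended shape (connected, hub still uncolored, and --- importantly --- of large circumference, since small-circumference graphs provably do not need $\Omega(n)$ colors) while simultaneously guaranteeing that $v_\ell$ and $v_{i+1}$ become genuinely indistinguishable to \alg at the decisive moment; and all of this has to be realized using only the adversary's actual powers, namely the choice of the start vertex and tie-breaking among equally colored neighbors. Care is also needed in the endgame, when only few vertices remain to be revealed, so that there is still enough room to build the confusing gadget. Once these points are handled, the rest of the argument is just the three technical lemmas above together with \cref{obs:functionalnature}.
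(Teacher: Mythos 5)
Your setup (the forced traversal along a path whose every vertex keeps an uncolored neighbor, enforced via \cref{obs:cantignoreuncolored}, \cref{lem:cantgoback}, and \cref{lem:cantleaveuncolored}, with the fan graph---a path of $n-1$ vertices plus a hub adjacent to all of them---as the $n$-vertex witness) matches the first half of the paper's argument; the fan is exactly the paper's graph $G_1$ from \cref{fig:generallowerbound}. The genuine gap is the distinctness step, and it is precisely the point you flag as your ``main obstacle.'' You try to punish a color repetition $c(v_{i+1})=c(v_\ell)$ by completing the instance \emph{inside a legal $n$-vertex graph of the intended shape} so that the two vertices become twins at the decisive moment. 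Inside the fan there is essentially no room for this: the only vertex adjacent to both $v_\ell$ and $v_{i+1}$ is the hub, the hub sees all path vertices at once, and after the traversal there are no unexplored pendant vertices whose existence could force the agent back into a confusable configuration; moreover the reused pair is arbitrary, so routing the agent ``to the twin instead of $v_1$'' at termination time is not justified by \cref{obs:functionalnature}, which only equates decisions at vertices with identical environments. More fundamentally, the paper's own non-uniform results show that an argument confined to $n$-vertex graphs cannot force $n-1$ distinct colors: $\textscalt{DFS}_n$ explores every graph of size $n$ with $n-2$ colors, reusing a color exactly on the last vertices---the regime where your adaptive completion would have to strike and cannot.

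The paper resolves this by leaving the $n$-vertex world. It runs the adversary on the roughly twice-as-large graphs $G_{n_0,i,j}$ of \cref{fig:speciallowerbound} (a path with a pendant leaf at every vertex, where the leaves of the two same-colored vertices $v_i,v_j$ are merged into a single vertex $l_{i/j}$). There, after the forced traversal, the agent must eventually visit and color $l_{i/j}$; since $l_{i/j}$ cannot distinguish its two same-colored neighbors, the adversary sends it to $v_i$, whose closed neighborhood is then fully colored, and \cref{obs:functionalnature} closes the trap whatever the agent does next (loop to $l_{i/j}$, abandon the leaves between $v_i$ and $v_j$, or never get back past $v_i$). Hence any correct \alg must use $n_0-1$ distinct colors on its first $n_0-1$ steps on $G_{n_0,i,j}$; since the fan $G_1$ is locally indistinguishable from $G_{n_0,i,j}$ during those steps, the same count transfers to the $n$-vertex fan, giving the theorem. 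To repair your proof, keep your fan and forced traversal, but derive distinctness on this larger merged-leaf family and transfer it back, rather than attempting a twin construction within $n$ vertices; uniformity of \alg is what licenses invoking these larger instances.
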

	\begin{proof}
		For $n=1$, there is nothing to prove. For $n \in \{2,3\}$, the statement follows from \cref{lem:nonuniform_lowerbound_size-2-and-3}.
		
		For graphs of size $4$, we can use \cref{thm:trees_lower_bound} and the fact that the algorithm has to explore all graphs---and, in particular, all trees---to prove that an algorithm needs $3$ colors to explore all graphs of size $4$.
		We describe how this works: Consider a cycle of $4$ vertices, say $x, y, z, w$.
		Assume the agent starts on $x$.
		Clearly, if the agent does not color $x$ in the first step, no vertex can ever be colored.
		Assume that $x$ receives color $1$ and the agent moves to $y$.
		By \cref{lem:cantgoback}, the agent moves to $z$ next.
		If $y$ were not colored, the next vertex---which is $z$---would again receive color $1$,
		and the agent would not be able to distinguish $x$ from $z$ when standing on $w$.
		Therefore, $y$ receives color $2$ and the agent moves to $z$.
		
		By \cref{lem:cantgoback}, the agent moves to $w$ next.
		If $z$ were not colored and if the graph were a cycle on $5$ vertices with a vertex $u$ between $w$ 
		and $x$, the next vertex, $w$, would again receive color $1$ and, as before, the agent would not be 
		able to distinguish $w$ from $x$ when standing on $u$.
		Since the algorithm is a general algorithm that has to successfully explore all graphs and that does not
		know the size of the graph, vertex $z$ thus has to be colored.
		
		If $z$ received color $2$, and if the graph were a path of seven vertices---as in the proof of \cref{thm:trees_lower_bound}---in the order $w, x, y, z, u, v$, then the vertex $u$ would receive color $2$ as well and on the way back to the start vertex $x$, the agent could not distinguish between $y$ and $u$ when standing on $z$.
		Therefore, $z$ has to receive a third color.
		
		We now turn to the proof for graphs of size $5$ or larger:	
		Assume towards contradiction that there is an algorithm \alg that, for some $n_0 \ge 5$, uses at most $n_0-2$ colors on every graph of size $n_0$.
		We are going to define a family of graphs of size $2(n_0-1)-1$ and prove that there is a graph $G_0$ in this family such that \alg uses at least $n_0-1$ colors on its first $n_0-1$ steps in order to successfully explore $G_0$.
		Afterwards, we define a graph $G_1$ of size $n_0$ which locally looks, for the first $n_0-1$ steps, exactly as $G_0$. Therefore, \alg uses at least $n_0-1$ colors on $G_0$ as well.
		
		We construct our family of graphs as follows.
		Consider a graph $G$ that consists of a path $v_1,\ldots,v_{n_0-1}$ of length $n_0-1$ and where, for each $r\in [n_0-1]$, a leaf $l_r$ is connected to $v_r$.
		For fixed $i,j \in [n_0-1]$, we merge the leaves $l_i$ and $l_j$ and call the new vertex $l_{i/j}$ and the graph $G_{n_0,i,j}$.
		Such a graph is depicted in \cref{fig:speciallowerbound}.

		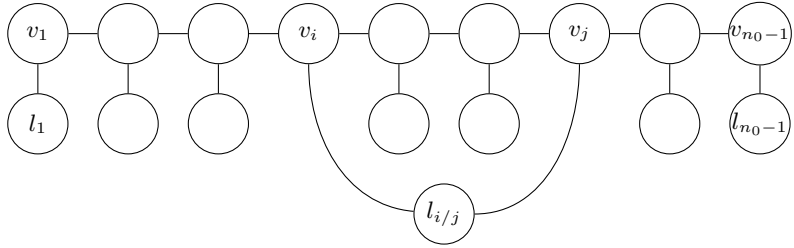
\begin{figure}[htbp]
			\begin{center}
				\begin{tikzpicture}
					\tikzstyle{vertex}=[x=1.2cm, y=1.2cm, draw,circle, inner sep=.0pt, minimum size=0.8cm]
					
					\foreach \name/\x in {2/2, 3/3, 5/5,
						6/6, 8/8}
					\node[vertex] (G-\name) at (\x,0) {};
					
					\foreach \name/\x in {12/2, 13/3, 15/5,
						16/6, 18/8}
					\node[vertex] (G-\name) at (\x,-1) {};
					
					\node[vertex] (G-1) at (1,0) {$v_1$};
					\node[vertex] (G-11) at (1,-1) {$l_1$};
					\node[vertex] (G-4) at (4,0) {$v_i$};
					\node[vertex] (G-7) at (7,0) {$v_j$};
					\node[vertex] (G-9) at (9,0) {$v_{n_0-1}$};
					\node[vertex] (G-19) at (9,-1) {$l_{n_0-1}$};
					\node[vertex] (G-20) at (5.5,-2) {$l_{i/j}$};
					
					\foreach \from/\to in {1/2,2/3,3/4,4/5, 5/6, 6/7, 7/8, 8/9, 1/11, 2/12, 3/13, 5/15, 6/16, 8/18, 9/19}
					\draw (G-\from) -- (G-\to);
					
					\draw (G-4) edge[in=175,out=-90] (G-20);
					\draw (G-7) edge[in=0,out=-90] (G-20);
				\end{tikzpicture}
			\end{center}
			\caption{The graph $G_{n_0,i,j}$} \label{fig:speciallowerbound}
		\end{figure}
		
		Consider any general algorithm \alg with at most $n_0-2$ colors that explores a graph $G_{n_0,r,s}$ for yet to be determined $r$ and $s$.
		The adversary makes the decisions in such a way that \alg walks from $v_1$ to $v_{n_0-1}$ without exploring any leaves and without leaving any visited vertex uncolored. This is possible by \cref{lem:cantgoback} and \cref{lem:cantleaveuncolored}.
		
		By the pigeonhole principle, there are $i<j$ such that \alg assigns the same color to $v_i$ and $v_j$. Let $r=i$ and $s=j$; in other words, consider the exploration of \alg on $G_{n_0,i,j}$.
		Without loss of generality, we assume that $v_i$ and $v_j$ are not adjacent since, if that were the case, clearly, all $v_k$ with $k>j$ would be assigned the color of $v_j$ and \alg cannot explore all graphs this way.
		
		On its way back to $v_1$, \alg reaches $v_j$ at some point. From there, \alg has to visit $l_{i/j}$ by \cref{obs:cantignoreuncolored}.
		Since \alg cannot distinguish $v_i$ and $v_j$ when located in $l_{i/j}$, the adversary can send \alg to either vertex if \alg wants to visit a neighbor.
		If \alg decides not to color $l_{i/j}$, then the adversary sends \alg to $v_i$ and \alg will have to go to $l_{i/j}$ by \cref{obs:cantignoreuncolored} and then either color $l_{i/j}$ or be caught in an endless loop.
		Hence, \alg colors $l_{i/j}$ and then goes to one of its neighbors. The adversary is going to send \alg to $v_i$.
		
		When \alg arrives in $v_i$, the closed neighborhood $N[v_i]$ is colored.
		If \alg decides to go to $l_{i/j}$, it will then again be sent to $v_i$ and will never terminate.
		If \alg decides to go to $v_{i-1}$, it will never visit any leaf between $v_i$ and $v_j$.
		If \alg decides to go to $v_{i+1}$, then \alg can never terminate since all paths from $v_{i+1}$ to $v_1$ lead through $v_i$ and from there, \alg always chooses $v_{i+1}$.
		Therefore, no matter how \alg behaves, it cannot successfully explore $G_{n_0,i,j}$, unless it uses at least $n_0-1$ colors on the first $n_0-1$ steps.
		
		\begin{figure}[htbp]\begin{center}
				\begin{tikzpicture}[x=0.8cm,y=0.8cm,vertex/.style={draw,circle,
				 inner sep=.0pt, minimum size=0.8cm}] 
					\node[vertex] (u) at (4,-4) {$u$};
					\node[vertex] (v_1) at (0,-2) {$v_{1}$};
					\node[vertex] (v_2) at (1.5,-2) {$v_{2}$};
					\node (v_3) at (3,-2) { };
					\node (v_4) at (4.5,-2) { };
					\node[vertex] (v_n_1) at (6,-2) {$v_{n_0-2}$};
					\node[vertex] (v_n) at (7.5,-2) {$v_{n_0-1}$};
					
					\node at (4,-2) {$\cdots$};
					\draw (v_1) -- (u) -- (v_2);
					\draw (u) -- (v_n_1);
					\draw (v_n) -- (u);	
					\draw (v_1) -- (v_2) -- (v_3);
					\draw (v_4) -- (v_n_1) -- (v_n);
				\end{tikzpicture}
				
				\caption{The graph $G_1$ for the general lower bound}\label{fig:generallowerbound}
		\end{center}\end{figure}
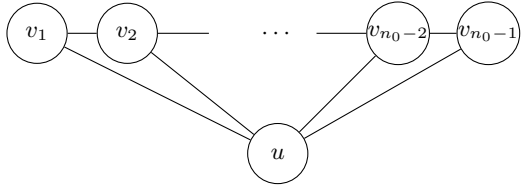
		Consider now the graph $G_1$ depicted in \cref{fig:generallowerbound}. $G_1$ consists of a path of $n_0-1$ vertices that are connected to a universal vertex $u$. Locally, $G_1$ looks exactly the same as $G_{n_0,i,j}$. An algorithm can only successfully explore $G_{n_0,i,j}$ if it assigns $n_0-1$ colors on the first $n_0-1$ steps during the exploration of $G_{n_0,i,j}$.
		Then, however, such an algorithms assigns $n_0-1$ colors on the first $n_0-1$ steps during the exploration of $G_1$ as well. 
		$G_1$ has $n_0$ vertices; hence, we have discovered a graph on $n_0$ vertices where \alg uses at least $n_0-1$ colors. This contradicts our assumption and thus finishes the proof.
	\end{proof}
	
	\subsection{Non-Uniform Algorithms}
	Clearly, the proof of \cref{thm:generallowerbound} relies heavily on the uniformity of the algorithm, that is, the property that the algorithm has to successfully explore every graph.
	If an algorithm just has to explore all graphs of a fixed size $n$ or up to a fixed size $n$, then the situation is different.
	We cannot argue anymore that an algorithm needs to use many colors on a certain small graph by claiming that the algorithm has to use many colors to successfully explore a certain large graph and showing that it is not possible to distinguish the large graph from the small graph.
	Instead, we have to assume that the algorithm knows $n$ because we want to prove a property for all possible algorithms.
	
	We want to explore this restriction in order to know how useful it is to know the size of the graph in our model. 
	We provide almost matching upper and lower bounds for this case in the next two theorems.
	\begin{theorem}
		For every $n\ge 5$ there is an algorithm that successfully explores all graphs of size exactly $n$ and that uses on each such graph at most $n-2$ colors.
	\end{theorem}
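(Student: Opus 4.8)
The plan is to run \textscalt{DepthFirstSearch} but to forbid its largest colour: since \textscalt{DepthFirstSearch} colours at most $n-1$ vertices and, on a path, uses exactly the pairwise distinct colours $1,\dots,n-1$, it suffices to make sure the colour $n-1$ is never written. The knowledge of $n$ is used precisely to recognise the single moment at which \textscalt{DepthFirstSearch} would be forced to use colour $n-1$ and to override its behaviour from that moment on.

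The first ingredient is a counting invariant of \textscalt{DepthFirstSearch}: whenever a vertex receives colour $k$, at least $k$ vertices --- including it --- carry a colour already. This is an easy induction on the colouring steps, using that a freshly assigned colour $k$ arises either as $1$ plus the colour of some already coloured neighbour or, in the leaf case, as the colour of such a neighbour. Combined with the fact established inside the proof of \cref{lem:directpredecessors}, namely that the largest colour in the neighbourhood of a newly coloured vertex equals the colour of its predecessor, this yields the following picture at the first time the agent stands on a vertex $v$ with $c(v)=0$, an uncoloured neighbour, and $\max_{v'\in N(v)}c(v')=n-2$ --- the only configuration in which \textscalt{DepthFirstSearch} is about to produce colour $n-1$: exactly $n-2$ vertices are coloured, they carry the colours $1,\dots,n-2$ and form a path $p_1,\dots,p_{n-2}$ in the DFS tree with $c(p_i)=i$ and $p_{n-2}$ the predecessor of $v$; moreover $v$ together with its --- necessarily unique --- uncoloured neighbour $u$ are the only two uncoloured vertices of the whole graph. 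In particular the input graph consists of a Hamiltonian path $p_1,\dots,p_{n-2},v,u$ rooted at the start vertex $p_1$ plus an arbitrary set of chords, and the agent, once it reaches this critical configuration, can be taken to "know'' all of this.

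From the critical configuration the end game has to visit $u$ and walk back to $p_1$ using only colours in $\{1,\dots,n-2\}$, while every single step stays forced by the colours the agent sees. The skeleton is clear: assign $v$ a colour, cross to $u$, colour $u$ suitably, and then march back along the coloured chain by repeatedly stepping to the neighbour whose colour is one smaller, terminating on $p_1$, which is recognisable as the unique vertex of colour $1$ without an uncoloured neighbour. Correctness of the marching phase is \cref{lem:directpredecessors} applied to the coloured chain, and "every vertex is visited'' follows as in \cref{lem:dfsexploresallvertices} together with the fact that $v$ and $u$ are the last two vertices.

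I expect the end game to be the real work and the main obstacle, because the two last vertices interfere with the clean colour structure backtracking relies on. If the agent merely tried the naive rule "go to an uncoloured neighbour'' on $v$, it could ping-pong between $v$ and $u$ whenever $u$ is also adjacent to $p_{n-2}$ (so that $u$ presents exactly the local picture that triggered the override on $v$); this forces the agent to commit to colouring at least one of $v,u$ and to do so with a colour that neither coincides with the colour $1$ reserved for $p_1$ nor spoils the uniqueness of the colour-$(c-1)$ neighbour used while marching down the chain. When $u$ carries many chords --- the extreme case being the graph $G_1$ of \cref{fig:generallowerbound}, where $u$ is universal --- such a colour need no longer exist in isolation, so the route taken back through the chain must itself be chosen as a function of $u$'s neighbourhood (for instance, re-entering the chain at the lowest-indexed $p_i$ adjacent to $u$, and using that $p_1$ may itself be a neighbour of $u$). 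A secondary, purely computational obstacle is the smallest case $n=5$, where only three colours are available, so one has to check by hand that the end game still fits and that the resulting $\text{move}$-function is admissible in the sense of the model, i.e.\ never directs the agent to a non-existent neighbour.
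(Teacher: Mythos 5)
Your setup is sound and coincides with the paper's: the counting argument does show that the first time \textscalt{DepthFirstSearch} would write colour $n-1$, the coloured vertices form a predecessor chain $p_1,\dots,p_{n-2}$ with $c(p_i)=i$, and only the current vertex $v$ and one further vertex $u$ are uncoloured, so the graph is this Hamiltonian path plus chords (the paper expresses the same fact by saying only the last three coloured vertices $x,y,z$ matter and the interesting case is $c(x)=n-2$). However, the theorem is not established by this reduction, and you explicitly leave the decisive part open (``I expect the end game to be the real work and the main obstacle''). That end game is exactly where the paper's proof lives: one must specify, as a function of the \emph{local} colour view only, which colours from $\{1,\dots,n-2\}$ to assign to $v$ and to $u$, and how to navigate back to $p_1$ and terminate, and then verify this for every configuration of chords at $v$ and $u$ (including $v,u,p_{n-2}$ forming a triangle and $u$ adjacent to many $p_i$, as in the graph of \cref{fig:generallowerbound}). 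Your proposal names these dangers but neither defines the rules nor proves they work, so the core of the argument is missing.

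For comparison, the paper resolves it with the ``smallest unproblematic colour'' (the smallest $c$ such that no neighbour carries $c+1$), which guarantees the freshly coloured vertex never masquerades as anyone's predecessor during the march back; note that this colour \emph{always} exists and is at most $n-2$ precisely because colour $n-1$ never occurs, so your concern that a safe colour ``need no longer exist'' is not the actual obstruction --- the obstruction is re-entering the chain correctly, which the paper handles with two further movement rules (with two neighbours of colour $n-2$, go to the minimum colour seen; with neither an uncoloured nor a $(c(v)-1)$-coloured neighbour, go to the maximum colour) and a case analysis over the three ways $x,y,z$ can be interconnected, with subcases depending on whether $y$ received $n-2$. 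Your idea of ``re-entering the chain at the lowest-indexed $p_i$ adjacent to $u$'' gestures at the paper's minimum-colour rule, but as stated it is not a local rule with a correctness proof; also note the paper's override is triggered by merely seeing a neighbour of colour $n-2$ from an uncoloured vertex, not only in your critical configuration with an uncoloured neighbour, and the small case $n=5$ you defer is subsumed by that general case analysis. As it stands, the proposal is a correct first half with the essential second half missing.
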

	
	\begin{proof}
		Let $n \in \N$, $n\ge5$.
		Our algorithm $\textscalt{DFS}_n$ works as follows.
		In general, $\textscalt{DFS}_n$ works exactly as 
		\textscalt{DepthFirstSearch}.
		However, we want to assign to the last two vertices color $n-2$; therefore, we need a few adaptations that we will describe in a moment.
		
		Denote the last three vertices that 
		\textscalt{DepthFirstSearch} colors by $x$, $y$, and $z$ 
		(in this order).
		As long as the agent is not on an uncolored vertex where 
		there is a neighbor with color $n-2$, $\textscalt{DFS}_n$ 
		works exactly as \textscalt{DepthFirstSearch}.
		Therefore, as long as $\textscalt{DFS}_n$ is not on one of 
		the vertices $y$ or $z$, the behavior of 
		$\textscalt{DFS}_n$ is exactly like the behavior of 
		\textscalt{DepthFirstSearch} and there is always a unique 
		predecessor that can be recognized by the agent.
		
		The last three vertices can be arranged in three different 
		ways, as illustrated in \cref{fig:nonuniformcases}. We 
		adapt \textscalt{DepthFirstSearch} as follows:
		\begin{description}
			\item[Rule 1] If the current vertex is uncolored and 
			there is exactly one neighbor with color $n-2$, then 
			$\textscalt{DFS}_n$ assigns \emph{the smallest color 
			for which there is no neighbor with color number 
			exactly $1$ larger}\footnote{For example, unless $y$ is 
			connected to the vertex $2$, $y$ is assigned color $1$; 
			otherwise, unless $y$ is connected to both the vertex 
			$2$ and the vertex $3$, $y$ is assigned color $2$ etc. 
			Note that this excludes $y$ receiving color $n-3$ since 
			it is connected to $x$. Furthermore, note that $y$ 
			might be assigned color $n-2$.}---which we will call 
			\emph{smallest unproblematic color}--- and continues to 
			an uncolored neighbor. If there is no such uncolored 
			neighbor, then the agent returns to the neighbor with 
			color $n-2$.
			\item[Rule 2] If the current vertex is uncolored and 
			there are two neighbors with color $n-2$ (in this case, 
			$\textscalt{DFS}_n$ is on $z$ and just has to go back 
			to the start vertex), then $\textscalt{DFS}_n$ assigns 
			the smallest unproblematic color and then continues to 
			a vertex of minimal color.
			\item[Rule 3] If the current vertex $v$ is colored with 
			$c(v)>1$ and there is neither an uncolored neighbor nor 
			a neighbor $v'$ with $c(v')=c(v)-1$, then 
			$\textscalt{DFS}_n$ goes to the neighbor with the 
			largest color.
		\end{description}
		
		\begin{figure}[htbp]\begin{center}
				\begin{subfigure}{.3\textwidth}
					\begin{tikzpicture}
						\tikzstyle{vertex}=[x=1.2cm, y=1.2cm, draw,circle, minimum size=0.8cm]
						
						\node[vertex] (x) at (0,0) {$x$};
						\node[vertex] (y) at (1,0.8) {$y$};
						\node[vertex] (z) at (1,-0.8) {$z$};
						
						\draw (x) -- (y) ;
						\draw (x) -- (z);
					\end{tikzpicture}
					\caption{Case 1}
				\end{subfigure}
				\begin{subfigure}{.3\textwidth}
					\begin{tikzpicture}
						\tikzstyle{vertex}=[x=1.2cm, y=1.2cm, draw,circle, inner sep=.0pt, minimum size=0.8cm]
						
						\node[vertex] (x) at (0,0) {$x$};
						\node[vertex] (y) at (1,0.8) {$y$};
						\node[vertex] (z) at (1,-0.8) {$z$};
						
						\draw (x) -- (y) -- (z);
						\draw (x) -- (z);
					\end{tikzpicture}
					\caption{Case 2}
				\end{subfigure}
				\begin{subfigure}{.3\textwidth}
					\begin{tikzpicture}
						\tikzstyle{vertex}=[x=1.2cm, y=1.2cm, draw,circle, inner sep=.0pt, minimum size=0.8cm]
						
						\node[vertex] (x) at (0,0) {$x$};
						\node[vertex] (y) at (1,0) {$y$};
						\node[vertex] (z) at (2,0) {$z$};
						
						\draw (x) -- (y) ;
						\draw (y) -- (z);
					\end{tikzpicture}
					\caption{Case 3}
				\end{subfigure}
				\caption{The three cases how $x$, $y$, and $z$ can be connected}\label{fig:nonuniformcases}
		\end{center}\end{figure}
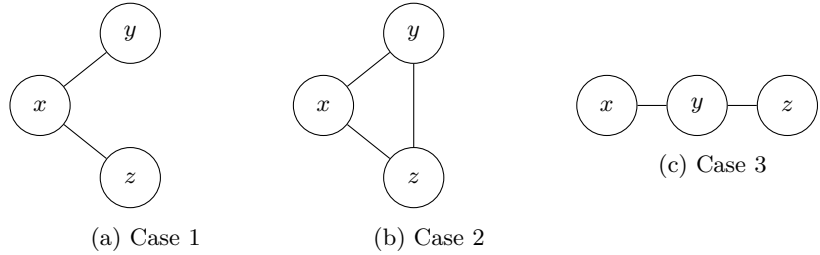
		
		For the behavior of $\textscalt{DFS}_n$ on $y$ and $z$, we 
		have the following cases depending on which color $x$ 
		receives and how the vertices $x$, $y$, and $z$ are 
		connected to each other. We have three interesting cases 
		and two boring cases and we start our analysis with the two 
		boring cases.
		
		First, consider the case where $x$, which is the third-last 
		vertex to be colored, receives color $n-4$ or less. Thus, 
		\textscalt{DepthFirstSearch} never assigns a color larger 
		than $n-2$ and never arrives at a vertex already colored 
		with $n-2$. In this case, $\textscalt{DFS}_n$ behaves 
		exactly as \textscalt{DepthFirstSearch} and uses at most 
		$n-2$ colors as well.
		
		In the second case, $x$ receives color $n-3$. Thus, 
		\textscalt{DepthFirstSearch} may assign to $y$ color $n-2$ 
		and to $z$ color $n-1$. $\textscalt{DFS}_n$ assigns to $y$ 
		color $n-2$ as well, but when it arrives on $z$ and sees 
		exactly one vertex of color $n-2$, $\textscalt{DFS}_n$ uses 
		Rule 1 and assigns the smallest unproblematic color and 
		then goes to the vertex with color $n-2$, which is $y$. 
		On $y$, $\textscalt{DFS}_n$ sees that $y$ was colored in an 
		earlier step and that there are no more uncolored neighbors 
		and thus applies the normal backtracking rule to go back to 
		the vertex with color $n-3$, which is $x$. From there, the 
		behavior of $\textscalt{DFS}_n$ is the same as the behavior 
		of \textscalt{DepthFirstSearch}. 
		
		Note that the color of $z$ cannot interfere with the 
		behavior of \textscalt{DepthFirstSearch} since 
		\textscalt{DepthFirstSearch} always goes back to the direct 
		predecessor, which is a vertex with color number exactly 
		$1$ less.
		
		We can now move towards the interesting cases. In all of the remaining three cases, $x$ receives color $n-2$. The cases are illustrated in \cref{fig:nonuniformcases}.
		
		We describe the behavior of $\textscalt{DFS}_n$ on each of 
		these cases and thus show that $\textscalt{DFS}_n$ can 
		distinguish these cases and can consequently successfully 
		explore graphs of size $n$ with $n-2$ colors.
		
		\begin{description}
			\item[Case 1:] $x$ receives color $n-2$; $x$ is adjacent to both $y$ and $z$; and $y$ and $z$ are not adjacent, see \cref{fig:nonuniformcases}.
			In this case, $y$ is going to be assigned by 
			$\textscalt{DFS}_n$ the smallest unproblematic color 
			and then the agent returns to $x$ (Rule 1). From there, 
			the agent visits $z$, which is the remaining uncolored 
			vertex. On $z$, $\textscalt{DFS}_n$ is in the same 
			situation as it was on $y$ (exactly one neighbor with 
			color $n-2$) and is going to assign the smallest 
			unproblematic color to $z$ and then return to $x$ by 
			Rule 1. Now, $\textscalt{DFS}_n$ behaves again exactly 
			the same as \textscalt{DepthFirstSearch}.
			\item[Case 2:] $x$ receives color $n-2$ and $x$, $y$, $z$ form a triangle.
			Again, $y$ receives the smallest unproblematic color by Rule 1. Then, by Rule 1 as well, the agent does not return to $x$ but continues to the next unvisited vertex, which is $z$.
			On $z$, we have two cases. 
			\begin{description}
				\item[Subcase 2.1:] If $y$ received a color 
				different from $n-2$, then the agent is in the same 
				situation as before (exactly one neighbor with 
				color $n-2$) and assigns the smallest unproblematic 
				color and returns to $x$. From there, 
				$\textscalt{DFS}_n$ behaves again exactly the same 
				as \textscalt{DepthFirstSearch} and terminates as 
				expected.
				\item[Subcase 2.2:] If $y$ received color $n-2$, the situation is more delicate. The agent now sees two vertices colored $n-2$ and if it would assign color $n-2$ and go to $y$, it might be caught in an endless loop and go back and forth between $y$ and $z$. Hence, by Rule 2, the agent assigns the smallest unproblematic color and then goes to the smallest color it sees.
				If the smallest unproblematic color is $k\ge2$, 
				then $z$ is connected with vertices $2,\ldots k$ 
				and the agent goes directly to the start vertex if 
				$z$ is connected to the start vertex or the agent 
				goes to vertex $2$ and from there to the start 
				vertex. If the smallest unproblematic color is $1$, 
				this means that $z$ is not a neighbor of $2$, in 
				which case the agent goes to any vertex with 
				minimal color. If this vertex is $y$, then the 
				agent applies Rule 3 to go to $x$ and from there 
				with normal backtracking to the start vertex. If 
				this vertex is not $y$, then the agent uses normal 
				backtracking to the start vertex right away. Since 
				$z$ did not receive a color that interferes with 
				backtracking, $\textscalt{DFS}_n$ behaves exactly 
				as \textscalt{DepthFirstSearch}.
			\end{description}
			\item[Case 3:] $x$ receives color $n-2$ and $x$, $y$, $z$ form a simple path.
			Again, $y$ receives the smallest unproblematic color and the agent then continues to $z$. We have to distinguish two cases.
			\begin{description}
				\item[Subcase 3.1:] If $y$ was assigned color 
				$n-2$, then $z$ receives the smallest unproblematic 
				color as in Cases 1 and 2 by Rule 1 and the agent 
				then goes back to $y$. From there, the agent 
				behaves again exactly as 
				\textscalt{DepthFirstSearch} and goes to vertex 
				$n-3$ by the normal backtracking rule. Note that 
				$y$ is in this case connected to vertex $n-3$ since 
				$y$ received $n-2$ as smallest unproblematic color.
				\item[Subcase 3.2:] If $y$ is assigned something 
				less than $n-2$, then $z$ is assigned the largest 
				number the agent sees as is the case with 
				\textscalt{DepthFirstSearch} and then goes to the 
				vertex with this largest number (see line 
				\ref{line:savecolors} of Algorithm~\ref{alg:dfs}). 
				If this vertex is not $y$, we are done since then 
				there is always a clear predecessor until the start 
				vertex is reached. If this vertex is $y$, however, 
				Rule 3 will be applied and the agent goes from $y$ 
				to $x$ and then backtracks from there to the start 
				vertex.
			\end{description}
		\end{description}
		We see that no matter how the last three vertices are 
		arranged, $\textscalt{DFS}_n$ successfully explores the 
		graph. 
		This finishes the proof.
	\end{proof}
		
		\begin{theorem}
			For every algorithm \alg and for every natural number $n$ such that \alg explores all graphs \emph{of size $n$} there is a graph $G$ of size $n$ such that \alg uses at least $n-3$ colors on $G$.
		\end{theorem}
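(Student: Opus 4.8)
The statement is the non-uniform counterpart of the lower bound in \cref{thm:generallowerbound}, and the plan is to follow the same blueprint — a forced walk along a long backbone, a pigeonhole collision among the colours assigned on that walk, and a confusion gadget planted at the colliding pair — but to run the whole argument inside a single graph on exactly $n$ vertices. The reason the bound degrades from $n-1$ to $n-3$ is precisely that the uniform proof could afford to build its hard instance (the caterpillar of \cref{fig:speciallowerbound}) on $\Theta(n)$ auxiliary vertices and then transplant the forced prefix of \alg's behaviour onto a smaller graph — the graph $G_1$ of \cref{fig:generallowerbound} — that \alg cannot tell apart; once \alg knows the size, this transplant is no longer available, so every vertex of the hard instance must be paid for out of the budget of $n$.

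Concretely, I would argue by contradiction: suppose that on every $n$-vertex graph \alg uses at most $n-4$ colours. First I would define a family of $n$-vertex graphs consisting of a backbone path $v_1,\dots,v_m$ — with $v_1$ the start vertex and $m$ as large as the accounting below permits (the aim being $m$ close to $n$) — together with a small set of auxiliary vertices attached so that, at each step, \cref{lem:cantgoback} and \cref{lem:cantleaveuncolored} (together with \cref{obs:cantignoreuncolored}) force \alg to proceed from $v_r$ to $v_{r+1}$ and to colour every backbone vertex, never straying from the backbone. Since \alg then colours $m$ vertices with at most $n-4$ colours, the pigeonhole principle yields two backbone vertices $v_i$, $v_j$ with $i<j$ and $c(v_i)=c(v_j)$; having observed this, I would select the member of the family in which a single extra vertex $w$ is attached to exactly $v_i$ and $v_j$ — the analogue of the merged leaf $l_{i/j}$ — which is innocuous during the forced walk because $w$ stays uncoloured throughout it.

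On the return leg, \cref{obs:cantignoreuncolored} forces \alg into $w$, where it sees two neighbours of the same colour $c(v_i)$, so the adversary may deliver it to whichever of $v_i$, $v_j$ it chooses. Once the graph is fully coloured, \alg's move at the vertex it is delivered to is a fixed function of the environment by \cref{obs:functionalnature}, so I would carry out the same three-way case analysis as in \cref{thm:generallowerbound}: re-entering $w$ traps the agent in a loop; stepping ``towards'' the start makes it bypass — hence never visit — the vertices or pendants that lie on the return leg strictly between $v_i$ and $v_j$; and stepping ``away'' from the start sends it down a branch from which every route home passes through the same vertex again, producing an infinite oscillation. In every case correctness is violated, contradicting the assumption, so some $n$-vertex graph forces \alg to use at least $n-3$ colours.

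The crux is the accounting together with the geometry of the skipped region. There is no larger auxiliary graph in which to hide overhead, so the backbone, the vertices needed to keep \cref{lem:cantgoback} and \cref{lem:cantleaveuncolored} applicable at \emph{every} step, and the merge vertex $w$ must all fit within $n$ vertices; pushing the bound all the way to $n-3$ means trimming this overhead to a constant, so that essentially only $v_1$, $w$, and the tail of the backbone fail to participate in the pigeonhole. The delicate points are (i) that the forced walk genuinely survives the presence of the parametrised gadget $w$ at every step, so that a single instance works for every colliding pair $i<j$; (ii) that when the displaced agent walks ``home'' it really does bypass vertices that are still unexplored at that moment — which constrains where the pendant structure of the family must sit; and (iii) that the terminal case analysis never tacitly appeals to \alg's ignorance of $n$, since that is precisely the crutch of the uniform proof and the one we must forgo here.
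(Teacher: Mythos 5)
Your high-level plan (force a walk, pigeonhole a colour collision, plant a confusion gadget at the colliding pair) is the right spirit, but the construction you would need is exactly the part you leave open, and the caterpillar-style family you sketch cannot be made to work within the budget of $n$ vertices. To keep \cref{lem:cantgoback} and \cref{lem:cantleaveuncolored} applicable at every backbone step, the predecessor must always have a spare uncoloured neighbour; on a path this means a pendant (or equivalent fresh vertex) at essentially every backbone vertex, so a backbone of length $m$ costs about $2m$ vertices. With $m\le n/2+O(1)$ forced colourings and $n-4$ colours available, the pigeonhole step simply does not fire --- this is precisely why the uniform proof of \cref{thm:generallowerbound} built its caterpillar on $2(n_0-1)-1$ vertices and then transplanted the behaviour onto the smaller graph $G_1$, a move you correctly note is unavailable here. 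Your hope of ``trimming the overhead to a constant'' is the entire difficulty, and the natural constant-overhead fix (one universal spare vertex serving all backbone vertices, as in $G_1$) breaks the other requirement: since the agent sees the number of uncoloured neighbours, attaching the gadget vertex $w$ to $v_i$ and $v_j$ in advance makes those two vertices recognisable by degree during the forced walk, so the algorithm can deliberately keep them differently coloured, and the post-hoc choice of the colliding pair is no longer legitimate.

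The paper escapes this tension with a genuinely different graph family: take a clique on $n-3$ vertices, attach a pendant leaf to each of two clique vertices $v_i,v_j$, and join all remaining clique vertices to one special vertex $s$. The clique supplies, for free, the spare uncoloured neighbours needed to invoke \cref{lem:cantgoback} and \cref{lem:cantleaveuncolored} (with the caveat, which the paper makes explicit, that their proofs only need a few unvisited vertices remaining), so there is no per-step pendant cost; and $s$ equalises degrees so that every clique vertex has degree $n-3$ and the leaf-bearers cannot be singled out, which is what licenses choosing $v_i,v_j$ \emph{after} observing which two of the $n-3$ clique vertices (coloured with at most $n-4$ colours) collide. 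The endgame is also different from your three-way case analysis on a path: after colouring both leaves identically (by functional nature), the agent standing at $v_j$ is in the same environment as it was at $v_i$, so it oscillates between $v_i$ and $v_j$ forever. As written, your proposal is missing the key idea --- a vertex-efficient, degree-regular structure replacing the caterpillar --- and without it the accounting at the heart of your argument fails.
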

		\begin{proof}
			The theorem is clearly true for $n\le 5$ by \cref{lem:nonuniform_lowerbound_size-2-and-3}: 
			Since we showed that at least two colors are needed to explore all graphs of size $3$, 
			two colors are needed to explore all graphs of size $4$ and $5$ as well.
			To see this, just assume that there is an additional leaf (for graphs of size $4$) 
			or an additional path with two vertices (for graphs of size $5$) attached to the start vertex of 
			the triangle in the proof of \cref{lem:nonuniform_lowerbound_size-2-and-3}.
			
			Assume now towards contradiction that there is a natural number $n\ge 6$ and an algorithm \alg that explores every graph of size $n$ with at most $n-4$ colors.
			Consider the following $\binom{n-3}{2}$ graphs of size $n$:
			All cliques of size $n-3$ where two vertices $v_i$ and $v_j$ both have a leaf and the remaining vertices are all connected to some special vertex $s$.
			Note that every vertex in the clique has degree $n-3$ and the vertices $v_i$ and $v_j$ thus cannot be distinguished from the other vertices in the clique. The special vertex $s$ has degree $n-5$.
			
			Note that we can use \cref{lem:cantgoback} and \cref{lem:cantleaveuncolored} as long as there are enough unvisited vertices available such that their proofs work. In case of \cref{lem:cantgoback}, we need three unvisited vertices;
			in case of \cref{lem:cantleaveuncolored}, we need two unvisited vertices including the current vertex.
			Therefore, by \cref{lem:cantgoback} and \cref{lem:cantleaveuncolored},
			an adversary can ensure that \alg first visits all vertices in the clique and then goes to $s$.
			
			Now, \alg will have assigned to two vertices the same color.
			Consider the exploration of \alg on the graph where these two vertices are the two vertices with the attached leaf, $v_i$ and $v_j$.
			Note that here, we need the fact that all vertices in the clique have the same degree. Otherwise, \alg might recognize $v_i$ and $v_j$ and ensure that these two always receive a different color.
			After visiting all vertices in the clique, \alg has to visit the first leaf at some point.
			Say the agent first visits the leaf of $v_i$.
			If \alg does not color the leaf, then it is caught in an endless loop. Hence, \alg colors the leaf and then goes back to $v_i$. 
			From there, \alg has to visit $v_j$ at some point.
			
			Before visiting $v_j$, \alg may visit $s$, but whenever the agent tries to reach $v_j$, the adversary can send the agent to $v_i$.
			The only way to reach $v_j$ is by going directly from $v_i$ to $v_j$.
			Once the agent does this, it can then visit the leaf of $v_j$, color it with the same color as the leaf of $v_i$ due to its functional nature, and then return to $v_j$.
			Now, the agent is in the same situation as before when it went from $v_i$ to $v_j$.
			Therefore, the agent will now go to $v_i$ and from there again to $v_j$ and is hence caught in an endless loop.
			This contradicts our assumption and thus finishes the proof.
		\end{proof}

		\section{Between Trees and General Graphs}\label{sec:circumference}
		
		We want to discover the parameter that determines the difficulty of graph exploration of a given graph.
		Since the exploration of trees is quite easy, it is tempting to think that a good parameter to measure the difficulty for graph exploration could be treewidth, feedback vertex set, number of cycles, etc.
		However, considering the difficult instance in \cref{fig:speciallowerbound}, we see that treewidth and feedback vertex set are not suitable parameters. 
		In fact, even restricting the number of cycles to $1$ and the maximum degree to $3$ does not remove the linear amount of colors.
		Bipartite graphs as generalizations of trees might still serve as plausible candidates.
		After all, the proof of \cref{lem:cantleaveuncolored} does not work for bipartite graphs and the graphs $G_{n_0,i,j}$ used in the construction for \cref{thm:generallowerbound} are not bipartite if $j-i$ is odd.
		However, the following theorem shows how these defects can be remedied; in fact, only exploring bipartite graphs is almost as difficult as exploring all graphs.
		\begin{theorem}\label{thm:bipartitegraphslowerbound}
			There is no algorithm \alg and no natural number $n$ such that \alg successfully explores all bipartite graphs and uses strictly fewer than $n-2$ colors on every graph on $n$ vertices.
		\end{theorem}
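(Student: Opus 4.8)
The plan is to rerun the proof of \cref{thm:generallowerbound} while keeping every graph we build bipartite. Assume for contradiction that \alg successfully explores all bipartite graphs and that, for some fixed $n$, \alg uses at most $n-3$ colours on every bipartite graph with $n$ vertices. As in \cref{thm:generallowerbound} I would work with caterpillars: a spine $v_1,\dots,v_m$ carrying one pendant leaf $l_r$ at each $v_r$. Using the adversary together with \cref{lem:cantgoback}, \cref{lem:cantleaveuncolored} and \cref{obs:cantignoreuncolored} exactly as there, I force \alg to walk from $v_1$ to $v_m$, colouring every $v_r$ on the way and touching no leaf; call the resulting spine colouring $\chi$. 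None of the modifications below touch the spine or its leaves during this prefix of the run, so $\chi$ is independent of which modification is applied.

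The new obstacle is that the ``confuser'' of \cref{thm:generallowerbound}, obtained by identifying the leaves $l_i$ and $l_j$ of two equally coloured spine vertices, closes the cycle $v_i\cdots v_j\,l_{i/j}$, whose length $j-i+2$ is odd exactly when $j-i$ is odd, which would violate bipartiteness. I would split on the parity of $j-i$. When $j-i$ is even the identification is harmless (the cycle is even) and the trapping analysis of \cref{thm:generallowerbound} applies verbatim. When $j-i$ is odd I instead add the single edge $\{l_i,l_j\}$, closing the cycle $v_i\cdots v_j\,l_j\,l_i$ of even length $j-i+3$, so the graph stays bipartite; one then has to show, by a case analysis analogous to --- but a bit longer than --- the one in \cref{thm:generallowerbound} (the indistinguishability has to be pushed one further step, through the two mutually adjacent leaves $l_i,l_j$), that whenever $\chi(v_i)=\chi(v_j)$ with $j-i\ge 2$, \alg is, after its forward walk, driven into the gadget and ends up on $v_i$ with a fixed continuation that is fatal --- it either loops forever inside the gadget or is forced to abandon one of the leaves $l_{i+1},\dots,l_{j-1}$. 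Since \alg explores all bipartite graphs, $\chi$ can therefore not repeat a colour at distance $\ge 2$ along the spine.

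Repeats between consecutive spine vertices are not catchable by such a local gadget; here I would use \cref{obs:functionalnature}. The vertex $v_{r+1}$ is coloured in exactly the environment --- uncoloured, one neighbour of colour $\chi(v_r)$, the rest uncoloured --- in which $v_{r+2},v_{r+3},\dots$ are coloured afterwards, so $\chi(v_r)=\chi(v_{r+1})$ would by functional nature make the whole tail $v_{r+1},\dots,v_m$ monochromatic, and an agent facing a monochromatic tail provably cannot return to the start even on the plain (bipartite) caterpillar, contradicting correctness. Hence $\chi$ is injective on the spine and uses $m$ colours. I then transplant this to an $n$-vertex bipartite graph as in \cref{thm:generallowerbound}: the single universal vertex used there is incompatible with a path spine in a bipartite graph, so I replace it by two ``provider'' vertices, one adjacent to all odd-indexed and one to all even-indexed spine vertices. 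This keeps the graph bipartite and still furnishes, during the forward walk, the extra uncoloured neighbour of every spine vertex that the lemmas need, so \alg performs the same walk with the same injective colouring; taking the spine to have $n-2$ vertices makes the whole graph have $n$ vertices and forces \alg to use at least $n-2$ colours on it, the desired contradiction. The loss of exactly one colour relative to the general bound $n-1$ is precisely the cost of two providers rather than one universal vertex.

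The hard part will be the edge-gadget trapping analysis for the odd-distance case, which has no analogue in \cref{thm:generallowerbound}, together with the boundary bookkeeping: one must make sure no ``cheap'' repeat escapes at the very end of the spine (which may force a constant lengthening of the spine and a recheck of the count leading to exactly $n-2$), and that \cref{lem:cantgoback}, \cref{lem:cantleaveuncolored} and \cref{obs:cantignoreuncolored} stay applicable on the much smaller transplant graph, whose only non-spine vertices are the two providers.
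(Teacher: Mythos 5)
Your overall skeleton matches the paper's: force a forward walk along a leaf-decorated path, build a bipartite ``confuser'' for a repeated colour, and transplant to an $n$-vertex bipartite graph in which the single universal vertex is replaced by two providers (this is exactly the paper's $G_{\text{bipartite}}$ with $u$ and $u_{\text{even}}$, and it is indeed where the loss from $n-1$ to $n-2$ comes from). However, there are two genuine gaps. First, you invoke \cref{lem:cantleaveuncolored} ``exactly as there'' to conclude that every spine vertex gets coloured during the forward walk. That lemma is not available here: its proof modifies the graph so that $p$, $v$, $v'$ form a triangle, i.e.\ an odd cycle, so an algorithm that only promises to explore bipartite graphs cannot be refuted by it --- the paper explicitly flags this as the reason the bipartite case needs extra work. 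Without a substitute you lose both the colour count (uncoloured spine vertices break injectivity of $\chi$) and the trap at $v_i$, which needs $N[v_i]$ fully coloured. The paper replaces the lemma by a dedicated argument: a single colour repeat forces, by \cref{obs:functionalnature}, the entire subsequent colouring pattern to repeat, and by re-routing where the later copy of the gadget attaches, an algorithm that leaves a pattern vertex uncoloured (seeing input $(0,x,0,0)$) is driven into an endless loop; hence all pattern vertices are coloured.

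Second, your odd-distance case is not just ``a bit longer'' --- it is doubtful as proposed. With the merged leaf $l_{i/j}$ the adversary's power comes from the agent standing on a vertex with two same-coloured neighbours $v_i,v_j$, so it can be bounced back to $v_i$ at will. With your edge $\{l_i,l_j\}$ that choice disappears: to confuse the agent you would need the \emph{environments} of $v_i$ and $v_j$ (colours of $v_{i\pm1}$, $v_{j\pm1}$, $l_i$, $l_j$) to coincide, and a single repeat $\chi(v_i)=\chi(v_j)$ does not give you that; an algorithm could legitimately tell the two apart from the surrounding colours. The paper sidesteps the odd case entirely: since one repeat forces the whole pattern to repeat, running the walk on a path of length $3n_0$ guarantees two equally coloured vertices in the same partition class, so only the even-distance, merged-leaf gadget is ever needed. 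Your adjacent-repeat (monochromatic tail) argument is fine and parallels the paper's remark in \cref{thm:generallowerbound}, but to complete your proof you would have to either import the pattern-repetition idea anyway --- at which point the odd gadget becomes unnecessary --- or supply the missing trapping analysis, which you have only conjectured.
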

		\begin{proof}
			Assume towards contradiction that such an algorithm \alg and such a number $n_0$ exists.
			As in the proof of \cref{thm:generallowerbound}, we let \alg walk on graphs of the family $G_{N,i,j}$ with $j-i$ even and for $N=3n_0$.
			By \cref{lem:cantgoback}, we can ensure that \alg walks on the path without exploring any leaves.
			Since \alg uses at most $n_0-2$ colors on every graph of size $n_0$, \alg uses a color twice on its first $n_0-1$ steps, say $c(v_l)=c(v_m)$ for $l<m$. Then, however, \alg is going to repeat its entire pattern after $v_l$, so for all $k\in \naturalnumberpositive$:
			\[c(v_{l+k})=c(v_{m+k}).\]
			Since our graph is large enough, we can ensure that this pattern is repeated at least three times.
			This means that we can find two vertices of the same color that are in the same partition.
			We want to apply the reasoning of \cref{thm:generallowerbound} and hence connect the vertices as follows to ensure that the graph stays bipartite:
			If the distance $(m-l)$ of $v_l$ and $v_m$ is even, we define $v_{i} \coloneqq v_{l+1}$ and $v_{j} \coloneqq v_{m+1}$ and connect these vertices via a leaf $l_{i/j}$.
			If the distance $(m-l)$ is odd, we define $v_{i} 
			\coloneqq v_{l+1}$ and $v_{j} \coloneqq v_{m+1+(m-l)}$ 
			and connect them via a leaf $l_{i/j}$.
		
		Now, in order to apply the same reasoning as in the proof of \cref{thm:generallowerbound}, we have to guarantee that the entire neighborhood of $v_i$ and $v_j$ is colored once \alg visits and colors $l_{i/j}$.
		To ensure this, we prove that no vertex in the repeating pattern $v_l,\ldots, v_{m-1}$ is left uncolored: Assume that any vertex in the repeating pattern would be left uncolored, say $c(v_{m-1})=0$.
		Clearly, \alg cannot leave two consecutive vertices uncolored; otherwise, \alg would leave all future vertices on the path uncolored.
		Therefore, $v_{m-2}$ is colored and we denote its color by $x$.
		On vertex $v_{m-1}$, \alg received as input $(0,x,0,0)$ since there are two uncolored neighbors and one neighbor with color $x$.
		If $m-l$ is even, connect vertex $v_{m+(m-l)}$ with $v_{m-2}$ instead of $v_{m+1+(m-l)}$. If $m-l$ is odd, connect $v_{m+2(m-l)}$ with $v_{m-2}$ instead of $v_{m+1+2(m-l)}$. Assume without loss of generality that $m-l$ is even and observe what happens when the algorithm reaches $v_{m-1+(m-l)}$: This vertex is left uncolored as $v_{m-1}$ and the algorithm then visits $v_{m+(m-l)}$.
		Upon arrival in $v_{m+(m-l)}$, the algorithm receives $(0,x,0,0)$ as input and therefore decides not to color $v_{m+(m-l)}$ and instead goes to an uncolored neighbor. The adversary arranges the port labels such that $v_{m-1+(m-l)}$ is chosen. Now, \alg is in an endless loop. Therefore, all vertices of the repeating pattern receive a color.
		
		Hence, the entire neighborhood of $v_i$ and $v_j$ is colored once \alg visits $l_{i/j}$ and we can now apply the same reasoning as in the proof of \cref{thm:generallowerbound} to conclude that \alg may never repeat its pattern.
		However, this time, the smallest bipartite graph that looks locally like some $G_{n_0,i,j}$ for the first $n_0-1$ steps is not exactly $G_1$ from \cref{fig:generallowerbound}, but $G_{\text{bipartite}}$, where $u$ is only connected to all odd vertices $v_h$ and an additional vertex $u_{\text{even}}$ is connected to all even path vertices, leading to a lower bound of $n-2$.
	\end{proof}
	
	We now argue why the \emph{circumference} of a graph, i.e., the size of a longest simple cycle, is the right parameter.
	First, we observe that the construction from the proof of \cref{thm:generallowerbound} immediately results in the following corollary:
	\begin{corollary} \label{cor:parameterizedlowerbound}
		Any algorithm needs at least $k-2$ colors to color all graphs of circumference at most $k$.
	\end{corollary}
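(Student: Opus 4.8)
The plan is to reuse the graph family $G_{n_0,i,j}$ built in the proof of \cref{thm:generallowerbound} and merely observe that its members have small circumference. Each $G_{n_0,i,j}$ is a caterpillar — a path $v_1,\dots,v_{n_0-1}$ with one pendant leaf at every vertex — in which the two leaves $l_i$ and $l_j$ have been identified into $l_{i/j}$. A caterpillar is a tree, so identifying two of its leaves creates exactly one cycle, namely $v_i\,v_{i+1}\cdots v_j\,l_{i/j}\,v_i$, of length $(j-i)+2$; hence the circumference of $G_{n_0,i,j}$ equals $(j-i)+2$ (this also covers $j=i+1$, where the cycle is the triangle $v_i\,v_{i+1}\,l_{i/j}$).

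For $k\le 3$ there is nothing to prove, since then $k-2\le 1$ and every connected graph with an edge already needs at least one color, by the two-vertex case of \cref{lem:nonuniform_lowerbound_size-2-and-3}. So assume $k\ge 4$ and, towards a contradiction, that some algorithm \alg explores all graphs of circumference at most $k$ using at most $k-3$ colors. First I would fix $n_0$ large enough relative to $k$ and run \alg on a member of the family $G_{n_0,i,j}$ with the indices $i,j$ still open, letting the adversary drive \alg along $v_1,v_2,\dots$ without touching any leaf and without leaving any visited path vertex uncolored; this is exactly the opening move of the proof of \cref{thm:generallowerbound} and is legitimate by \cref{lem:cantgoback} and \cref{lem:cantleaveuncolored}. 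Let $v_j$ be the first path vertex that receives a color already seen on an earlier path vertex, say $c(v_j)=c(v_i)$ with $i<j$. Since $v_1,\dots,v_{j-1}$ then carry pairwise distinct colors and only $k-3$ colors are available, $j-1\le k-3$, hence $j-i\le j-1\le k-3$.

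Now I would freeze exactly this pair $(i,j)$ and look at \alg on $G_{n_0,i,j}$. Its circumference is $(j-i)+2\le(k-3)+2=k-1\le k$, so \alg is required to explore it. But from this point on the argument of \cref{thm:generallowerbound} applies verbatim — including the sentence there that handles the adjacent case $j=i+1$ — and shows that \alg is trapped: returning toward $v_1$ it is forced onto $l_{i/j}$, cannot distinguish $v_i$ from $v_j$, and from $v_i$ every available move (to $l_{i/j}$, to $v_{i-1}$, or to $v_{i+1}$) either loops forever or forfeits exploring a leaf or forfeits the return to $v_1$. This contradicts the assumption, so at least $k-2$ colors are necessary.

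The single new ingredient is the circumference computation of the first paragraph, which is immediate, so I do not expect a real obstacle. The only points that need a little care are that the pigeonhole is now applied to the initial segment $v_1,\dots,v_{k-2}$ of the path rather than to all of $v_1,\dots,v_{n_0-1}$, and that $n_0$ must be chosen large enough for \cref{lem:cantgoback}, \cref{lem:cantleaveuncolored}, and the trapping argument to have room to operate — both cost-free, since $n_0$ is ours to pick. In short, the statement is a genuine corollary of \cref{thm:generallowerbound}.
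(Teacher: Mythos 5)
Your proposal is correct and is exactly the argument the paper intends: the paper states the corollary as an immediate consequence of the construction in \cref{thm:generallowerbound}, and you simply make this explicit by noting that $G_{n_0,i,j}$ has circumference $(j-i)+2$ and applying the pigeonhole to the prefix $v_1,\ldots,v_{k-2}$ so that the repeated pair yields a cycle of length at most $k-1$, after which the trapping argument (and the non-adjacency reduction) carries over unchanged. Your added care about the $j=i+1$ case and the choice of $n_0$ matches the paper's treatment and introduces no new issues.
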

	Second, we present with Algorithm~\ref{alg:smalldfs} an algorithm that uses at most $2k-1$ colors and successfully explores all graphs of circumference at most $k\ge 3$.
	This algorithm sorts numbers $a,b$ with respect to whether the 
	distance between $a,\ldots,b$ or the distance between 
	$b,b+1,\ldots,2k-1,1,2,\ldots,a$ is larger.
	We use the following maximum function for three natural numbers $a,b,k$ with $a\le b\le 2k-1$:
	\[{\max}^k\{a,b\} \coloneqq \begin{cases}
		a \text{, if } |b-a+1| < |2k-b+a| \\
		b \text{, if } |b-a+1| > |2k-b+a|. \\
	\end{cases}\]
	Note that, if the two values $(b-a+1)$ and $(2k-b+a)$ were equal, this would imply $2(b-a) = 2k-1$, which is impossible for three natural numbers $a,b$ and $k$.
	In other words, $\maxk$ is well-defined.
	With this maximum function, that is, calculating modulo $2k-1$ shifted by 1, a number is smaller than the next $k-1$ numbers and larger than the preceding $k-1$ numbers. For example, for $k\le 11$, $2k-10$ is smaller than $2k-9,\ldots,2k-1,1,\ldots k-10$ and larger than $2k-11,\ldots, 2k-10-(k-1)=k-9$.
	
	We want to extend $\maxk$ to multiple values $a_1\le\ldots\le a_n \in [2k-1]$, i.e., we would like to define $\maxk\{a_1,\ldots, a_n \}$.
	However, it is not immediately clear how to do that since $\maxk$ is not necessarily transitive. For example, $\maxk\{1,2\}=2$ and $\maxk\{2,k+1\}=k+1$, but still $\max^k\{1,k+1\}=1$. Thus, $\max\{1,2, k+1\}$ could by either $1$, $2$ or $k+1$.
	This problem vanishes if there is an input element (for the maximum function) $a_i$, $i\in [n]$ such that all other input elements $a_j$ are among the $k-1$ elements preceding $a_i$.
	It will become clear later that the construction of algorithm 
	$\textscalt{SmallDFS}_k$ enforces this situation.
	In this case, we use the natural extension of $\maxk$;
	otherwise, we define $\maxk$ to be just the normal maximum function.
	It will become clear later that $\maxk$ is not going to be the ``normal maximum function.''
	
	\begin{algorithm}[htbp]
		\caption{$\textscalt{SmallDFS}_k$}
		\label{alg:smalldfs}
		\textbf{Input:} A graph whose circumference is at most $k$.
		In each step, the input is $c(v) \in \N$, the color of the current vertex $v$, and $c(v_1), \ldots, c(v_d)$, the colors of $v$'s neighbors, where $d$ is the degree of $v$.\\
		\textbf{Output:} In each step, the algorithm outputs $c(v)$, the color the agent assigns to $v$, and $i \in [d]$, the vertex the agent has to move to next.\\
		\begin{algorithmic}[1]
			\If{$c(v)=0$}
			\State $c(v) \coloneqq \left(\left(\underset{v' \in N(v)}\maxk c(v')\right)+1\right) \modi (2k-1)$ \label{line:smalldfscoloring}
			\EndIf
			
			\If{there is an uncolored neighbor $w$} \label{line:smalldfsconditionuncoloredneighbor}
			\State go to $w$
			\ElsIf{there is a neighbor $w$ with $c(w) = \left(c(v)-1\right) \modi (2k-1)$}
			\State go to $w$ \label{line:smalldfsgobackwards}
			\Else
			\State \textbf{terminate}
			\EndIf	
		\end{algorithmic}
	\end{algorithm}
	
	The strategy of $\textscalt{SmallDFS}_k$ is a depth-first 
	search strategy that assigns color $c \modi (2k-1)$ to vertices 
	that are visited in distance $c-1$ (in the depth-first search 
	tree) from the start vertex.
	This is done by looking at all neighbor colors and then assigning the largest color modulo $(2k-1)$ to the current vertex.
	However, in order to be able to assign the color $1$ after the color $2k-1$, the algorithm takes the maximum function defined above to determine the largest color number.
	A problem might arise if the maximum function were not well-defined.
	However, since there are no cycles of length $k+1$, the colors of the current vertex and all its neighbors are always going to be from $k$ sequential numbers in the number range $1, \ldots, 2k-1, 1, \ldots k$. 
	Thus, it is possible, as we are going to prove in a moment, to determine the direct predecessor and backtrack accurately.
	
	\begin{lemma}\label{lem:smalldfsterminates}
		$\textscalt{SmallDFS}_k$ terminates.
	\end{lemma}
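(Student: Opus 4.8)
The plan is to prove termination by a purely combinatorial counting argument: I would bound separately the number of \emph{forward} moves the agent makes (those taken in the branch ``there is an uncoloured neighbour'' of \textscalt{SmallDFS}$_k$) and the number of \emph{backward} moves (those taken via line~\ref{line:smalldfsgobackwards}), invoking the circumference assumption only to control the latter. Once both are finite, the agent must eventually reach the \textbf{terminate} instruction.

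First I would observe that $\textscalt{SmallDFS}_k$ colours a vertex exactly on the step on which it is visited while still uncoloured (line~\ref{line:smalldfscoloring}), so no vertex is ever coloured twice; that every forward move leads to an uncoloured neighbour, which is therefore coloured on the following step; and that every backward move, being taken only when no uncoloured neighbour exists, leads to an already coloured vertex. Consequently distinct forward moves end at distinct vertices, and since the start vertex is the only vertex coloured without a preceding forward move, there are at most $n-1$ forward moves in the whole run.

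Next I would bound the length of any maximal block $u_0,u_1,\ldots,u_m$ of consecutive backward moves. Here every $u_i$ is coloured, $\{u_i,u_{i+1}\}$ is an edge, and $c(u_{i+1})=(c(u_i)-1)\modi(2k-1)$; that is, consecutive colours along the block are images of one another under the cyclic shift $c\mapsto(c-1)\modi(2k-1)$ of $\{1,\ldots,2k-1\}$, whose order is exactly $2k-1$. If some vertex occurred twice in the block, pick $a<b$ with $u_a=u_b$ and $b-a$ minimal; then $u_a,u_{a+1},\ldots,u_{b-1}$ are pairwise distinct, so $u_a u_{a+1}\cdots u_{b-1}u_a$ is a \emph{simple} cycle of length $b-a$, and since $c(u_b)$ is obtained from $c(u_a)$ by applying the shift $b-a$ times while $c(u_a)=c(u_b)$, we must have $(2k-1)\mid(b-a)$, hence $b-a\ge 2k-1>k$. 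This contradicts the assumption that the input graph has circumference at most $k$, so in fact the $u_i$ are pairwise distinct and $m\le n-1$; every maximal backward block therefore consists of at most $n-1$ moves.

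Finally, between any two consecutive maximal backward blocks the agent makes at least one move that is not backward, and since the only alternatives are a forward move or stopping — the latter impossible in the middle of the run — there are at most $n$ such blocks, so the whole run has length at most $(n-1)+n(n-1)=n^2-1$ and is in particular finite. The one delicate point is the cycle argument of the third step: one must check carefully that a repeated vertex inside a backward block really does yield a simple cycle whose length is a positive multiple of $2k-1$, which is exactly where $2k-1>k$ and the circumference bound come in. It is worth noting that this argument does not presuppose that backward moves return to the true depth-first predecessor; that stronger statement is only needed — and only established — afterwards for the correctness proof.
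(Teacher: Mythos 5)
Your proof is correct and rests on the same two ingredients as the paper's own argument: forward moves are finitely many because each one colours a fresh vertex, and an unbroken stretch of backtracking moves decrements the colour by $1$ modulo $2k-1$, so a repeated vertex would force a simple cycle of length at least $2k-1>k$, contradicting the circumference bound. The paper packages this as the impossibility of an endless loop while you make the counting explicit (and spell out the divisibility and simple-cycle extraction that the paper leaves implicit), but the approach is essentially the same.
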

	\begin{proof}
		We prove that there are no endless loops, i.e., there is no 
		sequence of vertices $v_1,v_2,\ldots,v_l, v_{l+1}=v_1$ such 
		that $\textscalt{SmallDFS}_k$ always goes from $v_i$ to 
		$v_{i+1 \modi (2k-1)}$. 
		Assume towards contradiction that such a sequence exists. 
		If $\textscalt{SmallDFS}_k$ arrives at a vertex that has 
		been colored before, it goes to an uncolored neighbor 
		regardless of the color of any other vertices.
		Therefore, at some point, there will be no more uncolored neighbors of the vertices in the loop. 
		In other words, the only line that can be applied is line~\ref{line:smalldfsgobackwards}.
		However, to obtain an endless loop, the colors in the loop would then have to contain the sequence $(1,2,\ldots,2k-1)$.
		This is a contradiction to the assumption that the graph has a circumference of at most $k$.
	\end{proof}
	
	\begin{lemma}\label{lem:smalldfsterminatesonstartvertex}
		$\textscalt{SmallDFS}_k$ terminates on the start vertex.
	\end{lemma}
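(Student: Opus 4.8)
The plan is to locate precisely where the \textbf{terminate} instruction of $\textscalt{SmallDFS}_k$ can ever be reached. By \cref{lem:smalldfsterminates} the agent executes \textbf{terminate} after finitely many steps, so it suffices to show that this never happens on a vertex other than the start vertex $v_0$. The \textbf{terminate} line is guarded by two requirements — that the current vertex $v$ has no uncoloured neighbour and that no neighbour carries the colour $(c(v)-1)\modi(2k-1)$ — so I only have to violate one of them for every $v\neq v_0$: if $v$ still has an uncoloured neighbour the agent moves forward, and otherwise I will exhibit a neighbour of colour $(c(v)-1)\modi(2k-1)$, forcing the agent to backtrack via line~\ref{line:smalldfsgobackwards}.

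The core is a depth/colour invariant. Let $T$ be the tree whose edges join every $v\neq v_0$ to the vertex $p(v)$ from which $v$ was first reached, and let $d_T(v)$ be the distance from $v_0$ to $v$ in $T$. I would first establish, by a joint induction on the number of steps mirroring the proof of \cref{lem:directpredecessors}, that $\textscalt{SmallDFS}_k$ genuinely performs a depth-first traversal of $T$: backtracking from a fully coloured vertex returns to its $T$-predecessor, and when a vertex $v$ is first discovered from $p(v)$, every already-coloured neighbour of $v$ is an ancestor of $p(v)$ in $T$ (possibly $p(v)$ itself); uncoloured neighbours count as colour $0$ and are irrelevant to the maximum. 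The only mechanism behind this is that line~\ref{line:smalldfsgobackwards} and the \textbf{terminate} line are entered exclusively when the current vertex has no uncoloured neighbour, so the agent can never leave a vertex ``upwards'' while one of its neighbours is still uncoloured. In parallel I need a consequence of the circumference bound: for any edge $\{u,w\}$, splicing the $T$-path from $u$ up to $\mathrm{lca}(u,w)$, the $T$-path down to $w$, and the edge $\{u,w\}$ yields a cycle, of length at most $k$, whence $|d_T(u)-d_T(w)|\le k-1$.

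Using these ingredients I would then prove, by induction on first-visit order, that every visited vertex satisfies $c(v)=(1+d_T(v))\modi(2k-1)$. The base case is $c(v_0)=1$, since all neighbours of $v_0$ are still uncoloured when $v_0$ is coloured. In the inductive step, at the first visit of $v$ from $p:=p(v)$ the coloured neighbours of $v$ are ancestors of $p$ with $d_T$-values in $\{d_T(p)-(k-2),\dots,d_T(p)\}$ — a block of at most $k-1$ consecutive integers whose maximum is $d_T(p)$ — so by induction their colours are the matching residues modulo $2k-1$, the evaluation of $\maxk$ in line~\ref{line:smalldfscoloring} falls into the ``unproblematic'' regime described before the algorithm (with $c(p)$ as the distinguished element), and it returns $c(p)$; hence $c(v)=(c(p)+1)\modi(2k-1)=(1+d_T(v))\modi(2k-1)$. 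Granting the invariant, the lemma follows at once: for $v\neq v_0$ with all neighbours coloured, $p(v)$ is a neighbour with $c(p(v))=(1+d_T(p(v)))\modi(2k-1)=(c(v)-1)\modi(2k-1)$, so the agent backtracks instead of terminating, while if some neighbour of $v$ is uncoloured it moves forward. Thus \textbf{terminate} is executed only on $v_0$. (Consistently, the circumference bound forces every neighbour of $v_0$ to lie at $T$-distance at most $k-1$ and hence to have colour in $\{1,\dots,k\}$, so the colour $2k-1=(c(v_0)-1)\modi(2k-1)$ indeed never occurs around $v_0$.)

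The main obstacle is this joint induction for the depth-first structure together with the colour/depth invariant: one must rule out that a freshly discovered vertex has a coloured neighbour that is not an ancestor of its predecessor and whose colour would corrupt the $\maxk$ computation, and one must check that the window estimate coming from the circumference keeps every $\maxk$-call inside the regime where $\maxk$ behaves like an ordinary maximum. This is the modular analogue of the case distinction carried out in the proof of \cref{lem:directpredecessors}, with every ``$\pm1$'' now read modulo $2k-1$; the remaining steps are routine bookkeeping.
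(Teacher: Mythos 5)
Your proposal is correct in outline, but it takes a much heavier route than needed for this particular lemma. The paper's own proof shares your skeleton---\textbf{terminate} can only fire on a vertex with no uncolored neighbor and no neighbor of color $(c(v)-1)\modi(2k-1)$, and by \cref{lem:smalldfsterminates} the run is finite, so it suffices to rule out non-start vertices---but it dispatches the key step in one line: when a non-start vertex $v$ is colored in line~\ref{line:smalldfscoloring}, the operator $\maxk$ returns the color of \emph{some} neighbor (at least the predecessor is already colored, and uncolored neighbors contribute only $0$), so that neighbor automatically has color $(c(v)-1)\modi(2k-1)$ forever after. For this argument it is irrelevant which neighbor realizes the maximum, whether $\maxk$ is in its ``well-defined'' regime, and what the circumference is. You instead identify the maximizer as the DFS-predecessor via a joint induction establishing the stack discipline, the ancestor property for colored neighbors, and the invariant $c(v)=(1+d_T(v))\modi(2k-1)$ using the circumference bound $|d_T(u)-d_T(w)|\le k-1$. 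That machinery is sound---it is essentially the content of \cref{lem:smalldfspredecessor}, which the paper proves afterwards and needs anyway to show that all vertices are explored---but you leave it as a sketch, and none of it is required here; the cost of your approach is that the easy lemma now rests on the hardest induction in the section, while the benefit is only that you get the predecessor identification ``for free'' for later use.
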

	\begin{proof}
		Since every vertex $v$ except the start vertex has a 
		neighbor $w$ with $c(w) = c(v)-1 \modi (2k-1)$ because of 
		line~\ref{line:smalldfscoloring}, $\textscalt{SmallDFS}_k$ 
		cannot terminate on a vertex that is not the start vertex.
		Since $\textscalt{SmallDFS}_k$ terminates due to 
		\cref{lem:smalldfsterminates}, it has to terminate on the 
		start vertex.
	\end{proof}
	
	We proved that $\textscalt{SmallDFS}_k$ terminates on the start 
	vertex and only on the start vertex. 
	We have yet to show that all vertices are indeed explored.

	\begin{lemma}\label{lem:smalldfspredecessor}
		On any vertex $v$, either $\textscalt{SmallDFS}_k$ goes to 
		an unvisited vertex or it goes back to the direct 
		predecessor of $v$.
	\end{lemma}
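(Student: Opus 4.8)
The plan is to follow the proof of \cref{lem:directpredecessors} almost verbatim, the only genuinely new issue being the cyclic color range $\{1,\dots,2k-1\}$ and the operator $\maxk$. As there, I would induct on the number of times line~\ref{line:smalldfsgobackwards} is applied and show that, whenever this backtracking step is taken on a vertex $v$, the neighbor $w$ with $c(w)=(c(v)-1)\modi(2k-1)$ is \emph{unique} and equals the direct predecessor of $v$. Running alongside the induction, one maintains the two structural facts that drive the argument: every already-colored neighbor of a newly visited vertex is one of its ancestors in the tree of predecessors, and along every root-to-vertex path of this tree the colors form the sequence $1,2,3,\dots$ taken $\modi(2k-1)$, so that a vertex first reached at tree-depth $d(v)$ carries color $d(v)\modi(2k-1)$.

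The role played in \cref{lem:directpredecessors} by ``colors never decrease except when backtracking to the predecessor'' is played here by the same statement read \emph{inside a window}: the circumference bound keeps all the colors that the algorithm ever has to compare within a block of at most $k$ cyclically consecutive values of $\{1,\dots,2k-1\}$, and since $k<2k-1$ such a block injects into $\{1,\dots,2k-1\}$, so on it $\maxk$ (and the extension to several arguments discussed before the algorithm) agrees with the natural order. Concretely, when a vertex $v$ is colored, each of its already-colored neighbors is, by the inductive ancestor property, an ancestor $a$ of $v$; the tree path from $a$ down to $v$ together with the edge $av$ is a cycle, hence of length at most $k$, so $d(v)-d(a)\le k-1$. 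Thus the colors of $v$'s colored neighbors at that moment lie in a window of $k-1$ consecutive residues, on which $\maxk$ returns the neighbor of largest depth, namely the predecessor $p$; therefore $c(v)=\bigl(c(p)+1\bigr)\modi(2k-1)$. For the backtracking step, the color $(c(v)-1)\modi(2k-1)$ corresponds to depth $d(v)-1$; a colored neighbor that is an ancestor sits at depth in $[\,d(v)-k+1,\,d(v)-1\,]$ and one that is a descendant at depth in $[\,d(v)+1,\,d(v)+k-1\,]$, and for $k\ge 3$ none of these residues except that of $p$ itself equals $(c(v)-1)\modi(2k-1)$, which yields uniqueness.

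The remainder of the induction step is then literally that of \cref{lem:directpredecessors}: one rules out a colored neighbor $z$ whose color exceeds $c(p)$ in the window order --- the only way to have colored such a $z$ after $p$ without passing through $p$ would be to backtrack away from $z$ while $v$ is still an uncolored neighbor of $z$, which line~\ref{line:smalldfsgobackwards} forbids --- and one rules out a second neighbor of color $(c(v)-1)\modi(2k-1)$ by the same ``last occurrence in the traversal sequence'' analysis, now carried out inside the fixed block so that ``the color drops by one'' has its literal meaning. The step I expect to be the main obstacle is keeping this non-circular: the block invariant is needed even to parse the decisions the algorithm takes, yet its proof rests on the ancestor property, which is part of what the induction delivers. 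I would resolve this by proving the three assertions --- block invariant, ancestor property, and ``backtracking reaches the predecessor'' --- simultaneously by induction on the number of executed steps (equivalently, on the number of backtracking moves), so that at every step only strictly earlier instances are invoked; repeated visits to a vertex cause no trouble because its set of colored neighbors only grows and existing colors are never changed.
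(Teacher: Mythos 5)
Your proposal is correct, and it reaches the two facts that the paper also isolates---that a newly colored vertex always receives its predecessor's color plus one (mod $2k-1$), and that at backtracking time the neighbor with color $(c(v)-1)\modi(2k-1)$ is unique---but by a noticeably different route. The paper argues by minimal counterexample: it considers the first step at which one of the two claims fails and, in three cases, extracts from the traversal sequence a cycle of length at least $k+1$ (respectively $2k+1$), contradicting the circumference bound, thereby staying close in form to the proof of \cref{lem:directpredecessors}. You instead maintain explicit structural invariants---the DFS ``no cross edge'' property (every colored neighbor of a freshly visited vertex is an ancestor in the predecessor tree) and the color-equals-depth-$\modi(2k-1)$ labeling---and then obtain both claims from a single arithmetic observation: any ancestor or descendant adjacent to $v$ differs from it in depth by at most $k-1$ (short-cycle bound), so all depths the agent ever compares at $v$ lie in a window of $2k-1$ consecutive integers, which injects into the residues; within that window $\maxk$ agrees with the natural order and the residue $(c(v)-1)$ is hit only by the parent. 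Your handling of the apparent circularity (ancestor property needs the backtracking claim and vice versa) by a simultaneous induction on executed steps is exactly the right fix and mirrors what the paper's ``first violating step'' device accomplishes implicitly. What each approach buys: yours makes transparent why $2k-1$ is precisely the right number of colors and avoids the paper's somewhat delicate search-sequence cycle constructions (where one must argue the walk yields a simple cycle); the paper's avoids formalizing the tree/ancestor machinery and reuses the earlier lemma's template almost verbatim. The only blemish is a harmless off-by-one in your depth convention (a vertex at distance $d$ from the root carries color $(d+1)\modi(2k-1)$, not $d\modi(2k-1)$), which does not affect the argument.
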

	\begin{proof}
		Similar to the proof of \cref{lem:directpredecessors}, we 
		prove something slightly different, namely that (1) 
		$\textscalt{SmallDFS}_k$ always assigns color $c(v')+1 
		\modi (2k-1)$ to a vertex $v$ with predecessor $v'$ and 
		that (2) there is never a vertex $v$ with two neighbors 
		$w_1,w_2$ with $c(w_1)=c(w_2)=c(v)-1 \modi (2k-1)$. This 
		implies that $\textscalt{SmallDFS}_k$ always goes back to 
		the direct predecessor.
		
		Assume towards contradiction that one of these two claims is wrong.
		Consider the first step in which one of these claims is wrong. We have three cases:
		\begin{description}
			\item[Case 1] In this step, $\textscalt{SmallDFS}_k$ 
			does not assign color $c(v')+1 \modi (2k-1)$ to a 
			vertex $v$ with predecessor $v'$.
			This can only happen if there is a neighbor $x$ of $v$ with greater color than $v'$, that is, $\maxk\{c(x),c(v')\}=c(x)$.
			
			Both $x$ and $v'$ are colored before $v$. Assume as 
			subcase 1.1 that $x$ is colored before $v'$. Consider 
			the search sequence $(x,v_1,\ldots,v_n, v')$ of 
			$\textscalt{SmallDFS}_k$ between $x$ and $v'$. The 
			first step when $\textscalt{SmallDFS}_k$ ``skips'' a 
			color is between $v'$ and $v$; hence, the colors 
			between vertices next to each other in this search 
			sequence differ by exactly 1.
			Since $c(x)$ is smaller than the next $k-1$ numbers and since we have $\maxk\{c(x),c(v')\}=c(x)$, there have to be at least $k-1$ vertices between $x$ and $v'$.
			Thus, we have together with $v$ a cycle of length at least $k+2$, which is not allowed.
			Hence, $x$ is not colored first.
			
			Consider subcase 1.2, namely that $v'$ is colored before $x$.
			Then $\textscalt{SmallDFS}_k$ went from $v'$ to $x$, 
			then back to $v'$ and then to $v$. However, this is not 
			possible since backtracking (applying 
			line~\ref{line:smalldfsgobackwards}) is only allowed 
			when there are no uncolored neighbors and thus, 
			$\textscalt{SmallDFS}_k$ cannot backtrack from $x$, 
			which has $v$ as an uncolored neighbor.
			\item[Case 2] In this step, $\textscalt{SmallDFS}_k$ 
			assigns color $c(v)$ to a vertex $v$ with two neighbors 
			$w_1,w_2$ with $c(w_1)=c(w_2)=c(v)-1 \modi (2k-1)$. 
			Assume without loss of generality that first $w_1$ is 
			visited (and colored), then $w_2$, then $v$.
			We can argue exactly as before. The search sequence of 
			$\textscalt{SmallDFS}_k$ between $w_1$ and $w_2$ is 
			$(w_1,v_1,\ldots, v_r, w_2)$. 
			$\textscalt{SmallDFS}_k$ could not backtrack from $w_1$ 
			and thus went to some uncolored vertex $v_1$. 
			Again, backtracking would only be possible up to $w_1$; therefore, we can assume without loss of generality that all vertices $v_1,\ldots,v_r$ in the search sequence are visited for the first time.
			In order for $w_2$ to obtain the same color as $w_1$, there have to be at least $2k-2$ vertices between $w_1$ and $w_2$.
			Since $w_1$ and $w_2$ are both adjacent to $v$, this results in a cycle of size $2k+1$.
			
			\item[Case 3] In this step, $\textscalt{SmallDFS}_k$ 
			assigns color $c(w_1)$ to a vertex $w_1$ adjacent to a 
			vertex $v$ with $c(w_1)=c(v)-1 \modi (2k-1)$ and $v$ is 
			adjacent to a vertex $w_2 \ne w_1$ with $c(w_2)=c(w_1)$.
			Consider the search sequence $(v,\ldots,w_1)$ between $v$ and $w_1$.
			Since $w_1$ receives a greater number than $v$, there have to be at least $k-1$ vertices between $v$ and $w_1$. Together with $v$ and $w_1$, they form a cycle of size $k+1$, which is impossible.
		\end{description}
		
		We see that the assumption that one of these two claims is wrong leads to a contradiction; therefore, both claims are true, which proves the lemma.
	\end{proof}

	\begin{lemma}
		$\textscalt{SmallDFS}_k$ explores all vertices.	
	\end{lemma}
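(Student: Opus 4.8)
The plan is to argue by contradiction, mirroring the proofs of \cref{lem:treeverticesareexplored} and \cref{lem:dfsexploresallvertices}. Suppose there is a graph $G=(V,E)$ of circumference at most $k$ on which $\textscalt{SmallDFS}_k$ fails to visit some vertex. Let $C\subset V$ be the set of vertices that are visited and thus colored, and let $U=V\setminus C$; by assumption $U\neq\emptyset$. Since $G$ is connected, there is an edge $\{v,w\}$ with $v\in C$ and $w\in U$. The whole run is finite by \cref{lem:smalldfsterminates}, so $v$ is visited only finitely often.

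Next I would zoom in on the vertex $v$. At every moment the agent sits on $v$, the neighbor $w$ is uncolored, so the guard of line~\ref{line:smalldfsconditionuncoloredneighbor} holds; hence the agent can neither backtrack (line~\ref{line:smalldfsgobackwards} is reached only when no uncolored neighbor exists) nor terminate, and it must move forward to some uncolored neighbor of $v$. If that neighbor were $w$, then $w$ would be visited, a contradiction; so it is another uncolored neighbor $u$, which becomes a child of $v$ in the depth-first search tree. By \cref{lem:smalldfspredecessor}, from then on every backtracking step of $\textscalt{SmallDFS}_k$ leads to the current vertex's direct predecessor, so the agent's trajectory is a genuine depth-first traversal of the explored part of $G$: after entering the subtree rooted at $u$ it eventually unwinds it and returns to $v$. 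This return is guaranteed because the run is finite (\cref{lem:smalldfsterminates}), because the start vertex---the only possible termination point by \cref{lem:smalldfsterminatesonstartvertex}---is either equal to $v$ or a strict ancestor of $v$ in the tree, so the agent must pass through $v$ on any upward path toward it, and because when $v$ is the start vertex the presence of the uncolored neighbor $w$ forbids termination at $v$.

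Iterating this observation, each visit to $v$ forces one more forward move to an uncolored neighbor, so after finitely many returns to $v$ every neighbor of $v$ outside $U$ has been colored while $w\in U$ is still uncolored; at the next visit the forced forward move from $v$ then goes into $U$, contradicting the definition of $U$. Hence no such $G$ exists and all vertices are explored. The ingredients are the forward/backtracking structure of lines~\ref{line:smalldfsconditionuncoloredneighbor} to~\ref{line:smalldfsgobackwards}, \cref{lem:smalldfsterminates} (no endless loops), \cref{lem:smalldfsterminatesonstartvertex} (termination only at the start vertex), and, crucially, \cref{lem:smalldfspredecessor}, which upgrades the run to a clean tree traversal. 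I expect the main obstacle to be precisely the step guaranteeing that the agent returns to $v$: in a graph with cycles one must rule out the agent escaping along a non-tree edge instead of coming back up, and this is exactly what \cref{lem:smalldfspredecessor} prevents, so the argument hinges on invoking it correctly rather than on any new computation.
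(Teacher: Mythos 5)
Your proposal is correct and follows essentially the same route as the paper: the paper's own proof is a one-line reference stating that, since \cref{lem:smalldfspredecessor} is the analogue of \cref{lem:directpredecessors}, the argument of \cref{lem:dfsexploresallvertices} carries over, and your write-up is exactly that argument spelled out (boundary edge $\{v,w\}$ between colored and uncolored vertices, forced forward moves at $v$, and return to $v$ guaranteed by backtracking to direct predecessors). The additional explicit appeals to \cref{lem:smalldfsterminates} and \cref{lem:smalldfsterminatesonstartvertex} are a harmless strengthening of the same idea.
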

	\begin{proof}
		Since \cref{lem:smalldfspredecessor} is the analogue to 
		\cref{lem:directpredecessors}, this claim can be proved 
		analogously to \cref{lem:dfsexploresallvertices}. 
	\end{proof}
	
	These lemmas together yield the following theorem.
	\begin{theorem} \label{thm:parameterizedupperbound}
		$\textscalt{SmallDFS}_k$ uses at most $2k-1$ colors and is 
		correct on graphs with circumference at most $k$.
	\end{theorem}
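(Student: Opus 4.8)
The plan is to assemble the theorem from the lemmas already proved for $\textscalt{SmallDFS}_k$, exactly as the analogous theorem for \textscalt{DepthFirstSearch} was assembled from its lemmas. Concretely, \cref{lem:smalldfsterminates} gives termination, \cref{lem:smalldfsterminatesonstartvertex} gives that termination happens on the start vertex, and the final lemma above gives that every vertex is visited before termination; together these three facts are precisely the definition of successfully exploring a graph of circumference at most $k$, so correctness follows immediately. It remains only to bound the number of colors.

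For the color count I would argue as follows. Every color assigned by $\textscalt{SmallDFS}_k$ lies in $\{1,\ldots,2k-1\}$ by line~\ref{line:smalldfscoloring} (the coloring is taken $\modi (2k-1)$), so at most $2k-1$ distinct colors can ever appear. This already yields the claimed bound. If one wants the slightly sharper reading that the colors used actually form an interval of length at most $2k-1$ realized by the depth-first search tree, one observes, as noted in the discussion preceding the algorithm, that by \cref{lem:smalldfspredecessor} the color of a vertex at depth $c-1$ in the DFS tree is $c \modi (2k-1)$, and since no cycle has length exceeding $k$, the current vertex together with all its neighbors always carry colors drawn from $k$ consecutive residues; but for the theorem statement the crude observation that all colors lie in $[2k-1]$ is enough.

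I would therefore write the proof essentially as a one-line citation chain: correctness is \cref{lem:smalldfsterminates}, \cref{lem:smalldfsterminatesonstartvertex}, and the exploration lemma combined; the color bound is immediate from line~\ref{line:smalldfscoloring} of Algorithm~\ref{alg:smalldfs}. There is no real obstacle here, since all the substantive work — in particular the delicate case analysis showing that backtracking always returns to the direct predecessor and that the $\maxk$ function is well defined because no short cycles force two neighbors into the same residue class — has already been discharged in \cref{lem:smalldfspredecessor} and its supporting lemmas. The only thing to be mildly careful about is to state explicitly that the hypothesis $k \geq 3$ is what the preceding lemmas assumed, and that for $k \le 2$ the bound $2k-1 \le 3$ is already covered by the tree case or is vacuous, so the theorem as stated (restricted to $k \ge 3$, matching the algorithm's specification) is exactly what the lemmas deliver.

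Thus the proof is a short wrap-up; the intellectual content lives entirely in the earlier lemmas, and the main ``danger'' in writing it is merely redundancy — one should resist re-proving anything and simply invoke \cref{lem:smalldfsterminates}, \cref{lem:smalldfsterminatesonstartvertex}, the exploration lemma, and the structure of line~\ref{line:smalldfscoloring}.
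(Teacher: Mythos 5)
Your proposal is correct and matches the paper exactly: the paper states the theorem as an immediate consequence of the preceding lemmas (termination, termination on the start vertex, full exploration), with the color bound implicit in the fact that line~\ref{line:smalldfscoloring} assigns colors modulo $2k-1$. Your write-up is just a slightly more explicit version of the same citation chain.
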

	
	Contrasting Corollary~\ref{cor:parameterizedlowerbound} and 
	\cref{thm:parameterizedupperbound}, we see that 
	$\textscalt{SmallDFS}_k$ uses at most $k+1$ too many colors.
	We narrow this gap with the following theorem.
	
	\begin{theorem}
		An algorithm needs at least $2k-3$ colors to successfully 
		color all graphs of circumference at most $k$.
	\end{theorem}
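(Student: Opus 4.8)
The plan is to argue by contradiction. Suppose \alg successfully explores every graph of circumference at most $k$ using at most $2k-4$ colors; I will produce a single graph of circumference at most $k$ on which \alg fails, following the template of \cref{thm:generallowerbound} and \cref{thm:bipartitegraphslowerbound}. First I set up a long ``caterpillar'' spine $v_1,\dots,v_m$, with the start vertex $v_1$ and a fixed number of pendant leaves on every spine vertex, and use \cref{obs:cantignoreuncolored}, \cref{lem:cantgoback}, and \cref{lem:cantleaveuncolored} to force the agent to walk monotonically from $v_1$ to $v_m$, coloring every spine vertex and entering no side branch. Since \alg is a fixed function and every spine vertex $v_r$ with $2\le r\le m-1$ is first visited in the same local situation --- exactly one colored neighbor, whose color is $c(v_{r-1})$, together with a fixed number of uncolored neighbors --- the colors along the spine obey a recurrence $c(v_{r+1})=g(c(v_r))$ for a fixed function $g$. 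Hence the color sequence is eventually periodic, with pre-period and period both at most the number of colors, that is, at most $2k-4$.

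The second step is a case split on the period $p$. If $p\le k-2$, pick spine vertices $v_i,v_j$ in the periodic part with $j-i=p$ and $c(v_i)=c(v_j)$, and merge their pendant leaves into one common leaf $l_{i/j}$; this leaf is still uncolored when the agent first reaches $v_j$, so the coloring is unaffected, and the only cycle created has length $p+2\le k$, so the graph still has circumference at most $k$. Now the trap of \cref{thm:generallowerbound} applies verbatim: the agent is forced to visit $l_{i/j}$, the adversary sends it to $v_i$, and from the cut vertex $v_i$ --- whose closed neighborhood is by then colored --- every consistent choice either loops on $l_{i/j}$, strands the leaves between $v_i$ and $v_j$, or never returns to $v_1$. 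Degenerate $p$ (in particular $p\le 2$) and the pre-period are dealt with by prefixing a short tail so that the exploited repetition lies in the periodic part and $j-i\ge 2$. Handling this regime alone already forces $k-1$ colors; the remaining regime $k-1\le p\le 2k-4$ is where the extra strength must come from. There, merging leaves would produce a cycle of length $p+2>k$, so the trap has to be built around a cycle of length exactly $k$ equipped with pendant structure: the agent is forced once around it and must then backtrack, and the point to establish is that with only $2k-4$ colors this backtracking from the far side of the cycle is necessarily ambiguous --- this is precisely the obstruction that forces $\textscalt{SmallDFS}_k$ to spend $2k-1$ colors in \cref{lem:smalldfspredecessor}, read contrapositively.

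The hard part is this last regime, and it has two sources of difficulty. First, the two equally colored spine vertices handed to us by the pigeonhole principle may be as much as $2k-4$ apart, so any trapping cycle running along the spine between them would be far too long; the gadget must instead close the trap through a short side route while the long spine segment merely hangs off it as a pendant path, which in effect means rerouting the spine rather than inserting a chord, and then re-checking that the whole graph still has circumference at most $k$. Second, inserting the gadget must not disturb the coloring \alg produces, since otherwise the pigeonhole information is lost; this is secured by padding every spine vertex with dummy leaves so that all first-visit environments are identical whether or not the gadget is attached. Finally, exactly as in the passage from the family $G_{n_0,i,j}$ to $G_1$ in \cref{thm:generallowerbound} and to $G_{\text{bipartite}}$ in \cref{thm:bipartitegraphslowerbound}, one exhibits a smallest graph of circumference at most $k$ that agrees locally with the family members for the first $2k-3$ steps, forcing \alg to use $2k-3$ colors and contradicting the assumption. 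Making the circumference bound, the undisturbed coloring, and a trap that fires only once $2k-3$ colors have become unavoidable all hold simultaneously is where essentially all of the work lies.
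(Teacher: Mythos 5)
Your first stage (the forced caterpillar walk, the recurrence $c(v_{r+1})=g(c(v_r))$, eventual periodicity with period $p\le 2k-4$) and your small-period trap are in the spirit of \cref{thm:generallowerbound} and would, after patching some details (the dummy leaves of $v_i$ are still uncolored when the agent returns from $l_{i/j}$, so \cref{obs:cantignoreuncolored} sends it there first; the concluding ``smallest locally identical graph'' step is not needed at all, since any single bad graph of circumference at most $k$ already gives the contradiction), yield a contradiction only when $p\le k-2$, i.e., a lower bound of roughly $k-1$ colors---as you note yourself. The content of the theorem is precisely the remaining regime $k-1\le p\le 2k-4$, and for that regime your proposal contains no construction and no argument: you only assert that a trap ``has to be built around a cycle of length exactly $k$'' and that the required ambiguity is ``precisely the obstruction that forces $\textscalt{SmallDFS}_k$ to spend $2k-1$ colors, read contrapositively.'' That is a genuine gap, not a missing detail: the correctness analysis of one specific algorithm says nothing about what every algorithm with $2k-4$ colors can do, and read contrapositively it only tells you that $\textscalt{SmallDFS}_k$ itself would fail with fewer colors. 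Since you concede that ``essentially all of the work lies'' there, the proof is incomplete exactly where the jump from $k-1$ to $2k-3$ colors has to be earned.

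For contrast, the paper needs no case distinction on the period. It fixes two concrete graphs of circumference exactly $k$ (\cref{fig:parameterizedlowerbound}): a path $v_1,\dots,v_{2k}$ with a leaf at $v_{2k-3}$, where $G_1$ additionally has a leaf at $v_{k-1}$ and the chord $\{v_1,v_k\}$, while $G_2$ has leaves at $v_1$ and $v_k$ and the chord $\{v_{k-1},v_{2k-2}\}$. The forced walk together with the repetition of colors available with only $2k-4$ colors gives $c(v_1)=c(v_{2k-3})$, so the environment the agent sees when it first reaches, and later returns to, $v_k$ in $G_1$ is identical to the one at $v_{2k-2}$ in $G_2$; but on the return visit the correct move in $G_1$ is towards the neighbor colored $c(v_{k-1})$ (to reach $l_{k-1}$), whereas in $G_2$ it is towards the neighbor colored $c(v_1)$, namely $v_{2k-3}$ (to reach $l_{2k-3}$). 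Functionality then forces \alg to fail on one of the two graphs. This two-graph indistinguishability with length-$k$ chords placed at shifted positions is exactly the gadget your large-period case is missing---and once you have it, your case split becomes unnecessary.
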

	\begin{proof}
		Let \alg be an algorithm that uses at most $2k-4$ colors and colors every graph of circumference $k$.
		Without loss of generality, assume that \alg uses exactly $2k-4$ colors.
		The colors \alg assigns have to repeat at some point, that is, if we let \alg explore a path $v_1,\ldots,v_r$ of length $r>2k-4$, after some initial color assignments $c(v_1), c(v_2),\ldots, c(v_s)$, the colors \alg assigns always follow the same pattern $c(v_s), c(v_{s+1}),\ldots,c(v_t), c(v_s), c(v_{s+1})$ etc.
		Without loss of generality, we assume this pattern starts with $v_1$.
		This behavior can be enforced even if we add some leaves or connect some vertices of the path:
		Due to \cref{lem:cantgoback}, a general algorithm may not go back to a visited vertex if there is an unvisited neighbor.
		Therefore, we may even add some leaves to the path or connect some vertices of the path and there is still an adversary that ensures that \alg goes from $v_1$ directly to $v_r$ without visiting any leaf (apart from $v_1$ and $v_r$, of course) and without taking any shortcuts.
		
		Consider now the following two graphs.
		Take a path $v_1,\ldots,v_{2k}$ of length $2k-1$ and add a leaf $l_{2k-3}$ to $v_{2k-3}$.
		To obtain $G_1$, add a leaf $l_{k-1}$ to $v_{k-1}$ and connect $v_1$ and $v_k$.
		To obtain $G_2$ instead, add leaves $l_1$ to $v_1$ and $l_k$ to $v_k$ and connect $v_{k-1}$ to $v_{2k-2}$.
		$G_1$ and $G_2$ are depicted in \cref{fig:parameterizedlowerbound}. Note that both $G_1$ and $G_2$ have circumference exactly $k$.
		
		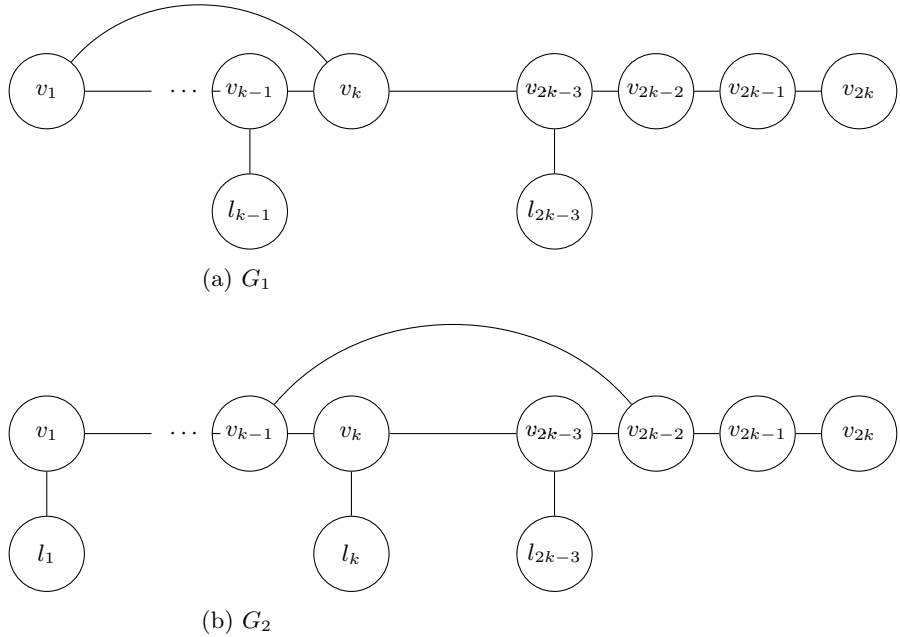
\begin{figure}[htbp]
			\begin{subfigure}{0.5\textwidth}
				\begin{tikzpicture}
					\tikzstyle{vertex}=[x=1.35cm, y=1.6cm, 
					draw,circle, inner sep=2pt, minimum size=1cm]
					
					\node[vertex] (v1) at (1,0) {$v_{1}$};
					\node[minimum size = 0.9cm] (fill1) at (3.2,0) {$\cdots$};
					\node[vertex] (vkminus1) at (3,0) {$v_{k-1}$};
					\node[vertex] (vk) at (4,0) {$v_{k}$};
					\node[minimum size = 0.9cm] (fill2) at (8,0) {$\cdots$};
					\node[vertex] (v2kminus1) at (6,0) {$v_{2k-3}$};
					\node[vertex] (v2k) at (7,0) {$v_{2k-2}$};
					\node[vertex] (v2kplus1) at (8,0) {$v_{2k-1}$};
					\node[vertex] (v2kplus2) at (9,0) {$v_{2k}$};
					
					\node[vertex] (lkminus1) at (3,-1) {$l_{k-1}$};
					\node[vertex] (l2kminus1) at (6,-1) {$l_{2k-3}$};
					
					\draw (v1) -- (fill1) -- (vkminus1) -- (vk) -- (fill2) -- (v2kminus1) --(v2k) --(v2kplus1) -- (v2kplus2);
					\draw (vkminus1) -- (lkminus1);
					\draw (v2kminus1) -- (l2kminus1);
					\draw (v1) edge[bend left = 50] (vk);
				\end{tikzpicture}
				\caption{$G_1$}
			\end{subfigure}
			\newline
			\begin{subfigure}{.5\textwidth}
				\begin{tikzpicture}
					\tikzstyle{vertex}=[x=1.35cm, y=1.6cm, 
					draw,circle, inner sep=2pt, minimum size=1cm]
					
					\node[vertex] (v1) at (1,0) {$v_{1}$};
					\node[vertex] (l1) at (1,-1) {$l_{1}$};
					\node[minimum size = 0.9cm] (fill1) at (3.2,0) {$\cdots$};
					\node[vertex] (vkminus1) at (3,0) {$v_{k-1}$};
					\node[vertex] (vk) at (4,0) {$v_{k}$};
					\node[vertex] (lk) at (4,-1) {$l_{k}$};
					\node[minimum size = 0.9cm] (fill2) at (8,0) {$\cdots$};
					\node[vertex] (v2kminus1) at (6,0) {$v_{2k-3}$};
					\node[vertex] (v2k) at (7,0) {$v_{2k-2}$};
					\node[vertex] (v2kplus1) at (8,0) {$v_{2k-1}$};
					\node[vertex] (v2kplus2) at (9,0) {$v_{2k}$};
					
					\node[vertex] (l2kminus1) at (6,-1) {$l_{2k-3}$};
					
					\draw (v1) -- (fill1) -- (vkminus1) -- (vk) -- (fill2) -- (v2kminus1) --(v2k) --(v2kplus1) -- (v2kplus2);
					\draw (v2kminus1) -- (l2kminus1);
					\draw (vkminus1) edge[bend left = 50] (v2k);
					\draw (v1) -- (l1);
					\draw (vk) -- (lk);
				\end{tikzpicture}
				\caption{$G_2$}
			\end{subfigure}
			\caption{The graphs $G_1$ and $G_2$}\label{fig:parameterizedlowerbound}
		\end{figure}

		Consider the exploration of \alg on $G_1$. As explained, there is an adversary that lets \alg walk from $v_1$ to $v_{2k-2}$ without taking the shortcut $\{v_1,v_k\}$ and without exploring the leaves.
		Moreover, by assuming that the repeating pattern of \alg starts with $v_1$, we obtain in particular that $c(v_1)=c(v_{2k-3})$.
		
		Now, on vertex $v_k$, \alg receives as input $c(v_k)=0$ and the colors $c(v_1)$, $c(v_{k-1})$, and $0$ and assigns color $c(v_k)>0$.
		Then, \alg goes on to explore the rest of the graph. At some point, \alg returns to $v_k$ and receives as input $c(v_k)$ and the colors $c(v_1)$, $c(v_{k-1})$, and $c(v_{k+1})$.
		Since $l_{k-1}$ has not yet been explored and since upon going to $v_1$, \alg has to terminate, it is crucial that \alg chooses $v_{k-1}$ as its next vertex.
		
		However, consider now the exploration of \alg on $G_2$.
		Since $G_2$ looks locally exactly like $G_1$, there is an adversary that lets \alg behave on $G_2$ exactly as on $G_2$ for at least the first $2k-3$ steps.
		On vertex $v_{2k-2}$, \alg receives as input $c(v_{2k-2})=0$ and the colors $c(v_{2k-3})=c(v_1)$, $c(v_{k-1})$, and $0$. This is the same input as before and because of its functional nature, \alg has to assign color $c(v_{2k-2})=c(v_k)>0$ and then has to go to the unvisited neighbor $v_{2k-1}$.
		On $v_{2k-1}$, \alg is in the same situation as during the exploration of $G_1$ on $v_{k+1}$ and assigns color $c(v_{2k-1})=c(v_{k+1})$.
		After then visiting $v_{2k}$, \alg returns to $v_{2k-2}$. Now, its input is again $c(v_k)=c(v_{2k-2})$ and the colors $c(v_1)=c(v_{2k-3})$, $c(v_{k-1})$, and $c(v_{2k-1})=c(v_{k+1})$.
		
		However, this time, since $l_{2k-3}$ has not yet been explored, it is crucial that \alg chooses $v_{2k-3}$ and not $v_{k-1}$. Since \alg is a function, it cannot output different values on the same input; hence, \alg fails to explore either $G_1$ or $G_2$.	
	\end{proof}
	
	\section{Special Graph Classes}\label{sec:P2}
	In the above, we found that trees cannot be explored with less than $3$ colors, general graphs cannot be explored with less than $n-1$ colors and in general, the number of colors needed to explore a graph grows with its circumference. However, if we limit ourselves to special graph classes, we can find algorithms that use much fewer colors than the circumference of the graph.
	
	We illustrate this by taking the example of the graph class $P^2$, the \emph{square paths}.
	The graph class $P^2$ contains all graphs that consist of a path plus edges between all vertices in distance 2 from each other.
	An example is given in \cref{fig:P2}.
	\begin{figure}[htbp]\begin{center}
			\begin{tikzpicture}
				\tikzstyle{vertex}=[x=1.2cm, y=1.2cm, draw,circle, inner sep=.0pt, minimum size=0.7cm]
				
				\foreach \name/\x in {1/1, 3/2, 5/3, 7/4, 9/5, 
					11/6, 13/7, 15/8, 17/9}
				\node[vertex] (G-\name) at (\x,0) {$v_{\name}$};
				
				\foreach \name/\x in {2/1,4/2, 6/3, 8/4, 10/5, 
					12/6, 14/7, 16/8, 18/9}
				\node[vertex] (G-\name) at (\x+0.5,-1) {$v_{\name}$};
				
				\foreach \from/\to in {
					1/2, 2/3,3/4,4/5,5/6,6/7,7/8,8/9,9/10,10/11,11/12,12/13,13/14,14/15,15/16,16/17,17/18,
					1/3, 3/5, 5/7, 7/9, 9/11, 11/13, 13/15, 15/17,
					2/4, 4/6, 6/8, 8/10, 10/12, 12/14, 14/16, 16/18}
				\draw (G-\from) -- (G-\to);
				
				\foreach \from/\to in {
					1/2,2/3,3/4,4/5,5/6,6/7,7/8,8/9,9/10,10/11,
					11/12,12/13,13/14,14/15,15/16,16/17,17/18}
				\draw[very thick] (G-\from) -- (G-\to);
			\end{tikzpicture}
			\caption{An example of a square path $P^2$. The original path is emphasized.}\label{fig:P2}
	\end{center}\end{figure}
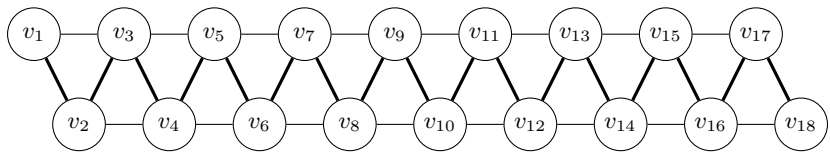
	Note that graphs in $P^2$ have a circumference of size $n$. We show that $4$ colors are enough to explore any graph in $P^2$.
	
	The strategy of our algorithm, 
	\textscalt{SquarePathExploration}, is similar to 
	\textscalt{TreeExploration}, namely to achieve a sense of 
	direction by coloring the vertices alternately $1$, $2$, and 
	$3$.
	For example, if we say that color $1$ is red, color $2$ is green, and color $3$ is blue, and given the graph depicted in \cref{fig:P2_1} with $v_3$ as start vertex, the goal would be to color the vertices as depicted in the figure such that the colors alternate nicely.
	\begin{figure}[htbp]\begin{center}
			\begin{tikzpicture}
				\tikzstyle{vertex}=[x=1.2cm, y=1.2cm, draw,circle, inner sep=.0pt, minimum size=0.7cm]
				
				\foreach \name/\x in {1/1, 3/2, 
					15/8, 17/9}
				\node[vertex] (G-\name) at (\x,0) {$v_{\name}$};
				
				\foreach \name/\x in {2/1, 4/2, 
					14/7, 16/8, 18/9}
				\node[vertex] (G-\name) at (\x+0.5,-1) 
				{$v_{\name}$};
				\foreach \name/\x in {5/3, 11/6}
				\node[vertex, fill=red] (G-\name) at (\x,0) {$v_{\name}$};
				
				\foreach \name/\x in {10/5}
				\node[vertex, fill=red] (G-\name) at (\x+0.5,-1) 
				{$v_{\name}$};
				\foreach \name/\x in {7/4, 13/7}
				\node[vertex, fill=green] (G-\name) at (\x,0) {$v_{\name}$};
				
				\foreach \name/\x in {6/3, 12/6}
				\node[vertex, fill=green] (G-\name) at (\x+0.5,-1) 
				{$v_{\name}$};
				\foreach \name/\x in {9/5}
				\node[vertex, fill=cyan] (G-\name) at (\x,0) {$v_{\name}$};
				
				\foreach \name/\x in {8/4}
				\node[vertex, fill=cyan] (G-\name) at (\x+0.5,-1) {$v_{\name}$};
				
				\foreach \from/\to in {
					1/2, 2/3,3/4,4/5,5/6,6/7,7/8,8/9,9/10,10/11,11/12,12/13,13/14,14/15,15/16,16/17,17/18,
					1/3, 3/5, 5/7, 7/9, 9/11, 11/13, 13/15, 15/17,
					2/4, 4/6, 6/8, 8/10, 10/12, 12/14, 14/16, 16/18}
				\draw (G-\from) -- (G-\to);
				\draw[red, thick, -Stealth] (G-5) edge[out=20, in=160] (G-7);
				\draw[red, thick, -Stealth] (G-7) edge[in=-20,out=-160] (G-5);
				\foreach \from/\to in {5/6, 6/7, 7/8, 8/9, 9/10, 10/11, 11/12}
				\draw[red, thick, -Stealth] (G-\from) -- (G-\to);
			\end{tikzpicture}
			\caption{The desired structure. A possible walk of the agent is depicted with red arrows. The order at the beginning is $v_5, v_7, v_5, v_{6}, v_7$.}\label{fig:P2_1}
	\end{center}\end{figure}
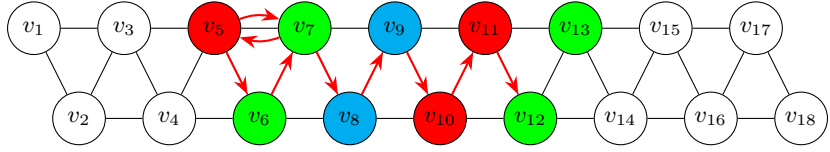
	
	We have to be careful that the agent does not produce a triangle with colors $1$, $2$, and $3$, thus being caught in an endless loop.
	The key idea to achieve this is by exploiting the structure of the square paths:
	Whenever an agent goes to a new vertex $v$ and then sees only one colored neighbor $w$---thus, in general, three uncolored neighbors---, it can go back to~$w$, which then has another uncolored neighbor that is a neighbor of $v$.
	Note that since the algorithm does not have to successfully explore all graphs, but only all square paths, returning back to a visited vertex is easily possible.
	
	However, we have to consider some special cases.
	If we observe the walk of the agent in \cref{fig:P2_1}, we notice two things.
	First, the beginning is quite peculiar.
	The agent starts in $v_5$ and colors it red. Then it visits an uncolored neighbor, $v_7$, colors it green and since there is only one colored neighbor, the agent goes back to this neighbor.
	Back on $v_5$, the agent recognizes that it has been in this vertex before since $v_5$ is colored. 
	Therefore, it does not go back to the only visited---and thus colored---neighbor, but instead goes to an unvisited neighbor, say $v_6$.
	On $v_6$, there are two colored neighbors and usually, the agent would go forward to an unvisited vertex in such a case.
	However, then, it might happen that the vertices on one side of the start vertex are not colored at all.
	Such a situation is depicted in \cref{fig:P2_special1}.
	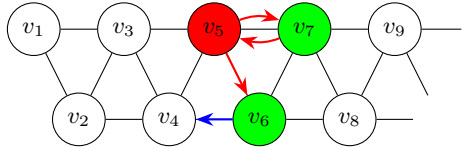
\begin{figure}[htbp]\begin{center}
			\begin{tikzpicture}
				\tikzstyle{vertex}=[x=1.2cm, y=1.2cm, draw,circle, inner sep=.0pt, minimum size=0.7cm]
				
				\foreach \name/\x in {1/1, 3/2, 
					9/5}
				\node[vertex] (G-\name) at (\x,0) {$v_{\name}$};
				
				\foreach \name/\x in {2/1, 4/2, 8/4}
				\node[vertex] (G-\name) at (\x+0.5,-1) 
				{$v_{\name}$};
				\foreach \name/\x in {5/3}
				\node[vertex, fill=red] (G-\name) at (\x,0) {$v_{\name}$};
				
				\foreach \name/\x in {7/4}
				\node[vertex, fill=green] (G-\name) at (\x,0) {$v_{\name}$};
				
				\foreach \name/\x in {6/3}
				\node[vertex, fill=green] (G-\name) at (\x+0.5,-1) {$v_{\name}$};
				
				\foreach \from/\to in {
					1/2, 2/3,3/4,4/5,5/6,6/7,7/8,8/9,
					1/3, 3/5, 5/7, 7/9,
					2/4, 4/6, 6/8}
				\draw (G-\from) -- (G-\to);
				
				\node (G-11) at (7,0) {};
				\node[vertex, white] (G-10) at (5.5,-1) {};
				\draw (G-9) -- (G-11);
				\draw (G-9) -- (G-10);
				\draw (G-8) -- (G-10);
				\draw[red, thick, -Stealth] (G-5) edge[out=20, in=160] (G-7);
				\draw[red, thick, -Stealth] (G-7) edge[in=-20,out=-160] (G-5);
				\draw[red, thick, -Stealth] (G-5) -- (G-6);
				\draw[blue, thick, -Stealth] (G-6) -- (G-4);
			\end{tikzpicture}
			\caption{Special case at the beginning: If the agent went from $v_{6}$ to $v_{4}$, as depicted with the blue arrow, then $v_8, v_9, \ldots$ might never be explored.}\label{fig:P2_special1}
	\end{center}\end{figure}
	
	Therefore, the agent has to go back to the vertex with color green---i.\,e., color~$2$---if there are two colored neighbors with colors $1$ and $2$.
	Intuitively speaking, this ensures that the agent first explores all vertices on the left (right) side of the start vertex if there are two vertices with color $2$ on the left (right) side first.
	
	The second thing we notice in \cref{fig:P2_1} is that the adversary might send the agent to different vertices.
	We argue heuristically why it is reasonable to use a forth color in this case.
	The adversary might send the agent to a vertex with only one colored neighbor, consider for example the situations depicted in \cref{fig:P2_special2}.
	\begin{figure}[htbp]\begin{center}
			\begin{subfigure}{.5\textwidth}
				\begin{tikzpicture}
					\tikzstyle{vertex}=[x=1.2cm, y=1.2cm, draw,circle, inner sep=.0pt, minimum size=0.7cm]
					
					\foreach \name/\x in {3/2, 9/5, 11/6}
					\node[vertex] (G-\name) at (\x,0) {$v_{\name}$};
					
					\foreach \name/\x in {4/2, 12/6}
					\node[vertex] (G-\name) at (\x+0.5,-1) {$v_{\name}$};
					
					\foreach \name/\x in {1/1, 13/7}
					\node[vertex, white] (G-\name) at (\x,0) {};
					
					\foreach \name/\x in {2/1, 
						14/7}
					\node[vertex, white] (G-\name) at (\x+0.5,-1) 
					{};
					\foreach \name/\x in {5/3}
					\node[vertex, fill=red] (G-\name) at (\x,0) {$v_{\name}$};
					
					\foreach \name/\x in {10/5}
					\node[vertex, fill=red] (G-\name) at 
					(\x+0.5,-1) {$v_{\name}$};
					\foreach \name/\x in {7/4}
					\node[vertex, fill=green] (G-\name) at (\x,0) {$v_{\name}$};
					
					\foreach \name/\x in {6/3}
					\node[vertex, fill=green] (G-\name) at 
					(\x+0.5,-1) {$v_{\name}$};
					
					\foreach \name/\x in {8/4}
					\node[vertex, fill=cyan] (G-\name) at (\x+0.5,-1) {$v_{\name}$};
					
					\foreach \from/\to in {
						2/3, 3/4,4/5,5/6,6/7,7/8,8/9,9/10,10/11,11/12,12/13,
						1/3, 3/5, 5/7, 7/9, 9/11, 11/13, 
						2/4, 4/6, 6/8, 8/10, 10/12, 12/14}
					\draw (G-\from) -- (G-\to);
					\draw[red, thick, -Stealth] (G-5) edge[out=20, in=160] (G-7);
					\draw[red, thick, -Stealth] (G-7) edge[in=-20,out=-160] (G-5);
					\draw[red, thick, -Stealth] (G-8) edge[out=20, in=160] (G-10);
					\draw[red, thick, -Stealth] (G-10) edge[in=-20,out=-160] (G-8);
					\foreach \from/\to in {5/6, 6/7, 7/8, 8/9}
					\draw[red, thick, -Stealth] (G-\from) -- (G-\to);
				\end{tikzpicture}
				\caption{Situation 1}
		\end{subfigure}
	\end{center}
	\begin{center}
		\begin{subfigure}{.5\textwidth}
			\begin{tikzpicture}
				\tikzstyle{vertex}=[x=1.2cm, y=1.2cm, draw,circle, inner sep=.0pt, minimum size=0.7cm]
				
				\foreach \name/\x in {3/2, 11/6}
				\node[vertex] (G-\name) at (\x,0) {$v_{\name}$};
				
				\foreach \name/\x in {4/2}
				\node[vertex] (G-\name) at (\x+0.5,-1) {$v_{\name}$};
				
				\foreach \name/\x in {1/1, 13/7}
				\node[vertex, white] (G-\name) at (\x,0) {};
				
				\foreach \name/\x in {2/1, 
					14/7}
				\node[vertex, white] (G-\name) at (\x+0.5,-1) {};
				\foreach \name/\x in {5/3}
				\node[vertex, fill=red] (G-\name) at (\x,0) {$v_{\name}$};
				
				\foreach \name/\x in {10/5}
				\node[vertex, fill=red] (G-\name) at (\x+0.5,-1) 
				{$v_{\name}$};
				\foreach \name/\x in {7/4}
				\node[vertex, fill=green] (G-\name) at (\x,0) {$v_{\name}$};
				
				\foreach \name/\x in {6/3, 12/6}
				\node[vertex, fill=green] (G-\name) at (\x+0.5,-1) 
				{$v_{\name}$};
				\foreach \name/\x in {9/5}
				\node[vertex, fill=cyan] (G-\name) at (\x,0) {$v_{\name}$};
				
				\foreach \name/\x in {8/4}
				\node[vertex, fill=cyan] (G-\name) at (\x+0.5,-1) {$v_{\name}$};
				
				\foreach \from/\to in {
					2/3, 3/4,4/5,5/6,6/7,7/8,8/9,9/10,10/11,11/12,12/13,
					1/3, 3/5, 5/7, 7/9, 9/11, 11/13, 
					2/4, 4/6, 6/8, 8/10, 10/12, 12/14}
				\draw (G-\from) -- (G-\to);
				\draw[red, thick, -Stealth] (G-5) edge[out=20, in=160] (G-7);
				\draw[red, thick, -Stealth] (G-7) edge[in=-20,out=-160] (G-5);
				\draw[red, thick, -Stealth] (G-10) edge[out=20, in=160] (G-12);
				\draw[red, thick, -Stealth] (G-12) edge[in=-20,out=-160] (G-10);
				\foreach \from/\to in {5/6, 6/7, 7/8, 8/9, 9/10, 10/11}
				\draw[red, thick, -Stealth] (G-\from) -- (G-\to);
			\end{tikzpicture}
			\caption{Situation 2}
	\end{subfigure}
	\caption{Cases where the agent sees all three colors at once. The agent cannot distinguish the two cases.}\label{fig:P2_special2}
\end{center}\end{figure}
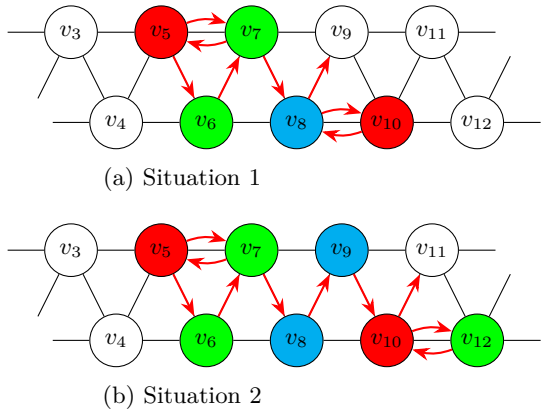

In Situation 1, the adversary sends the agent from $v_8$ to $v_{10}$ instead of $v_9$. 
The agent colors $v_{10}$ red as before and then returns to $v_{8}$.
From $v_{8}$, the agent goes to $v_9$. 
Now, the agent sees all three colors at once.
This is tricky since the agent cannot distinguish this situation from Situation 2 in \cref{fig:P2_special2}, but the color which is farthest from the start vertex is different and the sense of direction might become lost if the agent assigns the wrong color.

In Situation 2, the adversary sends the agent from $v_{10}$ to $v_{12}$ instead of $v_{11}$ and the agent then colors $v_{12}$ green, goes back to $v_{10}$ and from there to $v_{11}$.
As in Situation 1, the input for the algorithm is $c(v)=0$ for the color of the current vertex and $\{1,2,3,0\}$ for the colors of the neighboring vertices.
Hence, on $v_9$ and given as input all three colors red, green, and blue, the agent is wise not to assign any of the three normal colors.
Instead, the agent is going to assign a new color---say gray---and then goes to the only unvisited neighbor, which is $v_{11}$.
This way, the algorithm ensures that there is always a clear direction away from and towards the start vertex.
Thus, the actual coloring pattern might not look like the one in \cref{fig:P2_1} since it depends on the decisions of the adversary.
For example, the coloring pattern might even look like the one in \cref{fig:P2_2}.
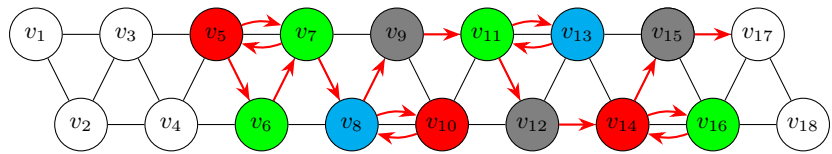
\begin{figure}[htbp]\begin{center}
	\begin{tikzpicture}
		\tikzstyle{vertex}=[x=1.2cm, y=1.2cm, draw,circle, inner sep=.0pt, minimum size=0.7cm]
		
		\foreach \name/\x in {1/1, 3/2, 
			15/8, 17/9}
		\node[vertex] (G-\name) at (\x,0) {$v_{\name}$};
		
		\foreach \name/\x in {2/1, 4/2, 
			18/9}
		\node[vertex] (G-\name) at (\x+0.5,-1) {$v_{\name}$};
		\foreach \name/\x in {5/3}
		\node[vertex, fill=red] (G-\name) at (\x,0) {$v_{\name}$};
		
		\foreach \name/\x in {10/5, 14/7}
		\node[vertex, fill=red] (G-\name) at (\x+0.5,-1) 
		{$v_{\name}$};
		\foreach \name/\x in {7/4, 11/6}
		\node[vertex, fill=green] (G-\name) at (\x,0) {$v_{\name}$};
		
		\foreach \name/\x in {6/3, 16/8}
		\node[vertex, fill=green] (G-\name) at (\x+0.5,-1) 
		{$v_{\name}$};
		\foreach \name/\x in {13/7}
		\node[vertex, fill=cyan] (G-\name) at (\x,0) {$v_{\name}$};
		
		\foreach \name/\x in {8/4}
		\node[vertex, fill=cyan] (G-\name) at (\x+0.5,-1) {$v_{\name}$};
		
		\foreach \name/\x in {9/5, 15/8}
		\node[vertex, fill=gray] (G-\name) at (\x,0) {$v_{\name}$};
		
		\foreach \name/\x in {12/6}
		\node[vertex, fill=gray] (G-\name) at (\x+0.5,-1) {$v_{\name}$};
		
		\foreach \from/\to in {
			1/2, 2/3,3/4,4/5,5/6,6/7,7/8,8/9,9/10,10/11,11/12,12/13,13/14,14/15,15/16,16/17,17/18,
			1/3, 3/5, 5/7, 7/9, 9/11, 11/13, 13/15, 15/17,
			2/4, 4/6, 6/8, 8/10, 10/12, 12/14, 14/16, 16/18}
		\draw (G-\from) -- (G-\to);
		\draw[red, thick, -Stealth] (G-5) edge[out=20, in=160] (G-7);
		\draw[red, thick, -Stealth] (G-7) edge[in=-20,out=-160] (G-5);
		\draw[red, thick, -Stealth] (G-8) edge[out=20, in=160] (G-10);
		\draw[red, thick, -Stealth] (G-10) edge[in=-20,out=-160] (G-8);
		\draw[red, thick, -Stealth] (G-11) edge[out=20, in=160] (G-13);
		\draw[red, thick, -Stealth] (G-13) edge[in=-20,out=-160] (G-11);
		\draw[red, thick, -Stealth] (G-14) edge[out=20, in=160] (G-16);
		\draw[red, thick, -Stealth] (G-16) edge[in=-20,out=-160] (G-14);
		\foreach \from/\to in {5/6, 6/7, 7/8, 8/9, 9/11, 11/12, 12/14, 14/15, 15/17}
		\draw[red, thick, -Stealth] (G-\from) -- (G-\to);
	\end{tikzpicture}
	\caption{The structure with many vertices with color $4$.}\label{fig:P2_2}
\end{center}\end{figure}

We have now covered the situation of the start vertex, where all neighbors are uncolored; the normal cases---namely exactly $1$ colored neighbor, exactly $2$ colored neighbors, and exactly $3$ colored neighbors; and the special case at the beginning, depicted in \cref{fig:P2_special1}. 
There is one case left, namely when the agent is on a vertex of degree $2$.
This is easily handled, the only difference is that on such a vertex, the agent has to start going back towards the start vertex.
The formal description of algorithm 
\textscalt{SquarePathExploration} is given in 
Algorithm~\ref{alg:spe_part1}.

\begin{algorithm}[htbp]
\caption{{}\textscalt{SquarePathExploration}}
\label{alg:spe_part1}
\textbf{Input:} An undirected $P^2$ graph in an online fashion. In each step, the input is $c(v) \in \N$, the color of the current vertex $v$, and $c(v_1), \ldots, c(v_k)$, the colors of $v$'s neighbors, where $k$ is the degree of $v$.\\
\textbf{Output:} In each step, the algorithm outputs $c(v)$, the 
color the agent assigns to $v$, and $i \in [k]$, the vertex the 
agent has to take next.\\
\begin{algorithmic}[1]
	\If{$c(v)=0$}
	\If{$0$ neighbors are colored} \label{line:start_vertex_beginning}
	\LineComment{Situation of start vertex}
	\State $c(v)\coloneqq 1$ \Comment{Assign color $1$ and then \dots}
	\State go to the uncolored neighbor that is presented first \Comment{\dots go exploring.} \label{line:start_vertex_end}
	
	\ElsIf{$1$ neighbor $w$ is colored} \label{line:only_one_colored_neighbor_beginning}
	\LineComment{Normal case $1$: Only $1$ neighbor is colored}
	\State $c(v) \coloneqq (c(w)+1) \modi 3$ \Comment{Assign the next color \dots}
	\State go to $w$ \Comment{\dots and go back to this neighbor.} \label{line:gobackifyouonlyseeoneneighbor} \label{line:only_one_colored_neighbor_end}
	
	\ElsIf{the neighbors have colors $\{1,2,0,0\}$} \label{line:special_case_at_start_beginning}
	\LineComment{Special case to ensure a correct beginning}
	\State $c(v)\coloneqq 2$ \label{line:colorwithtwoneighbors0} \Comment{Assign color $2$\dots}
	\State go to the neighbor with color $2$ \Comment{\dots and go to $w$.} \label{line:special_case_at_start_end}
	
	\ElsIf{$2$ neighbors $w, w'$ are colored} \label{line:conditiontwocoloredneighborsdegree2} \label{line:vertex_of_degree_2_beginning} \label{line:conditiontwocoloredneighborsdegreelarger2} \label{line:two_colored_neighbors_beginning}
	\LineComment{Normal case $2$: $2$ neighbors are colored}
	\LineComment{Assign color and go forward, be careful not to create}
	\LineComment{a triangle with colors $1$,$2$,$3$}	
	\State $c(v) \coloneqq 
	\begin{cases}
		1 \text{, if } \{c(w), c(w')\} \in \{ \{1,3\}, \{3,3\}, \{3,4\}\}\\
		2 \text{, if } \{c(w), c(w')\} \in \{ \{1,1\}, \{1,2\},\{1,4\} \}\\
		3 \text{, if } \{c(w), c(w')\} \in \{ \{2,2\}, \{2,3\},\{2,4\} \}
	\end{cases}$ \label{line:colorwithtwoneighbors1} \label{line:colorwithtwoneighbors2}
	\If{there is an uncolored neighbor $w$} \label{line:cond_gonext_withtwoneighbors1_beginning}
	\State go to $w$
	\Else
	\LineComment{Special case: $v$ has degree $2$ and both neighbors are colored}
	\State go to a neighbor $w$ with $c(w) = (c(v)-1) \modi 3$ 
	\label{line:cond_gonext_withtwoneighbors1_end}
	\EndIf \label{line:two_colored_neighbors_end}
	\ElsIf{$3$ neighbors $w, w', w''$ are colored}\label{line:conditionthreecoloredneighbors} \label{line:three_colored_neighbors_beginning}
	\LineComment{Normal case $3$: $3$ neighbors are colored}
	\State $c(v) \coloneqq 
	\begin{cases}
		4 \text{, if } \{c(w), c(w'), c(w'')\} = \{1,2,3\} \\
		1 \text{, if } \{c(w), c(w'), c(w'')\} \in \{ \{1,1,3\},\{1,3,3\} \}\\
		2 \text{, if } \{c(w), c(w'), c(w'')\} \in \{ \{1,1,2\},\{1,2,2\},\\
		\hspace{4.72cm} \{1,2,4\},\{1,1,4\},\{1,4,4\} \}\\
		3 \text{, if } \{c(w), c(w'), c(w'')\} \in \{ \{2,2,3\},\{2,2,4\},\{2,3,4\},\{2,4,4\} \}
	\end{cases}$ \label{line:colorwiththreeneighbors}
	\If{there is an uncolored neighbor} \label{line:cond_gonext_withthreeneighbors_beginning}
	\State go to some uncolored neighbor
	\Else
	\State go to $z$ with $c(z)=(c(v)-1) \modi 3$ \label{line:cond_gonext_withthreeneighbors_end}
	\EndIf
	\EndIf \label{line:three_colored_neighbors_end}
	\algstore{squarepathalg}
\end{algorithmic}
\end{algorithm}
\begin{algorithm}[htbp]
\label{alg:spe_part2}
\begin{algorithmic}[1]
	\algrestore{squarepathalg}
	\Else \Comment{We have $c(v)>0$}
	\LineComment{On a colored vertex, go forward if there are uncolored neighbors and}
	\LineComment{go back towards the start vertex if there are no uncolored neighbors}
	\If{there is an uncolored neighbor}
	\State go to some uncolored neighbor \label{line:gotonextuncolored}
	\ElsIf{there is a neighbor $z$ with $c(z) = (c(v)-1) \modi 3$}
	\State go to $z$ \label{line:squareexplgobackwards}
	\Else
	\State \textbf{terminate}
	\EndIf
	\EndIf
\end{algorithmic}
\end{algorithm}

We now turn towards the proof that 
\textscalt{SquarePathExploration} successfully explores all square 
paths.
We start by proving that \textscalt{SquarePathExploration} is well 
defined.
\begin{lemma}
\textscalt{SquarePathExploration} is well defined.
\end{lemma}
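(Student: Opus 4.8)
The plan is to verify, branch by branch following the structure of Algorithm~\ref{alg:spe_part1} and its continuation, that (i) the case distinctions are mutually exclusive, (ii) the colour $c(v)$ prescribed in each branch is uniquely determined, and (iii) every ``go to a neighbour of colour $d$'' instruction is issued only under a branch condition that already guarantees such a neighbour exists, so that the prescribed action is always a legal output of the \emph{move} function. Since a graph in $P^2$ has maximum degree $4$, every vertex carries a colour in $\{0,1,2,3,4\}$, so all the reasoning is over this finite colour set.

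First I would dispose of the branch $c(v)>0$: its three sub-cases---an uncoloured neighbour exists; otherwise a neighbour of colour $(c(v)-1)\modi 3$ exists; otherwise \textscalt{Stop}---are exhaustive and pairwise exclusive by construction, and the two non-terminating sub-cases name the neighbour to move to in their own guard, so nothing can go wrong. For the branch $c(v)=0$ I would split on the number $t$ of coloured neighbours, noting that the clause ``the neighbours have colours $\{1,2,0,0\}$'' is tested \emph{before} the generic ``$2$ neighbours are coloured'' clause and hence extracts precisely the degree-$4$ vertices with that colour pattern, leaving the generic clause to treat only degree-$2$ and degree-$3$ vertices. In the two clauses where $c(v)$ is read off a list of colour multisets (two, respectively three, coloured neighbours) I would check by inspection that the multiset lists attached to the different assigned colours are pairwise disjoint, so the colour is unambiguous; and that in each such case---as well as in the degree-$2$ back-tracking line, the $\{1,2,0,0\}$ line, and the single-coloured-neighbour line---the multiset on which the branch fired contains the colour the agent is then told to walk to. For instance, a vertex coloured $3$ was so coloured only from a multiset containing a $2$, and $(3-1)\modi 3 = 2$; similarly colour $4$ is assigned only on $\{1,2,3\}$ and $(4-1)\modi 3 = 3$, so a colour-$3$ neighbour is present for the back-track step. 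This is all routine bookkeeping.

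The one point that genuinely needs an argument---and the step I expect to be the main obstacle---is \emph{exhaustiveness}: the multiset lists in the $t=2$ and $t=3$ clauses do not cover all multisets over $\{1,2,3,4\}$ (e.g.\ $\{4,4\}$, or triples such as $\{1,1,1\}$ and $\{3,3,3\}$, are absent), and there is no clause at all for a freshly visited vertex with $t=4$ coloured neighbours. To close this I would set up an invariant, proved by induction on the number of steps of the agent on a square path, on the colour pattern in the closed neighbourhood of the current vertex: informally, that the colours placed so far alternate $1,2,3$ along the explored part of the spine, with a colour-$4$ vertex appearing exactly where a detour closed a one-vertex gap, so that a vertex reached for the first time always sees at most three coloured neighbours with a colour multiset from the explicit lists. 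I would point out that this invariant is essentially the one needed for the subsequent correctness proof of \textscalt{SquarePathExploration}, so that the honest reading of the present lemma is that ``well defined'' here means exactly items (i)--(iii), with exhaustiveness carried along as part of that correctness analysis rather than re-derived in isolation.
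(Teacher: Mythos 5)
Your proposal is correct in its routine parts and there it matches the paper: check that the branches are mutually exclusive, that each colour assignment is unambiguous (the multiset lists for different colours are disjoint), and that every ``go to a neighbour of colour $d$'' is guarded by the presence of such a neighbour, the key point being that a vertex coloured $c\in\{2,3,4\}$ was coloured in a step in which a neighbour of colour $(c-1)\modi 3$ was visible, so the backtracking target exists. Where you diverge is on exhaustiveness of the multiset lists. The paper does not build the global alternating-pattern invariant you sketch; it dispatches the uncovered cases with two purely local observations: colour $4$ is assigned only when three coloured neighbours with colours $1,2,3$ are seen, so no vertex ever has two neighbours coloured $4$ (killing $\{4,4\}$ and the triples containing two $4$'s), and ``there can never be three neighbours of the same colour'' (killing $\{1,1,1\}$ etc.), leaving the deeper structural facts to the subsequent correctness lemmas---much as you propose when you say exhaustiveness should be carried along with the correctness analysis rather than re-derived here. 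So your reading of what ``well defined'' must mean is the same as the paper's; the difference is that your route, as written, defers the one substantive step (the spine-alternation invariant) to an induction you do not carry out, whereas the paper's lighter local facts already suffice for this lemma and are easier to justify than the full invariant. Two small corrections: the generic ``two neighbours are coloured'' clause is not restricted to degree-$2$ and degree-$3$ vertices---degree-$4$ vertices whose two coloured neighbours have any multiset other than exactly $\{1,2\}$ (e.g., $\{2,3\}$ in the normal forward walk) also land there, which is harmless since the checks are on colour multisets, not degrees; and the absence of a clause for a fresh vertex with four coloured neighbours is, in the paper too, implicitly excused by the later no-hole/ordering analysis rather than inside this lemma.
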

\begin{proof}
This is clearly true for most of the instructions, we only have to be careful with the color assigning in line~\ref{line:colorwithtwoneighbors2} and the conditions in lines~\ref{line:cond_gonext_withtwoneighbors1_beginning} to \ref{line:cond_gonext_withtwoneighbors1_end} and lines \ref{line:cond_gonext_withthreeneighbors_beginning} to \ref{line:cond_gonext_withthreeneighbors_end}. We check the color assigning in line~\ref{line:colorwithtwoneighbors2} and the conditions in lines \ref{line:cond_gonext_withtwoneighbors1_beginning} to \ref{line:cond_gonext_withtwoneighbors1_end}; the argumentation for lines \ref{line:cond_gonext_withthreeneighbors_beginning} to \ref{line:cond_gonext_withthreeneighbors_end} is similar.
There, we use the evident fact that there can never be three neighbors of the same color.

The color assigning does not cover the case that the two colored neighbors are both colored with color $4$. However, clearly, color $4$ is only assigned if there are three colored neighbors of colors $1$, $2$, and $3$; therefore, there are never two neighbors both with color $4$.
Given this color assignment, the conditions in lines~\ref{line:cond_gonext_withtwoneighbors1_beginning} to \ref{line:cond_gonext_withtwoneighbors1_end} are clearly fulfilled since color $1$ is only assigned if one of the neighbors has color $3$ and similar for the other colors. 
\end{proof}

Next, we show that \textscalt{SquarePathExploration} never colors a 
triangle with the colors $1$, $2$, and $3$.
\begin{lemma}\label{lemma:notrianglesinsquareexpl}
\textscalt{SquarePathExploration} never colors the three vertices 
of a triangle with the colors $1,2$ and $3$, i.e., there are never 
three pairwise adjacent vertices $x,y,z$ such that 
$\{c(x),c(y),c(z)\}=\{1,2,3\}$.
\end{lemma}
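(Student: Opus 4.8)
The plan is a short structural case analysis focused on the instant at which a forbidden triangle would be completed. First I would recall that in a square path the only triangles are triples of consecutive path vertices $\{v_{i-1},v_i,v_{i+1}\}$, and that in such a triple each vertex is adjacent to the other two. Arguing by contradiction, suppose some triangle ends up carrying all of the colours $\{1,2,3\}$, and let $w$ be the vertex of that triangle which \textscalt{SquarePathExploration} colours last, and $a,b$ its two triangle neighbours. A vertex is coloured only on its first visit (the only situation in which the current vertex has colour $0$), so when $w$ is coloured the vertices $a$ and $b$ are already coloured; and we may assume $c(a)\neq c(b)$, since otherwise the triangle exhibits at most two colours. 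Hence $\{c(a),c(b)\}$ is a two-element subset of $\{1,2,3\}$, and it remains to show that the colouring branch of Algorithm~\ref{alg:spe_part1} that fires at $w$ never assigns $w$ the one missing colour in $\{1,2,3\}\setminus\{c(a),c(b)\}$.

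Then I would walk through the branches. Since $a$ and $b$ are coloured neighbours of $w$, neither the start-vertex branch (line~\ref{line:start_vertex_beginning}) nor the one-coloured-neighbour branch (line~\ref{line:only_one_colored_neighbor_beginning}) can apply. If $w$ has exactly two coloured neighbours, these must be $a$ and $b$, and whether the $\{1,2,0,0\}$-branch (line~\ref{line:colorwithtwoneighbors0}) or the general two-neighbour branch (line~\ref{line:colorwithtwoneighbors1}) fires, a pair of distinct colours from $\{1,2,3\}$ is always answered with one of the two — explicitly $\{1,2\}\mapsto 2$, $\{1,3\}\mapsto 1$, $\{2,3\}\mapsto 3$ — so the triangle can show at most two colours, a contradiction. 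If $w$ has three coloured neighbours, the colour multiset fed to line~\ref{line:colorwiththreeneighbors} contains $\{c(a),c(b)\}$ as a sub-multiset, and inspecting that table shows the missing colour is produced only by multisets that omit one of $c(a),c(b)$: for instance colour $3$ arises only from $\{2,2,3\},\{2,2,4\},\{2,3,4\},\{2,4,4\}$, none of which contains $1$, so partners $\{1,2\}$ can never force colour $3$, and symmetrically for the other two pairs. The extremal sub-case, in which the three neighbour colours are already exactly $\{1,2,3\}$, is precisely the one the rule answers with the fresh colour $4$ — which is the very reason a fourth colour is introduced — so no $\{1,2,3\}$-triangle is created there either.

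The step I expect to be the main obstacle is verifying that the branches above are \emph{exhaustive} for $w$ at its first visit — in particular, ruling out (or otherwise accounting for) an uncoloured vertex whose four square-path neighbours are all already coloured, for which the pseudocode lists no branch and which the well-definedness lemma does not settle. I would handle this from the structure of the traversal rather than the local rules: when the agent first reaches $v_j$ it has not explored past $v_j$ on both sides at once, so at least one of $v_{j-2},v_{j+2}$ is still uncoloured and $w$ indeed falls under one of the cases treated above. The remaining bookkeeping — that $w$'s two triangle neighbours really are among its \emph{coloured} neighbours in whichever branch fires — is immediate from $w$ being the last triangle vertex to be coloured, and once that is pinned down the argument is the finite table check sketched above.
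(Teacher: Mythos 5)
Your proposal is correct and follows essentially the same route as the paper: consider the step in which the last vertex of the triangle is colored, observe that the other two triangle vertices are then colored neighbors, and check the color-assignment tables of the two- and three-colored-neighbor branches (plus the $\{1,2,0,0\}$ special case) to see that the missing color of $\{1,2,3\}$ is never assigned. Your additional care about exhaustiveness (ruling out a first visit with four colored neighbors) goes beyond what the paper makes explicit but does not change the argument.
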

\begin{proof}
Consider the step in which \textscalt{SquarePathExploration} colors 
the third vertex $z$ of a triangle, i.e., the other two vertices 
$x$ and $y$ have been colored in earlier steps.
Since at least two neighbors are colored, 
\textscalt{SquarePathExploration} applies one of the 
lines~\ref{line:colorwithtwoneighbors0},~\ref{line:colorwithtwoneighbors1},
 or~\ref{line:colorwiththreeneighbors} to color $z$. 
Clearly, it is never the case that 
\textscalt{SquarePathExploration} assigns color $1$ and the colors 
$2$ and $3$ are present in the neighborhood.
Analogously, \textscalt{SquarePathExploration} never assigns color 
$2$ if colors $1$ and $3$ are present or color $3$ if colors $1$ 
and $2$ are present.
\end{proof}

We are now prepared to prove that the algorithm always terminates.
\begin{lemma}\label{lem:squareexplterminates}
\textscalt{SquarePathExploration} terminates.
\end{lemma}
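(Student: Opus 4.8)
The plan is to follow the template of the termination proof for \textscalt{SmallDFS}$_k$ in \cref{lem:smalldfsterminates}. I would assume towards a contradiction that there is a square path $G$ and an adversary for which the run of \textscalt{SquarePathExploration} is infinite. Since a colour is assigned only to a vertex currently coloured $0$, every vertex is coloured at most once, so on the finite graph $G$ only finitely many steps are colouring steps; hence there is a \emph{last} colouring step. After it, the agent stands on a vertex with $c(v)>0$ in every step, so only the ``already-coloured vertex'' part of the algorithm is executed; moreover the agent can never move to an uncoloured neighbour (the next step would then be a colouring step), and by assumption it never terminates. Therefore every move after the last colouring step is the backtracking move of line~\ref{line:squareexplgobackwards}, so the tail of the run is an infinite walk $w_1,w_2,\dots$ with $\{w_t,w_{t+1}\}\in E$ and $c(w_{t+1}) = \left(c(w_t)-1\right)\modi 3$ for all $t$; in particular every $w_t$ with $t\ge 2$ carries a colour in $\{1,2,3\}$.

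Next I would extract a cycle: the tail visits finitely many distinct vertices, so some vertex recurs, and a shortest closed sub-walk of the tail is a simple cycle $C=(x_1,\dots,x_l,x_1)$, $l\ge 3$, all of whose vertices have colour in $\{1,2,3\}$ and which still satisfies $c(x_{j+1})=\left(c(x_j)-1\right)\modi 3$ around $C$. Because $c$, $(c-1)\modi 3$, $(c-2)\modi 3$ are three consecutive residues, \emph{any} three consecutive vertices of $C$ carry pairwise distinct colours, i.e. exactly the colours $\{1,2,3\}$.

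The structural heart of the argument then uses that $G$ is a square path. Write the underlying path as $p_1,\dots,p_n$ and let $u=p_i$ be the vertex of $C$ of smallest index. Every neighbour of $u$ of index larger than $i$ is among $p_{i+1}$ and $p_{i+2}$, and these are the only neighbours of $u$ that can lie on $C$; since $C$ is a cycle, $u$ has exactly two neighbours on $C$, so both $p_{i+1}$ and $p_{i+2}$ lie on $C$ and are precisely the two $C$-neighbours of $u$. But $p_{i+1}$ and $p_{i+2}$ are also adjacent (they are at path-distance $1$), so $u,p_{i+1},p_{i+2}$ form a triangle, and on $C$ they occur consecutively as $p_{i+1},u,p_{i+2}$. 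By the previous paragraph $\{c(p_{i+1}),c(u),c(p_{i+2})\}=\{1,2,3\}$, contradicting \cref{lemma:notrianglesinsquareexpl}. Hence the run cannot be infinite, so \textscalt{SquarePathExploration} terminates.

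I expect the delicate part to be the bookkeeping in the first paragraph: one must argue carefully that after the last colouring step \emph{every} move is the backtracking move of line~\ref{line:squareexplgobackwards}, ruling out both a move to an uncoloured neighbour (line~\ref{line:gotonextuncolored}) and all the colour-assigning branches, which sit inside the $c(v)=0$ block and are therefore unreachable once no vertex is coloured $0$ any more. Minor points such as the starting vertex $w_1$ of the tail possibly carrying colour $4$ (irrelevant, since the cycle can be taken within $w_2,w_3,\dots$) and the infinite backtracking walk not being simple (handled by passing to a shortest closed sub-walk) are routine. Everything after that is a short, essentially case-free argument once \cref{lemma:notrianglesinsquareexpl} is available.
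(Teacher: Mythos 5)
Your proposal is correct and follows essentially the same route as the paper's proof: reduce an infinite run to eventual pure backtracking via line~\ref{line:squareexplgobackwards}, extract a cycle whose consecutive colors decrement modulo~3 so that any three consecutive vertices carry colors $\{1,2,3\}$, and then use the extremal (smallest-index, versus the paper's rightmost) vertex of the square path to exhibit a triangle colored $1,2,3$, contradicting \cref{lemma:notrianglesinsquareexpl}. Your first-paragraph bookkeeping is just a more explicit version of the paper's ``no endless loops'' setup.
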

\begin{proof}
The statement is equivalent to saying that there are no endless 
loops, i.e., there is no sequence of vertices $v_1,v_2,\ldots,v_k, 
v_{k+1}=v_1$ such that \textscalt{SquarePathExploration} always 
goes from $v_i$ to $v_{(i+1) \modi k}$. 
Assume towards contradiction that such a sequence exists. 

If \textscalt{SquarePathExploration} arrives at a vertex that has 
been colored before, it goes to an uncolored neighbor regardless of 
the color of any other vertices.
Therefore, there will be at some point no more uncolored neighbors of the vertices in the loop. 
In other words, the only line that can be applied is line~\ref{line:squareexplgobackwards}. 
Therefore, by assuming without loss of generality that $c(v_1)=1$, the colors in the sequence must be $(c(v_1), c(v_2), c(v_3),c(v_4)\ldots, c(v_k))= (1,2,3,1,\ldots,3)$.

Consider the graph from \cref{fig:P2} and consider the rightmost vertex $v$ that is in the endless loop. 
Say this vertex is $v_i$. 
This vertex has two neighbors $v_{i-1}$ and $v_{i+1}$ that are in the endless loop as well. 
Since we have $(c(v_{i-1}), c(v_i), c(v_{i+1})) \in 
\{(1,2,3),(2,3,1),(3,1,2)\}$ as shown before, we have $v_{i-1} \neq 
v_{i+1}$.
Moreover, since $v_i$ is the rightmost vertex in the loop, $v_{i-1},v_i, v_{i+1}$ form a triangle. 
However, by \cref{lemma:notrianglesinsquareexpl}, this cannot happen; hence, the assumption was wrong.
\end{proof}

\begin{lemma}\label{lem:squareexplterminatesonstartvertex}
\textscalt{SquarePathExploration} only terminates on the start 
vertex.
\end{lemma}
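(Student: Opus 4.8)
The plan is to reduce the claim to a local invariant on the coloring. By \cref{lem:squareexplterminates} the agent terminates somewhere, and reading the pseudocode of \textscalt{SquarePathExploration} one sees that the \textbf{terminate} instruction is reached only when the current vertex $v$ is already colored, has no uncolored neighbor, and has no neighbor of color $(c(v)-1)\modi 3$. Hence it suffices to establish the invariant: \emph{every vertex $v$ other than the start vertex has, from the step in which it receives its color onward, a neighbor of color $(c(v)-1)\modi 3$}. Since no vertex is ever recolored, it is enough to verify this at the single step in which $v$ is assigned its color, so I would split the argument by which line of the algorithm does the coloring.

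The start vertex is the unique vertex colored in the branch with no colored neighbor (line~\ref{line:start_vertex_beginning}), because the agent always leaves a colored vertex and hence reaches every other vertex from a colored one. So every non-start vertex is colored either by the special rule in line~\ref{line:colorwithtwoneighbors0}, by the one-colored-neighbor rule of lines~\ref{line:only_one_colored_neighbor_beginning}--\ref{line:only_one_colored_neighbor_end}, by the two-colored-neighbor rule of lines~\ref{line:colorwithtwoneighbors1}--\ref{line:colorwithtwoneighbors2}, or by the three-colored-neighbor rule of line~\ref{line:colorwiththreeneighbors}. For the special rule and the two- and three-neighbor rules, the invariant is immediate from the case tables: color $1$ is assigned only when some neighbor has color $3$, color $2$ only when some neighbor has color $1$, color $3$ only when some neighbor has color $2$, and color $4$ only when all of $1,2,3$ occur among the neighbors; since $(1-1)\modi 3 = 3$, $(2-1)\modi 3 = 1$, $(3-1)\modi 3 = 2$, and $(4-1)\modi 3 = 3$, a neighbor of the requisite color is present at the coloring step and, never being recolored, stays present.

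The delicate case is line~\ref{line:only_one_colored_neighbor_beginning}, where $v$ is first reached with a single colored neighbor $w$ and receives color $(c(w)+1)\modi 3$. If $c(w)\in\{1,2,3\}$ then $c(w) = (c(v)-1)\modi 3$ and $w$ witnesses the invariant, but $(4+1)\modi 3 = 2$, so I must rule out $c(w)=4$. For this I would prove the structural sub-lemma: the agent never steps from a color-$4$ vertex into an uncolored vertex whose only colored neighbor is that color-$4$ vertex. The proof exploits the geometry of $P^2$. A vertex receives color $4$ only while it has three colored neighbors, so a color-$4$ vertex $v_i$ has degree at least $3$; the only degree-$3$ vertices of a square path are $v_2$ and $v_{n-1}$, and once such a vertex gets color $4$ all of its neighbors are colored, so it can never be the source of a forward move into an uncolored vertex. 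If $v_i$ has degree $4$, then at the step it is colored $4$ it has exactly one uncolored neighbor $u_0$, and every later forward move from $v_i$ into an uncolored vertex must go to $u_0$; but $u_0\in\{v_{i-2},v_{i-1},v_{i+1},v_{i+2}\}$, and in each of these four subcases $u_0$ has, besides $v_i$, a further neighbor among the already colored neighbors of $v_i$---for example, if $u_0 = v_{i+2}$ then $v_{i+1}$ is a common neighbor of $v_i$ and $v_{i+2}$ and was colored before $u_0$. Hence $u_0$ has at least two colored neighbors and is never the vertex $v$ of the delicate case. This neighborhood bookkeeping, where the concrete structure of $P^2$ is genuinely used, is the step I expect to be the main obstacle.

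Putting it together: if the algorithm terminated on a vertex $v$ other than the start vertex, then $c(v)>0$ and, by the invariant, $v$ has a neighbor of color $(c(v)-1)\modi 3$; hence the agent would take the backtracking move of line~\ref{line:squareexplgobackwards} rather than terminating---a contradiction. Together with \cref{lem:squareexplterminates}, this proves that \textscalt{SquarePathExploration} terminates only on the start vertex.
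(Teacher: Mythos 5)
Your proof is correct and rests on the same core idea as the paper's: establish that every non-start vertex has, from the moment it is colored, a neighbor of color $(c(v)-1)\modi 3$, so that the terminate branch can only fire at the start vertex. The difference is in the case analysis. The paper's proof simply asserts that all non-start vertices are colored by the special rule, the two-colored-neighbor rule, or the three-colored-neighbor rule, and only rules out the spurious ``all neighbors have color $4$'' situation there; it never discusses vertices colored under the one-colored-neighbor rule (which do occur for non-start vertices -- e.g.\ $v_7$ in \cref{fig:P2_1}), where the invariant could in principle fail if the unique colored neighbor had color $4$, since $(4+1)\modi 3=2$ but $(2-1)\modi 3=1\neq 4$. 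You correctly single this out as the delicate subcase and close it with a structural sub-lemma (a color-$4$ vertex of degree $3$ has no uncolored neighbors left, and for degree $4$ its unique remaining uncolored neighbor always has a second colored neighbor among the square-path edges), using that colors are never changed. So your argument is strictly more complete than the printed proof: the paper buys brevity by glossing over the one-colored-neighbor rule, whereas your version makes the needed use of the $P^2$ structure explicit and would survive scrutiny of exactly the point the paper leaves implicit. The bookkeeping in your sub-lemma checks out in all four positions of the uncolored neighbor, so there is no gap.
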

\begin{proof}
\textscalt{SquarePathExploration} only terminates on a vertex $v$ 
without uncolored neighbors and where no neighbor $z$ has color 
$c(z)= (c(v)-1) \modi 3$.

On the one hand, since the start vertex receives color $1$ and its neighbors color $2$ or $4$, these conditions are true for the start vertex as soon as all its neighbors are colored.

On the other hand, all other vertices are colored according to line~\ref{line:colorwithtwoneighbors0},~\ref{line:colorwithtwoneighbors1}, or~\ref{line:colorwiththreeneighbors}. 
We argue that there is always a neighbor with $c(z)= (c(v)-1) \modi 3$. Indeed, this is clear for all cases with the exception of the case when there are two or three colored neighbors, all with color $4$. 
However, clearly, this can never happen as a $4$ is only assigned when there are three colored neighbors of colors $1,2,3$.
Thus, the claim follows.
\end{proof}

We proved that \textscalt{SquarePathExploration} terminates and 
that it terminates on the start vertex. We have yet to show that 
all vertices are indeed explored.
We first show that \textscalt{SquarePathExploration} does not 
create a \emph{hole}, i.e., there is never an uncolored vertex or a 
connected subgraph of uncolored vertices that is surrounded by 
colored vertices unless the uncolored vertex has degree $2$.

Then, we show that, after some initial steps, all vertices on one \emph{side} of the start vertex $s$ are colored.\footnote{To define the word ``side'', we remove $s$ and cut the edge ``opposite of $s$'', that is, the edge between two neighbors of $s$ with the property that when it is removed together with $s$, the graph is not connected any more. If there are two such edges, for example if $v_2 = s$ in \cref{fig:P2_1}, we cut the edge that minimizes the difference in the amount of vertices of the two components. If no such edge exists, then $s$ has degree $2$ and there is just one side.}
Together with the fact that \textscalt{SquarePathExploration} 
returns to the start vertex, the correctness of 
\textscalt{SquarePathExploration} follows.
\begin{lemma}
Apart from vertices of degree 2 and 3, 
\textscalt{SquarePathExploration} never creates a hole.	
\end{lemma}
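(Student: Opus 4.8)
The plan is to argue by induction on the number of steps the agent has taken, maintaining a structural description of the set of coloured vertices. Write $v_1,\dots,v_n$ for the underlying path of the square path and let $s=v_p$ be the start vertex. The invariant I would carry along is: at the end of every step the coloured set $C$ consists of a block of \emph{consecutive} path vertices $\{v_a,\dots,v_b\}$ with $a\le p\le b$, possibly enlarged by one ``antenna'' at either end --- that is, $v_{a-2}$ may be coloured while $v_{a-1}$ stays uncoloured, and symmetrically $v_{b+2}$ may be coloured while $v_{b+1}$ stays uncoloured. Granting this, the lemma follows: the uncoloured vertices always form a (possibly empty) left tail $v_1,\dots,v_{a-1}$ and right tail $v_{b+1},\dots,v_n$, each of which induces a single connected subgraph of $P^2$, since a lone gap vertex $v_{a-1}$ is attached to the rest of its tail through the distance-$2$ edge $v_{a-3}v_{a-1}$. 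Hence no connected set of uncoloured vertices that avoids the path endpoints $v_1$ and $v_n$ is ever completely surrounded by coloured vertices, apart from the transient single gap vertices $v_2$ (when $a=3$ with $v_1$ already coloured) and $v_{n-1}$; so every hole is confined to the four vertices $v_1,v_2,v_{n-1},v_n$, which are exactly the vertices of degree at most $3$.

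The base case is immediate, as $C=\{s\}$ after the first step. For the inductive step I would go through the branches of Algorithm~\ref{alg:spe_part1} together with its continuation one at a time and, inside each, through the moves the adversary may force, checking that $C$ keeps the prescribed shape. Three elementary facts about $P^2$ drive this and I would record them first: (i)~on arriving at an uncoloured vertex the agent always has a coloured neighbour, namely the one it just left --- coloured by the previous step --- so on a reachable uncoloured vertex $\neq s$ one of the $c(v)=0$ branches really does apply; (ii)~a backtracking step is taken only on a coloured vertex with no uncoloured neighbour, so it colours nothing and merely moves the agent one step back along the block; and (iii)~an antenna always sits at distance exactly $2$ from its block, so it is accompanied by a single gap vertex that stays adjacent to the block --- and hence reachable --- until it is filled. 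With (i)--(iii), each of the forward branches --- one coloured neighbour, the special case with neighbour colours $\{1,2,0,0\}$, and two or three coloured neighbours with an uncoloured neighbour left --- either extends the block by one, or fills a gap and thereby merges an antenna back into the block, or creates a single new antenna; in every case $C$ stays of the required form, while the backtracking branches and the degree-$2$ branch leave $C$ untouched.

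The step I expect to be the main obstacle is the bookkeeping for the opening moves, where the block is still tiny: there the adversary can bounce the agent to distance $2$ on \emph{both} sides of $s$ before any gap has been filled, so that an antenna may sit on the left and on the right at the same time, and one has to verify that even then no gap vertex loses its link to its tail and that the special-case rule ($\{1,2,0,0\}$) meshes correctly with the ordinary two- and three-neighbour colouring rules. Closely tied to this is the need to be sure that the colours handed out never let the agent mistake a still-open gap for an already-explored vertex; here \Cref{obs:functionalnature} together with the already-proven \cref{lemma:notrianglesinsquareexpl} does the work, excluding that the adversary exploits the functional nature of the algorithm to trap the agent while a gap is open. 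Once the invariant has been verified through all branches, the lemma follows as explained above.
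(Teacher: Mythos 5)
Your plan is headed in a workable direction---the block-plus-antenna invariant you describe is, as far as I can check, true and does imply the statement---but as written it is a proof outline rather than a proof, and the one step you defer is exactly the mathematical content of the lemma. The assertion that ``each of the forward branches either extends the block by one, or fills a gap and merges an antenna, or creates a single new antenna'' is precisely what has to be argued, and the critical point it hides is never stated: why can a second antenna never sprout beyond an existing one, i.e.\ why can the agent never stand on (or move past) an antenna while its gap vertex is still uncoloured? The answer is the rule in line~\ref{line:gobackifyouonlyseeoneneighbor}: a newly visited vertex with only one coloured neighbour is immediately left backwards, and backtracking from the block end is impossible while the adjacent gap is uncoloured, so the agent can never advance beyond an unfilled gap. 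This single observation is essentially the paper's entire proof, which argues directly about the step in which a hole would be created (the agent would have to continue to an uncoloured neighbour from a vertex with just one coloured neighbour, which the algorithm never does) and dispenses with any global invariant. Your flagged ``main obstacle'' (the opening moves) is in fact unproblematic once this observation is in place, but you leave it open rather than resolving it.

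Two smaller inaccuracies reinforce that the case analysis was not carried out: the degree-$2$ branch does \emph{not} leave the coloured set untouched (it colours the current vertex and extends the block to the path end), and the appeal to \cref{obs:functionalnature} together with \cref{lemma:notrianglesinsquareexpl} to rule out ``mistaking a still-open gap for an explored vertex'' does no work here---a gap vertex is uncoloured and is never confused with a coloured one; those lemmas are needed for termination and correctness elsewhere, not for hole-freeness. So the verdict is: right kind of invariant, correct intended conclusion (holes can only occur at $v_1,v_2,v_{n-1},v_n$, the vertices of degree at most $3$), but the decisive argument---the reason the algorithm never walks forward past an unfilled gap---is missing, and without it the induction is only declared, not proved.
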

\begin{proof}
Consider the step in which \textscalt{SquarePathExploration} would 
create a hole. 
In this step, there is just one colored neighbor, but 
\textscalt{SquarePathExploration} would still go to an uncolored 
neighbor.
\end{proof}

\begin{lemma}\label{lem:sideexploration}
As soon as the first two vertices on one side of the start vertex 
are colored, \textscalt{SquarePathExploration} explores all 
vertices on this side.
\end{lemma}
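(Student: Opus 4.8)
The plan is to argue exactly as in the correctness proofs of \textscalt{TreeExploration} and \textscalt{DepthFirstSearch} (\cref{lem:treeverticesareexplored,lem:dfsexploresallvertices}): assume that some vertex of the side stays unexplored and derive a contradiction from the functional behaviour of the agent at the frontier between the explored and the unexplored part, now using the preceding lemma (no hole is ever created apart from vertices of degree $2$ and $3$) to control the shape of the explored region and to keep the agent from slipping onto the other side. I would normalise notation by writing the square path as $v_1,\dots,v_n$ with start vertex $s=v_m$ and with the side in question being $S=\{v_1,\dots,v_{m-1}\}$, so that deleting $s$ together with the edge $\{v_{m-1},v_{m+1}\}$ separates $S$ from $\{v_{m+1},\dots,v_n\}$. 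If $|S|\le 2$ the hypothesis already gives the claim, so assume $|S|\ge 3$. Note also that on $S$ being visited and being coloured coincide, since on a first visit of a vertex of $S$ one of the colouring branches of \textscalt{SquarePathExploration} always applies (such a vertex has between one and four coloured neighbours).

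First I would pin down the shape of the coloured region. The only vertices of $S$ adjacent to $s$ are $v_{m-1}$ and $v_{m-2}$, every $v_i$ with $3\le i\le m-3$ has degree $4$, and no vertex is ever recoloured; hence, by the preceding (no‑hole) lemma and the fact that the set of visited vertices is always connected, once two vertices of $S$ are coloured the coloured vertices of $S$ form a single block $\{v_f,\dots,v_{m-1}\}$ reaching $s$, with $f$ non-increasing over time. In particular $v_{m-1}$ and $v_{m-2}$ are coloured; fix a time $t_0$ from which this holds, and suppose, towards a contradiction, that at termination --- which exists by \cref{lem:squareexplterminates} --- some vertex of $S$ is still unexplored. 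By the block structure, the uncoloured vertices of $S$ are then exactly $v_1,\dots,v_j$ for some $j$ with $1\le j\le m-3$, and $v_{j+1},\dots,v_{m-1}$ are all coloured while $v_1,\dots,v_j$ are never visited.

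The heart of the argument is then a square-path observation: $v_j$ is adjacent to \emph{both} $v_{j+1}$ and $v_{j+2}$ (through the distance-$1$ and distance-$2$ edges), so as long as $v_j$ stays uncoloured, a coloured occurrence of the agent on $v_{j+1}$ or on $v_{j+2}$ is forced to step to an uncoloured neighbour, and --- for $j\le m-4$ --- every uncoloured neighbour of $v_{j+1}$ or $v_{j+2}$ lies in $\{v_{j},v_{j-1}\}\subseteq\{v_1,\dots,v_j\}$, since all their remaining neighbours have index in $\{j+1,\dots,m\}$ and are coloured (with $v_m=s$ coloured). So let $\tau$ be the last time the agent occupies $v_{j+1}$ or $v_{j+2}$; this is well defined because both vertices were visited. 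If at $\tau$ the occupied vertex $x$ is already coloured, the previous sentence forces the agent's next step to enter $\{v_1,\dots,v_j\}$, contradicting that these vertices are never visited. If $\tau$ is instead a first visit of $x$, then $x$ has only neighbours in $\{v_{j+2},v_{j+3}\}$ coloured, so the algorithm is in normal case~$1$ or normal case~$2$; in normal case~$2$ it colours $x$ and moves to an uncoloured neighbour, again entering $\{v_1,\dots,v_j\}$; in normal case~$1$ it colours $x$ and retreats to its unique coloured neighbour, which is $v_{j+2}$ (contradicting the maximality of $\tau$) or $v_{j+3}$, the latter forcing $v_{j+2}$ to be still uncoloured at $\tau$ and hence visited after $\tau$, again contradicting the maximality of $\tau$. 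The boundary case $j=m-3$, where $v_{j+1}=v_{m-2}$ and $v_{j+2}=v_{m-1}$ are the only coloured vertices of $S$ and $v_{m-1}$ additionally has the other-side neighbour $v_{m+1}$, is handled separately by tracing the special-case-at-start branch (the one triggered by neighbour colours $\{1,2,0,0\}$): it always sends the agent from $v_{m-1}$ or $v_{m+1}$ back towards the pair $v_{m-1},v_{m-2}$ and eventually onto $v_{m-3}\in S$, giving the same contradiction. Hence no vertex of $S$ remains unexplored.

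The step I expect to be the main obstacle is precisely these last two points: the normal-case-$1$ ``retreat'' move, which lets a vertex be visited without the agent being pushed deeper, and the other-side neighbour $v_{m+1}$ of $v_{m-1}$, which a priori lets the agent leave $S$. Both are exactly what the no-hole lemma and the special-case-at-start rule are designed to neutralise, so the real work is a careful case check that, from $t_0$ onwards, any move of the agent out of a frontier vertex of $S$ either enters $S$'s uncoloured tail or is, transiently, a move to $v_{m+1}$ from which the special-case rule returns the agent to the pair $v_{m-1},v_{m-2}$. The bookkeeping of which of $v_{j+1},v_{j+2},v_{j+3}$ is coloured at which moment, and the small values of $m$, also need care, but are routine.
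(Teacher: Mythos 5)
Your route is genuinely different from the paper's --- the paper proves the lemma by induction on the number of vertices on the side, attaching one new end vertex $v$ and checking the finitely many local configurations ($a$, $b$, $c$, $d$, $v$) through which it can be reached --- but as written your global ``last visit to the frontier'' argument has a gap at its central step. You let $\tau$ be the last time the agent occupies $x\in\{v_{j+1},v_{j+2}\}$ and claim that every uncoloured neighbour of $x$ then lies in the tail $\{v_1,\dots,v_j\}$, ``since all their remaining neighbours have index in $\{j+1,\dots,m\}$ and are coloured.'' That is a statement about the \emph{final} colouring, but you need it at time $\tau$, and it need not hold then: \textscalt{SquarePathExploration} routinely colours vertices out of index order (the agent often colours $v_i$ and $v_{i+2}$ before $v_{i+1}$; this is exactly the behaviour in \cref{fig:P2_1} and \cref{fig:P2_special2}). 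So at $\tau$ the already-coloured vertex $x$ may still have an uncoloured higher-index neighbour such as $v_{j+3}$, the adversary may send the agent there, and no contradiction with the maximality of $\tau$ arises at that point --- your maximality trick is invoked only in the first-visit subcase. Closing this requires chasing the walk further (showing, e.g., that such a $v_{j+3}$ would then have a unique coloured neighbour and retreat to $x$, or that $v_{j+2}$ would still be uncoloured and hence visited after $\tau$); this is precisely the bookkeeping you defer as ``routine,'' and it is where the actual work of the lemma lies.

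Two further points. The preliminary claim that, once two vertices of $S$ are coloured, the coloured vertices of $S$ form a block $\{v_f,\dots,v_{m-1}\}$ with $f$ non-increasing over time is false at intermediate times: because of the distance-$2$ edges, connectivity of the visited set does not give an index interval (e.g.\ $v_{m-1}$ and $v_{m-3}$ can both be coloured while $v_{m-2}$ is not, when the adversary routes the agent over a distance-$2$ edge), and the no-hole lemma only excludes \emph{enclosed} uncoloured vertices, so at best a termination-time version of your tail structure can be salvaged --- which again pushes the difficulty back into the time-$\tau$ analysis above. Finally, the boundary case $j=m-3$ and the possibility of the agent leaving $S$ via $v_{m+1}$ through the special branch for neighbour colours $\{1,2,0,0\}$ are only asserted, not argued. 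The paper's induction avoids all of this by never reasoning about the global colouring at a single moment; it compares the run on a side of length $n$ with the run on the side of length $n-1$ and handles the last attached vertex by a short case distinction, so if you want to keep your approach you must supply the missing frontier-chasing argument rather than cite it as routine.
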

\begin{proof}
Before we start, we observe that the two first vertices on one side of the start vertex that become colored will be vertices that are sequential in the path. In other words, if we draw the graph as for example in \cref{fig:P2_2}, these first two vertices on one side of the start vertex will not be both ``in the top row'' or both ``in the bottom row.''

Having observed this, we now prove the lemma by induction on the number of vertices on one side. If there are only two vertices on one side, there is nothing to prove.
If there are three vertices on one side, we make use of the special case in lines~\ref{line:special_case_at_start_beginning} to \ref{line:special_case_at_start_end} to ensure that the third vertex is colored as well. 
Now suppose the claim is true for $n-1$ vertices on one side.
Add a vertex $v=v_n$ and the according edges to this side.
Let $a$ and $b$ be the two neighbors of $v$ and let $c$ be the common neighbor different from $v$ of $a$ and $b$ and $d$ the fourth neighbor of $b$.
This situation is depicted in \cref{fig:P2explorationproof}.

\begin{figure}
	\begin{center}
		\begin{tikzpicture}
			\tikzstyle{vertex}=[x=1.2cm, y=1.2cm, draw,circle, inner sep=.0pt, minimum size=0.7cm]
			
			\foreach \name/\x in {1/1, 3/3, 4/4}
			\node[vertex] (G-\name) at (\x,0) {};
			
			\foreach \name/\x in {s/2, c/5,	a/6}
			\node[vertex] (G-\name) at (\x,0) {$\name$};
			
			\foreach \name/\x in {10/1, 11/2, 12/3}
			\node[vertex] (G-\name) at (\x+0.5,-1) {};
			
			\foreach \name/\x in {d/4, b/5, v/6}
			\node[vertex] (G-\name) at (\x+0.5,-1) {$\name$};
			
			\foreach \from/\to in {1/s,s/3,3/4,4/c,4/d, c/a, c/b, 3/12, 1/10, a/v, 10/11, 11/12, 12/d, d/b, b/v,s/10,s/11, 3/11, 4/12, c/d, a/b}
			\draw (G-\from) -- (G-\to);
		\end{tikzpicture}
		\caption{Add a vertex $v$ in distance $n$}\label{fig:P2explorationproof}
	\end{center}
\end{figure}
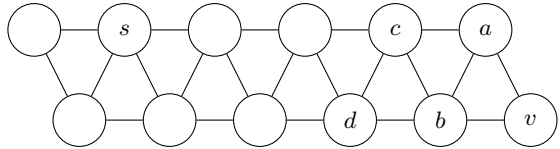

Note that $d$ might be the start vertex but by our assumption, $c$ may not.
By the induction assumption, \textscalt{SquarePathExploration} 
reaches at least $a$ or $b$.

When $a$ is reached first, then $a$ has only one colored neighbor $c$, and the agent returns to $c$.
We may invoke the induction assumption again to ascertain that 
\textscalt{SquarePathExploration} reaches $b$.
At this point, since there are no holes, the only uncolored neighbor is $v$.
Therefore, \textscalt{SquarePathExploration} goes to $v$, colors it 
and then returns to either $a$ or $b$.

When $b$ is reached first, we have two cases. First, 
\textscalt{SquarePathExploration} could go directly to $a$ or $v$.
This can only happen if $c$ and $d$ are colored since a vertex needs two colored neighbors in order to go to an uncolored neighbor.
If the agent visits $a$, it then goes to $v$. If the agent visits 
$v$, it first returns to $b$. Now, $a$ is the only uncolored 
neighbor of $b$ and $b$ has more than one colored neighbor. Hence, 
\textscalt{SquarePathExploration} visits $a$ next and we are done.

Second, \textscalt{SquarePathExploration} could go back to $c$ or 
$d$.
By the induction assumption, going back to $c$ is not possible as this would assume that $d$ is not colored and we have a hole already.
Here, we use the fact that the one of the two colored vertices on one side of the start vertex is in the top row and one is in the bottom row by the initial observation.
Hence, in this case, the agent goes back to $d$ and $c$ is not colored.
On $d$, the agent visits $c$ as the only uncolored neighbor.
The agent colors $c$ and then goes to $a$ as the only uncolored neighbor and again on
$a$, the agent colors $a$ and then goes to $v$ as the only uncolored neighbor.

Thus, \textscalt{SquarePathExploration} explores the larger graph 
without creating a hole.
\end{proof}

The previous lemmas can be combined to show that 
\textscalt{SquarePathExploration} indeed explores all vertices.

\begin{lemma}
\textscalt{SquarePathExploration} explores all vertices.	
\end{lemma}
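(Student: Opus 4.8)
The plan is to assemble the statement from the lemmas already proved. It suffices to show that on every square path and against every adversary the agent colors --- and hence visits --- every vertex; that it then stops on the start vertex follows from \cref{lem:squareexplterminates,lem:squareexplterminatesonstartvertex}. I would argue this by following the run of \textscalt{SquarePathExploration} from the start vertex $s$ onward, using the no-hole lemma and \cref{lemma:notrianglesinsquareexpl} as the structural backbone and \cref{lem:sideexploration} as the engine that clears an entire side once it has been seeded.

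First I would analyse the opening moves. The agent colors $s$ with color~$1$ and steps to some uncolored neighbor $a$; since $a$ then has exactly one colored neighbor, namely $s$, it receives color~$2$ and the agent returns to $s$. From the now colored $s$ the agent keeps moving to uncolored neighbors, and a finite case distinction --- on the degree of $s$, on the local shape of the square path around $s$, and on which uncolored neighbor the adversary offers at each step --- shows that after boundedly many steps the two vertices nearest to $s$ on one and the same side of $s$ are both colored. The decisive behaviour is that the branch handling a neighborhood with colors $\{1,2,0,0\}$ sends the agent back along an already colored edge instead of letting it push forward prematurely, so that whichever side the adversary commits to, the two vertices next to $s$ on that side do get filled in; \cref{lemma:notrianglesinsquareexpl} and the no-hole property keep this from degenerating.

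Once the first two vertices of a side are colored, \cref{lem:sideexploration} guarantees that this whole side becomes explored. When that side is exhausted the agent has no uncolored neighbor and, by \cref{lem:squareexplterminatesonstartvertex}, cannot stop anywhere but on $s$, so it backtracks to $s$. If $s$ still has an uncolored neighbor, that neighbor lies on the remaining side, and repeating the opening analysis --- now with the already colored first side contributing only extra colored neighbors to the vertices near $s$, which still triggers a coloring step --- shows that the first two vertices of the remaining side get colored as well, so \cref{lem:sideexploration} clears it too. I would handle the degenerate cases for $s$ separately: if $\deg s = 2$ there is just one side and the argument applies verbatim; if $\deg s = 3$, one ``side'' is the single path-endpoint adjacent to $s$, which is colored the first time the agent stands on it --- and the agent must stand on it, since once the other side is cleared it is the unique remaining uncolored neighbor of $s$. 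Once every vertex is colored the agent returns to $s$ a last time, finds all neighbors colored and none of color $(c(s)-1)\modi 3$, and terminates there, having explored everything.

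The step I expect to be the real obstacle is precisely the opening bookkeeping of the second paragraph: one must verify, for every adversary and for each possible degree and local configuration of $s$, that the several special provisions of the algorithm (the $\{1,2,0,0\}$ rule, the degree-$2$ handling, and the fourth color) interlock so as to drive the agent to color the two $s$-adjacent vertices of a common side before anything can stall. Everything afterward is a clean hand-off to \cref{lem:sideexploration} together with the termination lemmas.
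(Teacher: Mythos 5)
Your decomposition is the same as the paper's: combine the termination lemmas (\cref{lem:squareexplterminates,lem:squareexplterminatesonstartvertex}) with \cref{lem:sideexploration}, which clears an entire side once its first two vertices are colored. The difference lies in how the hypothesis of \cref{lem:sideexploration} is established, and this is exactly where your proposal has a gap: you defer it to an unexecuted ``finite case distinction'' over the opening moves (and a second, similarly unexecuted analysis for the remaining side), which you yourself flag as the real obstacle. As written, nothing in your text verifies that, against every adversary, the two vertices nearest to $s$ on a common side both become colored; when $s$ has degree $4$ the adversary may alternate between the two sides, and tracking the interplay of the $\{1,2,0,0\}$ rule, the two- and three-colored-neighbor rules, and color $4$ is precisely the bookkeeping you postpone. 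So at its crucial point the proposal is a plan rather than a proof.

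The paper avoids this bookkeeping entirely with one observation you miss: in a square path, the first two vertices of any side containing at least two vertices are themselves neighbors of $s$, and by \cref{lem:squareexplterminates,lem:squareexplterminatesonstartvertex} the agent terminates, terminates only on $s$, and cannot terminate while $s$ still has an uncolored neighbor (it would move to it instead). Hence by the time the run ends, all neighbors of $s$ are colored, so the precondition of \cref{lem:sideexploration} is met for every side automatically, with no analysis of the opening phase at all; a side consisting of a single vertex is itself a neighbor of $s$ and is therefore visited directly. If you replace your forward case analysis of the opening (and of the hand-off to the second side) by this terminal-state argument, your proof closes along the paper's lines; without it, the central step remains unproven.
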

\begin{proof}
By Lemmas~\ref{lem:squareexplterminates} 
and~\ref{lem:squareexplterminatesonstartvertex}, 
\textscalt{SquarePathExploration} terminates on the start vertex 
$s$.
However, \textscalt{SquarePathExploration} cannot terminate with 
uncolored neighbors.
As soon as two neighbors on one side of $s$ are colored, all vertices of this side are colored by \cref{lem:sideexploration}.
Of course, again by Lemmas~\ref{lem:squareexplterminates} 
and~\ref{lem:squareexplterminatesonstartvertex}, 
\textscalt{SquarePathExploration} then returns to $s$.
If only one neighbor $v$ on one side of the start vertex is colored, then either there is another uncolored vertex $v_2$ on the same side of $s$ that is also a neighbor of $s$ or there is no other vertex on this side of~$s$.
\end{proof}

This finally leads to the desired theorem:
\begin{theorem}
\textscalt{SquarePathExploration} is correct on graphs of the form 
$P^2$ and uses at most $4$ colors.
\end{theorem}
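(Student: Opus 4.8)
The plan is to obtain the theorem purely by assembling the lemmas already established in this section; no new graph construction or adversary argument is needed. Recall that an algorithm is correct on the class $P^2$ if, for every $G\in P^2$ and every adversary, after finitely many steps every vertex has been the current vertex, the agent sits on the start vertex $s$, and the output is \textscalt{Stop}. So there are exactly three obligations to discharge: (i) only $4$ colors are ever written; (ii) the run halts, and halts only on $s$; (iii) no vertex is missed before the halt.

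First I would dispatch the color bound. A glance at Algorithm~\ref{alg:spe_part1} shows that every color ever assigned is an explicit constant: color $1$ at the start vertex, and one of $1,2,3$ in the one- and two-colored-neighbor cases, plus---only in line~\ref{line:colorwiththreeneighbors}, and only when all three of $1,2,3$ already appear among the neighbors---the extra color $4$. Moreover no vertex is ever recolored, since every coloring instruction is guarded by the condition $c(v)=0$. Hence \textscalt{SquarePathExploration} uses at most $4$ colors on every $G\in P^2$, and by the well-definedness lemma (and the fact that every ``go to'' is to a neighbor that exists) it is a legitimate algorithm in our model.

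For correctness I would chain the lemmas in the order halting, then halting location, then coverage. \cref{lem:squareexplterminates}, which itself rests on the triangle-freeness statement \cref{lemma:notrianglesinsquareexpl}, gives that the run is finite on every square path, so a \textscalt{Stop} is eventually reached. \cref{lem:squareexplterminatesonstartvertex} shows that the only vertex at which the halting condition (no uncolored neighbor and no neighbor of color $(c(v)-1)\modi 3$) can hold is $s$, so the agent stops on $s$. Finally, the lemma asserting that \textscalt{SquarePathExploration} explores all vertices---derived from the no-hole lemma and the inductive side-exploration lemma \cref{lem:sideexploration}---guarantees that when the agent stops on $s$ every vertex of $G$ has already been visited. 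The conjunction of these three statements is precisely the definition of correctness on $P^2$, which completes the proof.

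The only point requiring care in this last step is the logical bookkeeping: one must be sure that ``halts'' together with ``halts only on $s$'' together with ``visits all vertices'' genuinely imply the full specification, i.e.\ that the agent cannot stop on $s$ while some far-away vertex is still unvisited. This is already handled inside the explores-all-vertices lemma---whose argument observes that $s$ is never left with an uncolored neighbor, that two colored neighbors of $s$ on one side force that whole side to be colored via \cref{lem:sideexploration}, and that the only remaining configuration is a side consisting of a single additional neighbor of $s$---so I would cite that lemma rather than re-derive it. Consequently the theorem itself is a short assembly; all of the genuine difficulty (triangle-freeness, absence of holes, and the inductive exploration of one side of the start vertex) lives in the preceding lemmas, and there is no serious obstacle left at this stage.
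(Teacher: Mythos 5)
Your proposal is correct and matches the paper's approach: the paper states this theorem as an immediate consequence of the preceding lemmas (termination, termination only on the start vertex, and exploration of all vertices), with the $4$-color bound read off directly from the algorithm's explicit color assignments. Your assembly of \cref{lem:squareexplterminates}, \cref{lem:squareexplterminatesonstartvertex}, \cref{lem:sideexploration}, and the explores-all-vertices lemma is exactly the intended argument.
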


\section{Exploration with Recoloring} \label{sec:recoloring}
In this section, we allow recoloring of already colored vertices.
We prove that in this case, seven colors are enough to explore any graph.
This demonstrates the superior strength of strategies with recoloring.
We present a formal description of our algorithm 
\textscalt{Recolorer} in Algorithm~\ref{alg:recolor}.

Let us describe the basic strategy.
We abuse the term ``color'' and say a vertex either receives label $x$ or a label $1$, $2$ or $3$ together with one of the two colors green and red.
The special label $x$ is assigned to \emph{delete} vertices, that is, to mark vertices to which the algorithm must never return.
The algorithm performs breadth-first search and labels every vertex with $1$, $2$, or $3$ depending on \emph{depth}, i.e., its distance from the start vertex; vertices in depth $k$ receive label $k+1 \modi 3$. Unless a vertex is deleted, the labels are never changed, only the colors.

We imagine the start vertex as top vertex and go from top to bottom when moving further away from the start vertex.
Since vertices in depth $k$ have only neighbors in depth $k-1$, $k$, and $k+1$, this labeling provides a sense of direction for the algorithm.

When \textscalt{Recolorer} is on a vertex $v$ that is in depth $k$, 
we call neighbors in depth $k-1$ \emph{parents} (of $v$), neighbors 
in depth $k+1$ \emph{children} (of $v$), and neighbors in the same 
depth as $v$ \emph{siblings}.
Note that the agent is going to be able to determine whether labeled neighbors are parents, siblings, or children.
To ensure that all vertices in a certain depth are colored, the algorithm colors the vertices from green to red and vice versa, depending on the current phase.

There are green phases and red phases in alternating order.

Initially, the start vertex receives label $1$ and color red.
Then, a green phase starts. In a green phase, the algorithm follows the red labels from $1$ to $2$ to $3$ to $1$ etc. until an unvisited vertex is reached.
This vertex is then marked with the next label and colored green. The algorithm then proceeds to a parent $p$.
If $p$ has unvisited neighbors---these are all children, as we will see---, then one of these neighbors is visited.
Otherwise, if all children of $p$ are colored, the algorithm goes down to a child of the same color as $p$ until it eventually finds an unvisited vertex, which is then colored green before going up to a parent.
As soon as all children of a parent are of another color, the algorithm recolors the parent and then goes up to a grandparent.
This process takes place until the start vertex is reached and all children of the start vertex are colored green.
Now, the start vertex is marked green as well and a red phase starts, which works analogously.

This process would suffice to achieve perpetual exploration.
To ensure that the algorithm terminates, the algorithm deletes a vertex whenever all its children are deleted or if there are no children at all.
An example of an exploration is depicted in \cref{fig:recolorer}.
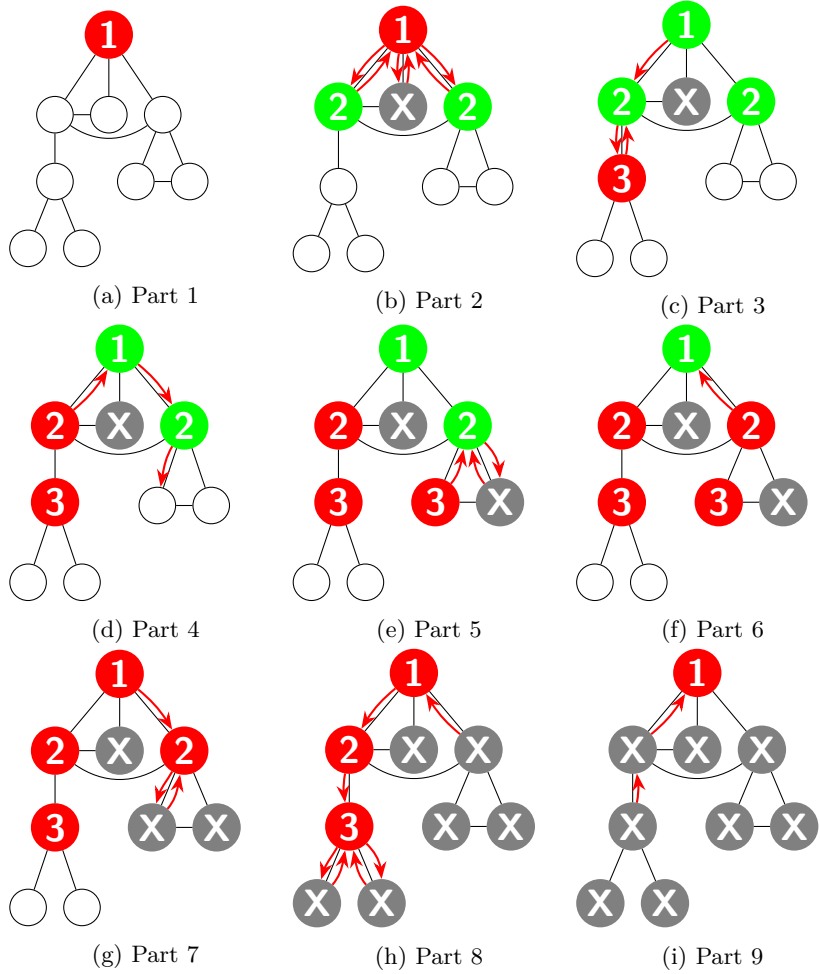
\begin{figure}[htbp]\begin{center}
	\begin{subfigure}{.3\textwidth}
		\begin{forest}
			[1,redded, name=A, for descendants={unvisited}
			[,, name=B1 ,
			[[][]]]
			[,, name=B2]
			[,, name=B3
			[,, name=C1][,, name=C2]]]
			]
			\draw (B1)--(B2) (C1)--(C2);
			\draw (B1) edge[out=-35, in=-145] (B3);
		\end{forest}
		\caption{Part 1}
	\end{subfigure}
	\begin{subfigure}{.3\textwidth}
		\begin{forest}
			[1,redded, name=A, for descendants={unvisited}
			[2,greened, name=B1 ,
			[[][]]]
			[X,deleted, name=B2]
			[2,greened, name=B3
			[,, name=C1][,, name=C2]]]
			]
			\draw (B1)--(B2) (C1)--(C2);
			\draw (B1) edge[out=-35, in=-145] (B3);
			\draw[red, thick, -Stealth] (A) edge[out=-140, in=60] (B1);
			\draw[red, thick, -Stealth] (B1) edge[out=40, in=-120] (A);
			\draw[red, thick, -Stealth] (A) edge[out=-100, in=100] (B2);
			\draw[red, thick, -Stealth] (B2) edge[out=80, in=-80] (A);
			\draw[red, thick, -Stealth] (A) edge[out=-40, in=120] (B3);
			\draw[red, thick, -Stealth] (B3) edge[out=140, in=-60] (A);
		\end{forest}
		\caption{Part 2}
	\end{subfigure}
	\begin{subfigure}{.3\textwidth}
		\begin{forest}
			[1,greened, for descendants={unvisited}
			[2,greened, name=B1 ,
			[3,redded, name=D1 [][]]]
			[X,deleted, name=B2]
			[2,greened, name=B3
			[,, name=C1][,, name=C2]]]
			]
			\draw (B1)--(B2) (C1)--(C2);
			\draw (B1) edge[out=-35, in=-145] (B3);
			\draw[red, thick, -Stealth] (A) edge[out=-140, in=60] (B1);
			\draw[red, thick, -Stealth] (B1) edge[out=-100, in=100] (D1);
			\draw[red, thick, -Stealth] (D1) edge[out=80, in=-80] (B1);
		\end{forest}
		\caption{Part 3}
	\end{subfigure}
	
	\begin{subfigure}{.3\textwidth}
		\begin{forest}
			[1,greened, for descendants={unvisited}
			[2,redded, name=B1 ,
			[3,redded [][]]]
			[X,deleted, name=B2]
			[2,greened, name=B3
			[,, name=C1][,, name=C2]]]
			]
			\draw (B1)--(B2) (C1)--(C2);
			\draw (B1) edge[out=-35, in=-145] (B3);
			\draw[red, thick, -Stealth] (B1) edge[out=40, in=-120] (A);
			\draw[red, thick, -Stealth] (A) edge[out=-40, in=120] (B3);
			\draw[red, thick, -Stealth] (B3) edge[out=-120, in=80] (C1);
		\end{forest}
		\caption{Part 4}
	\end{subfigure}
	\begin{subfigure}{.3\textwidth}
		\begin{forest}
			[1,greened, for descendants={unvisited}
			[2,redded, name=B1 ,
			[3,redded [][]]]
			[X,deleted, name=B2]
			[2,greened, name=B3
			[3,redded, name=C1][X,deleted, name=C2]]]
			]
			\draw (B1)--(B2) (C1)--(C2);
			\draw (B1) edge[out=-35, in=-145] (B3);
			\draw[red, thick, -Stealth] (C1) edge[out=50, in=-100] (B3);
			\draw[red, thick, -Stealth] (B3) edge[out=-45, in=90] (C2);
			\draw[red, thick, -Stealth] (C2) edge[out=130, in=-80] (B3);
		\end{forest}
		\caption{Part 5}
	\end{subfigure}
	\begin{subfigure}{.3\textwidth}
		\begin{forest}
			[1,greened, for descendants={unvisited}
			[2,redded, name=B1 ,
			[3,redded [][]]]
			[X,deleted, name=B2]
			[2,redded, name=B3
			[3,redded, name=C1][X,deleted, name=C2]]]
			]
			\draw (B1)--(B2) (C1)--(C2);
			\draw (B1) edge[out=-35, in=-145] (B3);
			\draw[red, thick, -Stealth] (B3) edge[out=140, in=-60] (A);
		\end{forest}
		\caption{Part 6}
	\end{subfigure}
	
	\begin{subfigure}{.3\textwidth}
		\begin{forest}
			[1,redded, for descendants={unvisited}
			[2,redded, name=B1 ,
			[3,redded [][]]]
			[X,deleted, name=B2]
			[2,redded, name=B3
			[X,deleted, name=C1][X,deleted, name=C2]]]
			]
			\draw (B1)--(B2) (C1)--(C2);
			\draw (B1) edge[out=-35, in=-145] (B3);
			\draw[red, thick, -Stealth] (A) edge[out=-40, in=120] (B3);
			\draw[red, thick, -Stealth] (B3) edge[out=-120, in=80] (C1);
			\draw[red, thick, -Stealth] (C1) edge[out=50, in=-100] (B3);
		\end{forest}
		\caption{Part 7}
	\end{subfigure}
	\begin{subfigure}{.3\textwidth}
		\begin{forest}
			[1,redded, for descendants={unvisited}
			[2,redded, name=B1 ,
			[3,redded [X,deleted, name=E1][X,deleted, name=E2]]]
			[X,deleted, name=B2]
			[X,deleted, name=B3
			[X,deleted, name=C1][X,deleted, name=C2]]]
			]
			\draw (B1)--(B2) (C1)--(C2);
			\draw (B1) edge[out=-35, in=-145] (B3);
			\draw[red, thick, -Stealth] (B3) edge[out=140, in=-60] (A);
			\draw[red, thick, -Stealth] (A) edge[out=-140, in=60] (B1);
			\draw[red, thick, -Stealth] (B1) edge[out=-100, in=100] (D1);
			\draw[red, thick, -Stealth] (D1) edge[out=-120, in=80] (E1);
			\draw[red, thick, -Stealth] (E1) edge[out=50, in=-100] (D1);
			\draw[red, thick, -Stealth] (D1) edge[out=-45, in=90] (E2);
			\draw[red, thick, -Stealth] (E2) edge[out=130, in=-80] (D1);
		\end{forest}
		\caption{Part 8}
	\end{subfigure}
	\begin{subfigure}{.3\textwidth}
		\begin{forest}
			[1,redded, for descendants={unvisited}
			[X,deleted, name=B1 ,
			[X,deleted [X,deleted][X,deleted]]]
			[X,deleted, name=B2]
			[X,deleted, name=B3
			[X,deleted, name=C1][X,deleted, name=C2]]]
			]
			\draw (B1)--(B2) (C1)--(C2);
			\draw (B1) edge[out=-35, in=-145] (B3);
			\draw[red, thick, -Stealth] (D1) edge[out=80, in=-80] (B1);
			\draw[red, thick, -Stealth] (B1) edge[out=40, in=-120] (A);
		\end{forest}
		\caption{Part 9}
	\end{subfigure}
	\caption{An example of an exploration by 
	\textscalt{Recolorer}.}\label{fig:recolorer}
\end{center}\end{figure}

\begin{algorithm}[htbp]
\caption{{}\textscalt{Recolorer}} \label{alg:recolor}
\textbf{Input:} An undirected graph in an online fashion. In each step, the input is $l(v), c(v) \in \N$, which are label and color of the current vertex $v$, and $l(v_1), c(v_1), \ldots, l(v_k), c(v_k)$, the labels and colors of the neighbors of $v$, where $k$ is the degree of $v$.\\
\textbf{Output:} In each step, the algorithm outputs $l(v)$, the label the agent assigns to $v$; $c(v)$, the color the agent assigns to $v$; and $i \in [k]$, the vertex the agent has to take next. It is allowed to change the color of a vertex that has been colored before.\\
\textbf{Description:} For each phase $k$, visit all vertices in depth $k$ and then return to the start vertex. On return, color all vertices green if $k$ is odd, and red otherwise.
\begin{algorithmic}[1]
	\If{all children of $v$ are deleted} \label{line:childrendeleted}
	\LineComment{the entire subtree of the current vertex has been explored}
	\LineComment{or there is nothing to explore}
	\State $l(v) \coloneqq x$ \Comment{$v$ is deleted as well}
	\State go to any parent and if there is no parent, \textbf{terminate.} \label{line:deletevertexandmoveup}
	\ElsIf{$l(v)=0$} \Comment{$v$ has not been visited before}\label{line:uncoloredvertex}
	\State $l(v)\coloneqq \left(\min\limits_{v' \in N(v)} 		l(v')\right)+1 \modi 3$ \label{line:label1}
	\State $c(v)\coloneqq
	\begin{cases}
		\text{green,}&\text{ if there is parent with color red}\\
		\text{red,}&\text{ otherwise}\\
	\end{cases}$ \label{line:color1}
	\State go to any parent and if there is no parent, go to any neighbor \Comment{there is no parent in the case of the start vertex} \label{line:gotoanyneighbor} \label{line:gotoanyparent}
	\Else \Comment{$v$ has been visited before}
	\If{there is an unlabeled child $v'$} \label{line:condunlabeledchild}
	\LineComment{agent is on level $i$ and level $i+1$ has not been completely explored}
	\State go to $v'$ \label{line:gotounlabeledchild}
	\ElsIf{there is a child $v'$ of the same color} \label{line:childofsamecolor}
	\LineComment{agent is on level $i$ and level $j$ with $j>i+1$ has }
	\LineComment{not been completely explored}
	\State	go to $v'$ \label{line:gotochildofsamecolor}
	\ElsIf{there is a non-deleted child and all non-deleted children are of complementary color $k$ ($=\text{green/red}$)}
	\LineComment{the subtree of the current vertex was explored up to some level $i$ and}
	\LineComment{now the agent has to move towards the start vertex}
	\State $c(v) \coloneqq k$ \Comment{the vertex agrees with the color of its children}
	\State go to any parent and if there is no parent, go to any non-deleted child.\label{line:gotoanyneighborfromstartvertex} \label{line:gotoanyparentfromcoloredvertex}
	\EndIf
	\EndIf
\end{algorithmic}
\end{algorithm}

To prove that \textscalt{Recolorer} successfully explores all 
graphs, we first define the term \emph{phase}.
Initially, \textscalt{Recolorer} is in phase 0. 
For each $k\in \naturalnumberpositive$, phase $k$ ends and phase $k+1$ begins immediately after the start vertex receives a color and the agent decides to terminate or to go to a neighbor.
Before we turn towards the main lemma, we start with two general observations.
\begin{observation}\label{obs:recolorer1}
Upon visiting, a vertex becomes labeled.
\end{observation}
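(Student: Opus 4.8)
The plan is to prove Observation~\ref{obs:recolorer1} by a straightforward case distinction on which branch of Algorithm~\ref{alg:recolor} the agent of \textscalt{Recolorer} executes during the step in which it visits a vertex $v$. I would fix such a step and show that, once it is completed, $l(v) \neq 0$.

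First I would handle the case in which $v$ already carried a nonzero label before the step. Then either the guard in line~\ref{line:childrendeleted} holds, in which case $v$ is reassigned the label $x$ (still different from $0$) before the agent moves up, or the agent enters one of the branches starting at line~\ref{line:condunlabeledchild}, none of which modifies $l(v)$; in both situations $v$ stays labeled. Second, I would treat the case $l(v) = 0$. Here the agent can only enter the deletion branch at line~\ref{line:childrendeleted}, which again assigns $v$ the label $x$, or the branch guarded by $l(v) = 0$ at line~\ref{line:uncoloredvertex}, which via line~\ref{line:label1} sets $l(v)$ to $(\min_{v' \in N(v)} l(v')) + 1 \modi 3$. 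The one point worth spelling out is that, by the definition of $\modi$, this value lies in $\{1,2,3\}$ and is in particular nonzero, so $v$ is labeled after the step in this case as well.

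The only subtlety — and it is a mild one — is exhaustiveness of the case split, i.e., the claim that every visit actually triggers one of the branches just discussed. For $l(v) \neq 0$ this holds because the failure of the guard in line~\ref{line:childrendeleted} guarantees a non-deleted child, which is necessarily either of $v$'s color (line~\ref{line:childofsamecolor}) or of the complementary color (the following \textbf{ElsIf}); for $l(v) = 0$ it is immediate, since the $l(v) = 0$ guard is satisfied whenever the deletion guard is not. I would also remark that the deletion guard refers to the \emph{children} of $v$, a notion that is only well defined once $v$'s depth — hence its label — is known, but this causes no trouble here: for an unlabeled $v$, either resolution of that guard leads to a branch that hands $v$ a label ($x$ on one side, a value in $\{1,2,3\}$ on the other). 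Finally I would note that the statement subsumes the very first step on the start vertex, which falls into the $l(v)=0$ branch and receives label $1$. No deeper argument is required; the observation is essentially read off from the pseudocode, and the burden is only to make the case analysis airtight.
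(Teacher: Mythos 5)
Your case analysis is correct and amounts to exactly what the paper intends: the observation is stated without proof as an immediate consequence of the pseudocode, since every visit either triggers the deletion branch (label $x$) or the $l(v)=0$ branch (a label in $\{1,2,3\}$ via line~\ref{line:label1}), and an already labeled vertex never loses its label. Your extra remarks on exhaustiveness and on ``children'' of an unlabeled vertex are sensible but not needed beyond what you already conclude.
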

\begin{observation}\label{obs:recolorer2}
The label of a vertex $v$ is only changed if $v$ gets deleted.
\end{observation}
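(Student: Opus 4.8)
The plan is to prove \cref{obs:recolorer2} by a direct inspection of the pseudocode of \textscalt{Recolorer} in \cref{alg:recolor}. The decisive observation is that the label $l(v)$ is written in only two places: in the branch opened in line~\ref{line:childrendeleted}, where $l(v)$ is set to the delete marker $x$, and in line~\ref{line:label1}, where $l(v)$ is set to $\left(\min_{v'\in N(v)} l(v')\right)+1 \modi 3$. Both assignments affect only the \emph{current} vertex; no instruction of \textscalt{Recolorer} ever alters the label stored at any other vertex. Hence it suffices to show that the assignment in line~\ref{line:label1} never \emph{changes} a label but only installs one where none was present before.

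For this I would note that line~\ref{line:label1} lies inside the branch guarded by the test $l(v)=0$ in line~\ref{line:uncoloredvertex}, so it is executed only when the current vertex is still unlabeled; it therefore turns an unlabeled vertex into a labeled one rather than overwriting an existing label. Combining this with \cref{obs:recolorer1}, which guarantees that every visited vertex is labeled, we obtain that on every visit after the first the current vertex satisfies $l(v)\neq 0$, so the $l(v)=0$ branch---and with it line~\ref{line:label1}---is never re-entered for that vertex. Consequently, the only way a vertex already carrying a label $\ell\in\{1,2,3\}$ can have its label modified is via the branch of line~\ref{line:childrendeleted}, which replaces $\ell$ by $x$; but executing this branch is by definition the act of deleting $v$. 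This is precisely the claim of the observation.

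I do not expect a genuine obstacle here, as the statement is essentially a syntactic invariant of the algorithm, proved by a short case distinction over the three top-level branches---\emph{all children deleted}, \emph{$l(v)=0$}, and \emph{otherwise}---together with the remark that the nested conditional cascade in the \emph{otherwise}-branch (lines~\ref{line:condunlabeledchild} to \ref{line:gotoanyparentfromcoloredvertex}) contains no label assignment at all. The one point worth stating explicitly is that the delete marker $x$, once assigned, is itself never changed: reaching line~\ref{line:label1} would require $l(v)=0\neq x$, so a deleted vertex can likewise never have its label altered, which is consistent with the intended reading of $x$ as a permanent mark.
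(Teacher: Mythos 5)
Your proof is correct and matches the paper's reasoning: the paper states \cref{obs:recolorer2} without an explicit proof, treating it as immediate from the pseudocode, and your case distinction---the only label assignments in Algorithm~\ref{alg:recolor} are the deletion in the branch of line~\ref{line:childrendeleted} and the assignment in line~\ref{line:label1}, which is guarded by the test $l(v)=0$ in line~\ref{line:uncoloredvertex} and hence never overwrites an existing label---is exactly that implicit justification, including the remark that the marker $x$ is itself permanent.
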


\begin{lemma}\label{lem:recolorer}
At the end of phase $k$, all vertices in depth $j$ with $j\le k$ are either deleted or labeled with $(j+1) \modi 3$; all non-deleted and labeled vertices are colored with the same color; and no vertices in depth $k+1$ or lower are visited. Moreover, the agent is not caught in an endless loop in phase $k$.
\end{lemma}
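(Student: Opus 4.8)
The plan is to prove the lemma by induction on the phase index $k$. The base case $k=0$ is immediate: phase $0$ is the single step in which the agent, on the start vertex $s$, executes lines~\ref{line:label1}--\ref{line:gotoanyparent}, assigns $s$ the label $1=(0+1)\modi 3$ and color red, and moves to a neighbor; then $s$ is the only labeled vertex, no vertex of depth $\ge 1$ has been visited, and there is no loop. For the inductive step, assume the statement at the end of phase $k-1$ for some $k\ge 1$. Applying it, at the start of phase $k$ every vertex of depth $j\le k-1$ is deleted or labeled $(j+1)\modi 3$; every non-deleted labeled vertex carries one common color, namely the one \emph{opposite} to the color that phase $k$ will assign; no vertex of depth $\ge k$ has been visited; and, since phase $k-1$ ended with $s$ receiving a color and the agent then stepping onto a neighbor of $s$, the agent starts phase $k$ at a depth-$1$ vertex. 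I would then trace the agent through phase $k$ and argue that its walk is a depth-first traversal of the subgraph induced by the vertices of depth $\le k$: it descends only to unlabeled children (line~\ref{line:gotounlabeledchild}) or to children of the \emph{old} color (line~\ref{line:gotochildofsamecolor}), and it ascends only after labeling a freshly discovered vertex (lines~\ref{line:label1}--\ref{line:gotoanyparent}), after deleting a vertex all of whose children are deleted (line~\ref{line:deletevertexandmoveup}), or after recoloring a vertex whose non-deleted children have all switched to the new color (line~\ref{line:gotoanyparentfromcoloredvertex}). Since the search is in BFS order, every edge joins two vertices whose depths differ by at most one, so the residues $\ell-1,\ell,\ell+1$ modulo $3$ let the agent distinguish parents, siblings and children of a depth-$d$ vertex, which is what makes the navigation well-defined.

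The structural claims then follow from this picture. First, the agent never reaches depth $k+1$: the only rule that could take it there is line~\ref{line:gotounlabeledchild}, which requires an unlabeled child, and by the induction hypothesis an unlabeled neighbor can occur only at a depth-$(k-1)$ vertex; a depth-$k$ vertex, when first reached, is handled by lines~\ref{line:label1}--\ref{line:gotoanyparent} or by line~\ref{line:deletevertexandmoveup}, both of which send the agent straight back up, and it is never revisited during phase $k$, because once labeled it is no longer an unlabeled child and once it carries the new color it is no longer ``of the same color'' as any old-colored parent. Hence every depth-$k$ vertex is visited exactly once: a vertex without children is deleted, and every other depth-$k$ vertex receives the new color---its parents are still old-colored at that moment, so line~\ref{line:color1} chooses correctly---together with the label $(k+1)\modi 3$, since its labeled neighbors at that point are exactly its parents (labeled $k\modi 3$) together possibly with already-processed siblings (labeled $(k+1)\modi 3$). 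With the induction hypothesis for depths $\le k-1$, this gives the labeling/deletion property for all depths $\le k$ and the ``no depth $\ge k+1$ visited'' property.

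It remains to obtain the common-color property and to exclude an endless loop. For the color property, I would show that \emph{every} non-deleted vertex of depth $\le k-1$ gets recolored to the new color, uniqueness of the recoloring being clear, by a minimal-depth argument: suppose some non-deleted vertex is never \emph{finalized} by the end of phase $k$---finalized meaning recolored, for a vertex of depth $\le k-1$, and labeled-and-colored or deleted, for a vertex of depth $k$---and pick such a vertex $v$ of least depth $d$. If $d\ge1$, a parent $p$ of $v$ is non-deleted (a vertex is deleted only once all of its children are) and not recolored (line~\ref{line:gotoanyparentfromcoloredvertex} fires only when \emph{all} non-deleted children of $p$ carry the new color, whereas $v$ does not), so $p$ is non-finalized of smaller depth, contradicting minimality; and if $d=0$ then $v=s$, but phase $k$ ends precisely when $s$ is recolored (or, in the final phase, deleted), so $s$ is finalized at the end of phase $k$, again a contradiction. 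To exclude a loop, observe that the agent's depth changes by $\pm1$ at each step, every upward step finalizes the vertex it leaves, a finalized vertex is never descended into again, and only a bounded number of further steps---each strictly increasing the depth---can occur between successive finalizing steps; hence phase $k$ has finitely many steps, and since the agent can halt only by deleting a parentless vertex, it cannot stop before reaching $s$, where the minimal-depth argument forces $s$ to be recolored, which closes phase $k$ and completes the induction.

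I expect the completeness-and-termination step to be the main obstacle, precisely because the BFS structure is a directed acyclic graph rather than a tree: a vertex may have several parents, so after finishing a child's subtree the agent typically resurfaces at a parent other than the one it came from, which breaks any naive recursive description of the traversal. Handling completeness by the minimal-depth argument instead, together with the observation that finalized vertices are inert---no rule ever steers the agent back into one---is what makes the argument go through. A further fiddly point is the case in which an entire subtree gets deleted within phase $k$, so that a depth-$(k-1)$ vertex all of whose depth-$k$ children are BFS-leaves is itself deleted; one has to verify that this stays consistent with the ``deleted or labeled $(j+1)\modi 3$'' dichotomy of the statement.
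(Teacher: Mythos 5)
Your overall architecture parallels the paper's: induction over phases, correct labeling and coloring of depth-$k$ vertices at their first visit, no descent to depth $k+1$, termination of the phase, and a completeness argument once the start vertex is recolored (your minimal-depth argument is a sound, more explicit version of the paper's terse final step, and your step-counting termination argument differs harmlessly from the paper's, which instead shows a loop would have to consist solely of parent-moves).

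However, there is a genuine gap at the crux. Your justification that a depth-$k$ vertex ``is never revisited during phase $k$'' only excludes a descent from an \emph{old-colored} parent (``once it carries the new color it is no longer `of the same color' as any old-colored parent''). But parents do get recolored to the new color during phase $k$, and from a recolored parent the rule in line~\ref{line:gotochildofsamecolor} would send the agent straight back into a new-colored child; likewise ``go to any parent'' could a priori select an already recolored parent. The later assertion that ``finalized vertices are inert---no rule ever steers the agent back into one,'' on which both your termination bound and your exactly-once claims lean, is precisely this unproved statement, and it is not an observation: read literally it is false for the same-color descent rule unless one first shows the agent never again stands on a recolored vertex within the phase. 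This is exactly where the paper invests its main effort (its claim that no vertex changes color twice in a phase): re-entering a recolored vertex from above forces its parent to have been re-entered after its own recoloring, and propagating this upward forces every vertex on a path up to and including the start vertex to be recolored, which by definition ends the phase. Alternatively, one can argue by induction over the steps of the phase that at each ascent (after deleting, freshly labeling, or recoloring a vertex $v$) no parent of $v$ can yet be recolored, since until that instant $v$ was a non-deleted child of each parent that was unlabeled or still old-colored, so the recoloring condition could not have fired at any parent; hence every move targets a non-finalized vertex. One of these arguments must be supplied; without it, ``depth-$k$ vertices are visited exactly once,'' ``depth $k+1$ is never reached,'' and the finiteness of the phase are all unsupported.
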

\begin{proof}
We prove this by induction.
Clearly, the start vertex is labeled and colored in phase $0$ 
according to lines~\ref{line:uncoloredvertex} to 
\ref{line:gotoanyneighbor} and the next phase starts.

We assume that the first part of the statement is true for $k-1$: At the end of phase $k-1$ and thus at the beginning of phase $k$,\footnote{Unless phase $k-1$ is the final phase, of course.} the agent has labeled every vertex up to depth $k-1$;
all vertices in depth $j\le k-1$ are either deleted or labeled with label $(j+1) \modi 3$;
all non-deleted vertices up to depth $k-1$ have the same color;
and no vertex in depth $k$ is visited.
Note that we do not need to assume that phase $k-1$ ends since this is implied by the ``at the end of phase $k-1$''.
We are going to show later in the proof that the agent is not caught in an endless loop.

We start with some observations.
\begin{observation}\label{obs:recolorer3}
	\Cref{obs:recolorer2} implies together with the induction assumption that, during phase $k$, on every vertex $v$, the agent can decide whether $v$'s non-deleted neighbors are parents or siblings or children.\footnote{Neighbors with label $(l(v)-1) \modi 3$ are parents, neighbors with label $l(v)$ are siblings and neighbors with label $(l(v)+1) \modi 3$ are children.}
	The only exception are vertices in depth $k$ from the start vertex.
	Here, the agent can decide which non-deleted vertices are parents, but it is not possible to distinguish siblings from children.
\end{observation}

\begin{observation}\label{obs:recolorer6}
	A vertex $v$ only becomes deleted if all its children are deleted (lines~\ref{line:childrendeleted} to \ref{line:deletevertexandmoveup}).
	Deleted children are never visited again. Hence, a deleted vertex is never visited again and we can safely ignore deleted vertices.
\end{observation}

With these observations, we are able to prove a first claim towards proving the first point of the lemma.
\begin{claim}\label{claim:colorofverticesindepthk}
	If a vertex in depth $k$ is visited, it is labeled with $(k+1) \modi 3$ or deleted.
\end{claim}
\begin{proof}[of \cref{claim:colorofverticesindepthk}]
	Once the agent arrives for the first time at some vertex in depth $k$, all siblings and all children have not been visited and are thus unlabeled by Observations~\ref{obs:recolorer1}, \ref{obs:recolorer2}, and \ref{obs:recolorer3}. All parents are labeled with $k \modi 3$ by the induction hypothesis. Therefore, according to line~\ref{line:label1}, the agent assigns label $(k+1) \modi 3$. Note that this label will not be changed later, according to \cref{obs:recolorer2}.
	
	Later in phase $k$, whenever the agent arrives at a vertex in depth $k$, there may be parents with label $k \modi 3$ and siblings with label $(k+1) \modi 3$ and again, according to line~\ref{line:label1}, the agent assigns label $(k+1) \modi 3$.		
\end{proof}

Next, we argue that a vertex cannot be colored several times in a phase, which yields as a byproduct that no vertices in depth $k+1$ or lower are visited.
\begin{claim} \label{claim:colorisonlychangedonceinaphase}
	A vertex $v$ cannot change its color twice in a phase and vertices in depth $k+1$ cannot be reached.
\end{claim}
\begin{proof}[of \cref{claim:colorisonlychangedonceinaphase}]
	Consider any phase $k$.
	Assume towards contradiction that $v$ is the vertex whose color is changed twice first in phase $k$.
	Assume that $v$ is in depth $k-1$ or higher. The only reason to recolor $v$ twice would be if all of $v$'s children were recolored twice. Thus, $v$ could not be the first vertex whose color is changed twice.
	
	Therefore, $v$ must be in depth $k$. The first time $v$ was colored took place because the agent visited $v$ and $v$ had not been visited before.
	To change the color a second time, the agent would have to access $v$ again, either through one of $v$'s children or through one of $v$'s parents or through one of $v$'s siblings. The agent clearly never goes to the sibling of a vertex directly; therefore, $v$ can only be accessed through a parent or through a child. 
	However, $v$'s children can only be accessed through vertices in depth $k$ and whenever a vertex in depth $k$ is visited, the agent colors it with a different color than its parents and then goes up to any parent $p$ in depth $k-1$. In particular, the agent does not go down to any vertex in depth $k+1$. Hence, the only possibility to visit $v$ a second time is through a parent $p$.
	
	From any such $p$, the only way to go down to $v$ again would be if the agent visited $p$ once $p$ is recolored, that is, once $p$ had received the same color as $v$. This is due to the fact that only children of the same color are visited. However, as soon as $p$ is recolored, the agent goes to a higher vertex.
	Therefore, the only way to recolor $v$ would be to change the color of all vertices on a path between $v$ and the start vertex---including the start vertex---and then going down again, which is a contradiction to the assumption that $v$ is colored twice in a phase.
\end{proof}

At the end of phase $k$, that is, when the start vertex becomes recolored, all non-deleted and labeled vertices are thus colored with the same color, which is the second point of the lemma.
The only thing left to prove is that indeed all vertices in depth $k$ are visited.
We prove this by showing first that the agent cannot be caught in an endless loop in phase $k$.

\begin{claim}\label{claim:noendlessloop}
	The agent is not caught in an endless loop in phase $k$. 
\end{claim}
\begin{proof}[of \cref{claim:noendlessloop}]
	Suppose there were an endless loop.
	Then there exists a sequence of vertices $v_1,\ldots, v_r$ such that, for $i \in [r-1]$, the agent goes infinitely often from $v_i$ to $v_{i+1}$ and from $v_r$ back to $v_1$. 
	Since a vertex is not colored twice in phase $k$, all vertices in this sequence have a fixed color. 
	Moreover, there cannot be any unlabeled neighbors since the agent would go to these according to line~\ref{line:condunlabeledchild} and the sequence would change.
	Hence, the only lines that can be applied are lines~\ref{line:childofsamecolor} to \ref{line:gotochildofsamecolor}.
	However, this implies that, for $i \in [r-1]$, $v_i$ is a parent of $v_{i+1}$ and $v_r$ is a parent of $v_1$.
	This contradicts \cref{obs:recolorer3}.
\end{proof}
Now, since there is no endless loop in phase $k$, phase $k$ terminates, which means that the start vertex is recolored.
This can only be the case if all children and children of children etc. are recolored or colored for the first time.
This in turn means in particular that all vertices in depth $k$ are colored. Thus, the proof of \cref{lem:recolorer} is finished.
\end{proof}

\Cref{lem:recolorer} immediately implies the correctness of 
\textscalt{Recolorer}.
We thus obtain the following theorem.
\begin{theorem}
There is an algorithm that never uses more than $7$ colors and explores every graph with recoloring.
\end{theorem}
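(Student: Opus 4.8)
The plan is to obtain the theorem almost immediately from \Cref{lem:recolorer}, after a short count of the admissible vertex states and one supplementary argument that the deletion mechanism forces the agent to halt correctly on the start vertex.

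First I would settle the colour bound. At every moment each vertex carries exactly one of the following states: the special label $x$ (``deleted''), or a label in $\{1,2,3\}$ paired with one of the two colours green and red. That is $1 + 3\cdot 2 = 7$ states, and---as everywhere in the paper---the ``non-colour'' $0$ of an unvisited vertex is not counted; since no line of Algorithm~\ref{alg:recolor} ever produces a state outside this list, \textscalt{Recolorer} never uses more than $7$ colours.

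Next I would show that every vertex is visited in finitely many steps. Let $e$ be the eccentricity of the start vertex, i.e.\ the largest distance from it to any vertex of $G$. By \Cref{lem:recolorer}, at the end of phase $e$ every vertex of depth $j \le e$---hence every vertex of $G$---is deleted or labelled, and in particular has been visited; moreover each of the phases $1,\dots,e$ consists of finitely many steps because the same lemma rules out endless loops within a phase. So after finitely many steps all of $G$ has been explored.

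The remaining and least routine point is termination on the start vertex. Since the start vertex is the only vertex with no parent, lines~\ref{line:childrendeleted}--\ref{line:deletevertexandmoveup} show that the agent can terminate \emph{only} while standing on the start vertex, so it suffices to show it terminates at all. Here I would track the set $D$ of deleted vertices once every vertex is labelled. A vertex is deleted exactly when all of its children are deleted---a vertex with no children vacuously included---after which the agent moves up, and a deleted vertex is never revisited. Using \Cref{lem:recolorer} (the loop-freeness of each phase and the fact that every phase ends with the agent back at the start vertex recolouring it), one argues that in every phase the agent still sweeps through all currently non-deleted vertices; consequently, as long as $V \setminus \{s\}$ is not fully deleted, the ``lowest'' non-deleted vertices---those all of whose children are already deleted---get deleted during the following phase, so $D$ grows strictly. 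As $V$ is finite, after finitely many further phases $V \setminus \{s\}$ is entirely deleted; the next time the agent reaches $s$ it finds all children deleted and terminates via line~\ref{line:deletevertexandmoveup}. Together with the first two parts this proves the theorem. The main obstacle is precisely this last step: \Cref{lem:recolorer} is phrased in terms of visiting and recolouring rather than of deletions propagating to the root, so one must add the monotonicity argument above and, to make it airtight, verify that after the labelling is complete the agent really does traverse every non-deleted vertex in each phase---which follows from the same loop-freeness and ``returns to the start vertex'' facts used to prove \Cref{lem:recolorer}.
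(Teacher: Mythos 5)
Your proposal is correct and takes essentially the same route as the paper: the paper's entire proof is that \Cref{lem:recolorer} ``immediately implies'' the correctness of \textscalt{Recolorer}, and you likewise derive everything from that lemma, with the same count of $1+3\cdot 2=7$ states. The extra material you supply---the eccentricity bound on the number of phases and the monotone growth of the deleted set to get global termination at the start vertex---is a (sound) filling-in of details the paper leaves implicit, not a different approach.
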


\section{Conclusion} \label{sec:conclusion_zeromemory}
We investigated graph exploration by a very limited agent and showed tight bounds for the exploration of trees and of general graphs.
Essential for our upper bounds was the idea that the algorithm needs a way to uniquely determine the direct predecessor of a vertex. 
On a high level, our algorithms try to enumerate the number of situations that can occur locally.
Surprisingly, we discovered that this number is not determined by the degree of a vertex, but by the circumference of the graph.
We showed almost tight lower bounds for graphs of a certain circumference as well.

While exploring the driving parameter for the amount of colors necessary and sufficient, we proved almost tight bounds for the exploration of general bipartite graphs and for non-uniform algorithms, where we had to assume that the size of the graph is known.

For our lower bounds, \cref{lem:cantgoback} and \cref{lem:cantleaveuncolored} form the basis of our argumentation.
These lemmas make it possible to construct specific graphs and know exactly---up to renaming of the colors---how a general algorithm behaves on this graph.
Even for the cases where we could not apply the lemmas directly, as for example with the bipartite graphs or the non-uniform algorithms, we used a similar argumentation tailored to the specific case.
As soon as the graph class is limited in such a way that these lemmas do not apply at all, it is quite difficult to prove lower bounds, which we demonstrated in \cref{sec:P2} with an algorithm successfully exploring all square paths by merely using four colors.

Finally, we studied the case where recoloring vertices is allowed.
We showed that in this case, seven colors are already sufficient to explore all graphs, which stands in stark contrast to the amount of colors needed in general.

There are at least three ways to continue this research. First, it would be interesting to generalize \cref{lem:cantgoback} and \cref{lem:cantleaveuncolored} to certain subgraphs or minors.
This would facilitate the analysis of graph exploration in our model for large graph classes.
Second, a natural extension of our model would be to allow a constant number of memory bits. This would in particular enable a direct comparison to the existing research on oblivious graph exploration with known inports.
Third, it might be interesting to analyze how additional information can help the agent. On the one hand, one could study the classical advice complexity model for similar graph exploration models; on the other hand, one could restrict the advice to vertices. This would allow comparing global advice and local advice, which is much more restricted, but can be accessed exactly when needed.

\bibliography{bib}

\end{document}